\tikzstyle{b_vertex}=[circle,fill=black!100,text=white,inner sep=0.8mm,draw]
\tikzstyle{w_vertex}=[circle,fill=white!100,text=black,inner sep=0.8mm,draw]
\tikzstyle{e_vertex}=[circle,fill=white!100,text=black,inner sep=1.8mm,draw]
\tikzstyle{point}=[circle,fill=black,inner sep=0.1mm]
\tikzstyle{path_edge}=[thick]
\newtheorem{lemma}{Lemma}[section]
\newtheorem{theorem}{Theorem}[section]
\newtheorem{definition}{Definition}[section]
\newtheorem{rmks}{Remarks}[section]
\newtheorem{conjecture}{Conjecture}
\newenvironment{proof}{{\bf Proof}:\ }%
   {~\ \hfill $\Box$\vspace{0,1cm}}
\begin{document}

\newcommand\thetitle{Dominating induced matchings \\ in graphs
containing no long claw}
\title{\textbf{\thetitle}}
\author{ 
\Large \bf{Alain Hertz}\footnote{Corresponding author: email  alain.hertz@gerad.ca;  tel. +1-514 340 6053; fax +1-514 340 5665.}\\
\normalsize {Polytechnique Montr\'eal and GERAD, Canada}\vspace{0.1cm}\\
 \and
\Large \bf{Vadim Lozin} \\
\normalsize {University of Warwick, United Kingdom}\vspace{0.1cm}\\
 \and
\Large \bf{Bernard Ries}\\
\normalsize {PSL, Universit\'e Paris Dauphine and CNRS, France}\vspace{0.1cm}\\
\and
\Large \bf{Victor Zamaraev}\\
\normalsize {University of Warwick, United Kingdom}
\and
\Large \bf{Dominique de Werra} \\
\normalsize {\'Ecole Polytechnique F\'ed\'erale de Lausanne, Switzerland}\vspace{0.1cm}\\
}

%
%
%
%
%
%
%
%

\thispagestyle{empty}
\maketitle
\begin{abstract}
\noindent 
An induced matching $M$ in a graph $G$ is dominating if every edge not in $M$ shares exactly one vertex 
with an edge in $M$. The {\sc dominating induced matching} problem 
(also known as {\sc efficient edge domination}) asks whether a graph $G$ contains a dominating 
induced matching. This problem is generally NP-complete, but polynomial-time solvable
for graphs with some special properties. In particular, it is solvable in polynomial time 
for claw-free graphs. In the present paper, we study this problem for graphs containing no long claw,
i.e. no induced subgraph obtained from the claw by subdividing each of its edges exactly once.
To solve the problem in this class, we reduce it to the following question: given a graph 
$G$ and a subset of its vertices, does $G$ contain a matching saturating all vertices 
of the subset? We show that this question can be answered in polynomial time, thus providing 
a polynomial-time algorithm to solve the {\sc dominating induced matching} problem for graphs 
containing no long claw.     
\end{abstract}

\textbf{Keywords:} dominating induced matching; graphs containing no long claw; polynomial-time algorithm


\section{Introduction}\label{sec:intro}


In this paper, we study the problem that appeared in the literature under various names, such as 
{\sc dominating induced matching} \cite{P7,CL09,CKL11,Kor09,skew,fast} or 
{\sc efficient edge domination} \cite{hole-free,CCDS08,GSSH93,LKT02,LT98},
and has several equivalent formulations. One of them, which is used in this paper, asks whether 
the vertices of a graph can be partitioned into two subsets $B$ and $W$ so that $B$ induces a graph of vertex degree 1 
(also known as an induced matching) and $W$ induces a graph of vertex degree 0 (i.e. an independent set). 
Throughout the paper, we call the vertices of $B$ black and the vertices of $W$ white.
This problem finds applications in various fields, such as parallel resource allocation of parallel
processing systems \cite{LS88}, encoding theory and network routing \cite{GSSH93} and has 
relations to some other algorithmic graph problems, such as {\sc 3-colorability} and
{\sc maximum induced matching}. In particular, it is not difficult to see that 
every graph that can be partitioned into an induced matching and a stable set is 3-colorable. Also, in \cite{CCDS08} 
it was shown that if a graph admits such a partition, then the black vertices form 
an induced matching of maximum size. Notice that a graph is called polar if its vertex set can be partitioned into a subset
$\mathcal{K}$ of disjoint cliques and a subset $\mathcal{I}$ of independent sets with complete links between them \cite{TC85}. It follows that a graph $G$ has a dominating induced matching if and only if $G$ is a polar graph in which all cliques of $\mathcal{K}$ have 
size 2 and $\mathcal{I}$ consists of exactly one independent set.

From an algorithmic point of view, the {\sc dominating induced matching} problem is difficult, 
i.e. it is NP-complete \cite{GSSH93}. Moreover, it remains difficult under substantial restrictions,
for instance, for planar bipartite graphs \cite{LKT02} or $d$-regular graphs for arbitrary $d\ge 3$ \cite{CCDS08}.
On the other hand, for some special graph classes, such as hole-free graphs \cite{hole-free}, 
claw-free graphs \cite{CKL11} or $P_7$-free graphs \cite{P7}, the problem can be solved in polynomial time.

For classes defined by finitely many forbidden induced subgraphs, there is an important 
{\it necessary} condition for polynomial-time solvability of the problem given in \cite{CKL11}.
To state this condition, let us denote by $\cal S$ the class of graphs every connected component 
of which corresponds to a graph $S_{i,j,k}$ represented in Figure~\ref{fig:ST}. 
\begin{figure}[!ht]
\begin{center}
\scalebox{0.20}{\includegraphics{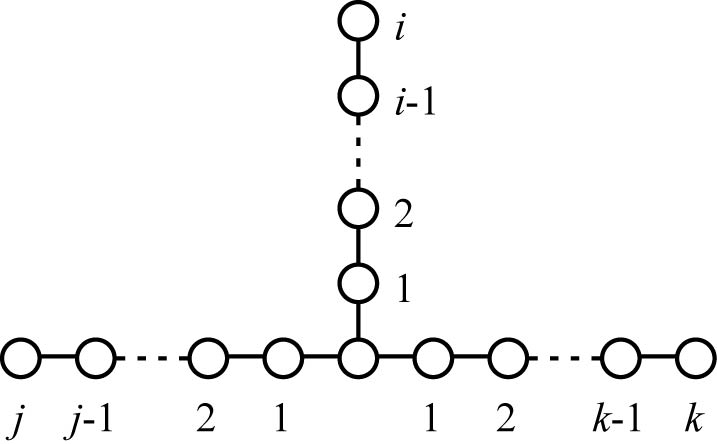}}\caption{The graph $S_{i,j,k}$.}
\label{fig:ST}
\end{center}
\end{figure}

\vspace{-0.4cm}\begin{theorem}{\rm \cite{CKL11}}\label{thm:1}
Let $M$ be a finite set of graphs. Unless $P=NP$, 
the  {\sc dominating induced matching} problem is polynomial-time solvable in the class of $M$-free graphs 
only if $M$ contains a graph from $\cal S$.
\end{theorem}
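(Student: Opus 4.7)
The plan is to prove the contrapositive: assuming $M$ is a finite set of graphs with $M\cap{\cal S}=\emptyset$, I would sketch an NP-hardness reduction showing that {\sc dominating induced matching} remains NP-hard on the class of $M$-free graphs. The approach is the classical Alekseev-style edge-subdivision construction used for similar ``${\cal S}$-dichotomy'' results.

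Starting point: by the NP-completeness of DIM on cubic ($3$-regular) graphs (the case $d=3$ of the result cited as \cite{CCDS08}), I would transform an arbitrary cubic instance $G$ into an $M$-free instance as follows. For a parameter $\ell$ to be chosen, let $G^{(\ell)}$ be obtained from $G$ by replacing each edge $uv$ with an internally disjoint path of length $\ell+1$ through $\ell$ fresh internal vertices.

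First I would establish that for an appropriate residue of $\ell$ modulo $3$, $G$ admits a dominating induced matching if and only if $G^{(\ell)}$ does. The reason is the rigid periodic structure that a DIM forces along a long induced path: on a $P_n$, consecutive matched edges of any DIM lie at graph-distance exactly three, so the restriction of any DIM of $G^{(\ell)}$ to a single subdivided edge is determined, up to a finite list of valid boundary patterns, by which of its two endpoints is matched in the ambient DIM. This canonical restriction lets one translate between DIMs of $G$ and DIMs of $G^{(\ell)}$. Carrying this out is the main technical obstacle: the correct value of $\ell\pmod 3$ must be chosen, and the boundary patterns on the three subdivided paths incident to each original vertex must be consistent with a single state (matched or unmatched) of that vertex in the target DIM of $G$.

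Next I would fix $\ell$ satisfying this congruence and large enough that $\ell>\max\{|V(H)|:H\in M\}$. Then $G^{(\ell)}$ is $M$-free, by the following observation. Any two distinct original vertices of $G$ are at distance at least $\ell+1$ in $G^{(\ell)}$, so every \emph{connected} induced subgraph of $G^{(\ell)}$ of order at most $\ell$ contains at most one original vertex. Such a connected subgraph is therefore either a subpath of a subdivided edge or a spider of at most three legs centred at a single original vertex, i.e., a subdivided claw $S_{i,j,k}$ (with paths arising in the degenerate case when one or more legs have length zero). Both types belong to ${\cal S}$, so every induced subgraph of $G^{(\ell)}$ of order at most $\ell$, being a disjoint union of components of this form, itself lies in ${\cal S}$. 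Since $M\cap{\cal S}=\emptyset$ and each $H\in M$ has order at most $\ell$, no $H\in M$ embeds as an induced subgraph into $G^{(\ell)}$.

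Combining the two steps, DIM is NP-hard on the class $\{G^{(\ell)}:G\text{ cubic}\}$, which is a subclass of the $M$-free graphs, and hence NP-hard on $M$-free graphs; it therefore cannot be polynomial-time solvable there unless $P=NP$.
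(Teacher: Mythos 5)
The paper does not prove this theorem; it is quoted from \cite{CKL11} as a known result. Your argument is correct and is essentially the proof given there: NP-hardness of {\sc dominating induced matching} on cubic graphs, invariance of the problem under subdividing each edge a multiple of three times (which is indeed the right residue, verifiable by checking the three admissible boundary states $(W,B^-)$, $(B^-,W)$, $(B^+,B^+)$ of an edge), and the observation that for $\ell$ large every induced subgraph of bounded order in the subdivided graph is a disjoint union of paths and subdivided claws and hence lies in $\cal S$.
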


We believe that this necessary condition is also sufficient and formally state this as a conjecture.

\begin{conjecture}\label{con:1}   
Let $M$ be a finite set of graphs. Unless $P=NP$, 
the  {\sc dominating induced matching} problem is polynomial-time solvable in the class of $M$-free graphs if and only if
$M$ contains a graph from $\cal S$.
\end{conjecture}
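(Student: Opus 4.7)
The plan is to establish the nontrivial \emph{if} direction, since the \emph{only if} direction is given by Theorem~\ref{thm:1}. Suppose $M$ contains some graph $H \in \mathcal{S}$. Since the class of $M$-free graphs is contained in the class of $\{H\}$-free graphs, it suffices to show that the \textsc{dominating induced matching} problem is polynomial-time solvable on $\{H\}$-free graphs for every $H \in \mathcal{S}$.

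I would first reduce to the connected case. Write $H = H_1 \cup H_2 \cup \cdots \cup H_r$ as a disjoint union of its connected components, each of the form $S_{i_t, j_t, k_t}$. The aim is to show that if the problem is polynomial-time solvable on $H_t$-free graphs for each individual $H_t$, then it is also polynomial-time solvable on $H$-free graphs. Such a reduction typically uses a Ramsey-type argument: given an $H$-free input $G$, either $G$ avoids some $H_t$ as an induced subgraph, in which case the corresponding hypothesis applies, or it contains induced copies of each $H_t$ but no pairwise disjoint family of them, which pins down a bounded-size ``obstruction'' whose vertices can be handled by exhaustive guessing before recursing.

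Next, I would prove the connected case $H = S_{i,j,k}$ by induction on $i+j+k$. The base case $i=j=k=1$ (the claw) is covered by \cite{CKL11}, and the present paper supplies the crucial case $i=j=k=2$ (the long claw). For the inductive step, I would try to extend the reduction strategy used here: start with a vertex $v$ of maximum degree or with an endpoint of a longest induced path, analyse its distance layers up to bounded depth, and observe that $S_{i,j,k}$-freeness severely restricts the way these layers can interact, since otherwise a subdivided claw centered near $v$ would appear. After exhaustively guessing the status (black or white) of the vertices within a small neighborhood of $v$ and propagating the resulting forcings, the remaining instance should either split into smaller $S_{i',j',k'}$-free subproblems with $i'+j'+k' < i+j+k$, or reduce to the ``matching saturating a prescribed set'' question shown in this paper to be polynomially tractable.

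The main obstacle will be carrying out this structural analysis uniformly across the whole family $\mathcal{S}$. Even the degenerate subfamily of paths produces the notoriously difficult class of $P_k$-free graphs, for which \textsc{dominating induced matching} is currently known to be tractable only up to $k=7$ \cite{P7}, and general spiders introduce a branching structure that complicates the distance-layer analysis sketched above. Consequently, while the inductive framework is a plausible roadmap, each inductive step is likely to require genuinely new structural insight into $S_{i,j,k}$-free graphs, and Conjecture~\ref{con:1} should be regarded as a long-term research program rather than an immediate corollary of the techniques developed here.
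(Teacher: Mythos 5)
This statement is Conjecture~\ref{con:1}; the paper does not prove it and explicitly describes proving or disproving it as a challenging open task, verifying only a handful of special cases (claw-free, $S_{1,2,3}$-free, $P_7$-free, and now $S_{2,2,2}$-free). Your proposal is therefore not a proof but a research roadmap, and you acknowledge as much in your final paragraph. The concrete gaps are these. First, your reduction from a disconnected forbidden graph $H=H_1\cup\dots\cup H_r$ to its connected components does not go through as sketched: if $G$ contains induced copies of every $H_t$ but no induced copy of $H$, what this gives you is that one cannot find copies of the components that are pairwise vertex-disjoint \emph{and} pairwise non-adjacent; this does not pin down a bounded-size obstruction set that can be exhaustively guessed, since the copies may all be forced to attach to the (unbounded) neighborhood of a fixed copy of $H_1$ rather than to a bounded set of vertices. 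Second, and more fundamentally, the inductive step on $i+j+k$ for connected $S_{i,j,k}$ is pure speculation: there is no argument that guessing a bounded neighborhood of a vertex and propagating forcings reduces an $S_{i,j,k}$-free instance to $S_{i',j',k'}$-free instances with smaller parameter. Indeed, the subfamily of paths already defeats this plan --- $P_k$-freeness for $k\ge 8$ is open, and $P_k$ does not decompose into smaller members of $\cal S$ in any way your induction could exploit. So the ``if'' direction remains entirely open, and nothing in your sketch (or in the paper) closes it; the only correct and complete part of your argument is the observation that the ``only if'' direction is exactly Theorem~\ref{thm:1}.
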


Proving (or disproving) this conjecture is a very challenging task. To prove it, one has to show 
that the problem becomes polynomial-time solvable by forbidding {\it any} graph from $\cal S$.
However, so far, the conjecture has only been verified for a few forbidden graphs that belong to $\cal S$,
and only two of these classes are maximal: $S_{1,2,3}$-free graphs \cite{skew}
and $P_7$-free graphs \cite{P7} (note that $P_7=S_{0,3,3}$). In the present paper, we extend this 
short list of positive results by one more class where the problem can be solved in polynomial time,
namely, the class of $S_{2,2,2}$-free graphs. Since $S_{2,2,2}$ is obtained from the claw ($S_{1,1,1}$)
by subdividing each of its edges exactly once, we call $S_{2,2,2}$ 
a {\it long claw}.

To solve the problem for graphs containing no long claw, we apply a number of transformations and reductions
that eventually reduce the problem to the following question: given a graph 
$G$ and a subset of its vertices, does $G$ contain a matching saturating all vertices 
of the subset? We show that this question can be answered in polynomial time. As a result, we prove that
the {\sc dominating induced matching} problem for graphs containing no long claw can also be solved in polynomial time.

\medskip
The organization of the paper is as follows. In the rest of this section, we introduce basic terminology and notation.
In Sections~\ref{sec:DIM}, \ref{sec:REDUCTION} and \ref{sec:TRANSFORMATION} we describe various tools (reductions and transformations)
simplifying the problem. In Section~\ref{sec:Irr} we apply these tools in order to reduce the problem from an arbitrary 
$S_{2,2,2}$-free graph $G$ to a graph of particular structure, which we call irreducible.  
Finally, in Section~\ref{sec:solution} we show how to solve the problem for irreducible graphs via finding matchings saturating specified vertices.     
In Section~\ref{sec:conclusion}, we conclude the paper with a number of open problems. 

\medskip  
Let $G=(V,E)$ be a graph. If $v\in V$, then $N_G(v)$ is the \textit{neighborhood} of $v$ in $G$, i.e. the set of vertices of $G$ adjacent to $v$,
and $d_G(v)$ is the \textit{degree} of $v$ in $G$, i.e. $d_G(v)=|N_G(v)|$. 

An independent set in $G$ is a subset of pairwise nonadjacent vertices.
For a subset $U\subseteq V$, we denote by $G[U]$ the subgraph of $G$ induced by vertices of $U$.
If a graph $G$ does not contain induced subgraphs isomorphic to a graph $H$, we say that $G$ is \textit{$H$-free} and 
call $H$ a \textit{forbidden induced subgraph} for $G$. 
As usual, $K_n$ is the complete graph on $n$ vertices, and $C_n$ (resp. $P_n$) is the chordless cycle (resp. path) on $n$ vertices. 
A \textit{diamond} and a \textit{butterfly} are two special graphs represented in Figure~\ref{fig:DB}.

\begin{figure}[!ht]
\begin{center}
\scalebox{0.18}{\includegraphics{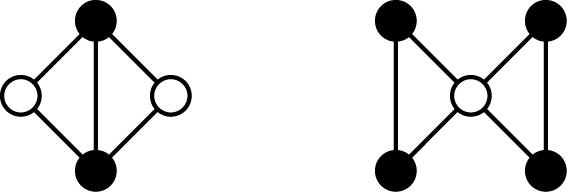}}\caption{A diamond (left) and a butterfly (right).}
\label{fig:DB}
\end{center}
\end{figure}


\section{Precoloring, propagation rules and cleaning}
\label{sec:DIM}


In order to solve our problem for a graph $G$, we will assign either color black or  color white to the vertices of $G$, 
and the assignment of one of the two colors to each vertex of $G$ is called a \textit{complete coloring} of $G$. 
If only some vertices of $G$ have been assigned a color, the coloring is said to be \textit{partial}. 
A partial coloring is \textit{feasible}, if no two adjacent vertices are white and every black vertex has at most one black neighbor. 
A complete coloring is \textit{feasible}, if no two adjacent vertices are white and every black vertex has exactly one black neighbor. 
Thus, a graph $G$ has a dominating induced matching if and only if $G$ admits a feasible complete coloring. 
Given a feasible partial coloring $\gamma$ of $G$, we say that it is {\it completable} if it can be extended to a feasible complete coloring of $G$, 
the latter one being called a {\it $\gamma$-completion}. Also, for a feasible partial coloring $\gamma$, 
we denote by $\gamma(v)$ the color of vertex $v$, by $B_{\gamma}$ the set of black vertices and by $W_{\gamma}$ the set of white vertices.

Let $\gamma$ be a feasible partial coloring of a graph $G$, and let $G'$ be the graph obtained from $G$ by removing 
all white vertices as well as all pairs of adjacent black vertices. The restriction $\delta$ of $\gamma$ to $G'$ is a feasible partial 
coloring of $G'$ where some of its vertices are forced to be black and form an independent set.
Clearly, $\gamma$ is completable if and only if $\delta$ is. The construction of $G'$ and $\delta$ is called a {\it cleaning}.

As shown in the following lemma, there are situations where some vertices of a graph $G$ must have the same 
color or necessarily have different colors in all feasible complete colorings of $G$.

\vspace{0.2cm}\begin{lemma}
\label{lem:samedifferent}
Let $\gamma$ be a feasible complete coloring of a graph $G=(V,E)$.
\begin{itemize}
\vspace{-0.3cm}\item[(i)] If $G$ contains $C_4$ with edge set $\{v_1v_2,v_2v_3,v_3v_4,$ $v_1v_4\}$, 
then $\gamma(v_1)=\gamma(v_3)\neq \gamma(v_2)=\gamma(v_4)$.
\vspace{-0.3cm}\item[(ii)] If $G$ contains a triangle with vertex set $\{x,y,z\}$ such that $x$ has a neighbor $u$ 
which is not adjacent to $y,z$, then $\gamma(x)\neq \gamma(u)$.
\end{itemize}
\end{lemma}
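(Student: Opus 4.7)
The plan is a short case analysis based solely on the two defining constraints of feasibility: no two adjacent vertices are both white, and every black vertex has exactly one black neighbor. I would treat the two parts independently, in the order they are stated.

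For part (i), I would first rule out colorings of the $C_4$ with three or four black vertices: if three consecutive vertices (or all four) were black, then the middle/any vertex would have two black neighbors on the cycle, contradicting the unique-black-neighbor rule. Symmetrically, three or four white vertices are impossible since any three vertices of the $C_4$ contain an adjacent pair. Hence the cycle carries exactly two black and two white vertices, and the no-adjacent-whites rule forces the white pair (and therefore also the black pair) to be opposite vertices of the $C_4$. With the labelling given in the statement, this yields $\gamma(v_1) = \gamma(v_3) \neq \gamma(v_2) = \gamma(v_4)$.

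For part (ii), I would note that the triangle $\{x,y,z\}$ admits at most one white vertex (every two of its vertices are adjacent) and cannot be entirely black (any vertex of the triangle would then have two black neighbors). So exactly one of $x,y,z$ is white, and the two black vertices of the triangle form a mutually paired edge that saturates each other's black-neighbor quota. I then split on whether $x$ is the white vertex or one of the two blacks. If $x$ is white, then the neighbor $u$ must be black by the no-adjacent-whites rule. If $x$ is black, its unique black neighbor lies inside $\{y,z\}$, so the neighbor $u$, which by hypothesis is not in $\{y,z\}$, cannot be black and must therefore be white. Either way $\gamma(x) \neq \gamma(u)$.

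The argument is purely combinatorial and I do not anticipate any real obstacle; the only point requiring care is to use the \emph{exact} value of the black-neighbor count (not merely the upper bound of one) when eliminating the monochromatic and near-monochromatic configurations, and to remember that $u$ is explicitly non-adjacent to both other vertices of the triangle, which is exactly what forces the conclusion in the second subcase of part~(ii).
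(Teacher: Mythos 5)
Your proof is correct and follows essentially the same route as the paper's: a direct case analysis on the feasibility constraints (no adjacent whites, exactly one black neighbor per black vertex), concluding in (i) that the whites and blacks occupy opposite pairs of the $C_4$ and in (ii) that $x$'s unique black neighbor, if $x$ is black, already lies in $\{y,z\}$. The paper's write-up is just more compressed (it starts from one white vertex of the $C_4$ rather than counting colors), but the underlying argument is the same.
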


\vspace{-0.5cm}\begin{proof}
\begin{itemize}
\vspace{-0.3cm}\item[(i)] Clearly, at least one of $v_1,v_2,v_3,v_4$ is white, say $v_1$. 
Then both $v_2$ and $v_4$ are black, which means that $v_3$ must be white since it has two black neighbors.
\vspace{-0.3cm}\item[(ii)] Clearly, at least one of $y,z$ is black. Hence, if $x$ is black, then $u$ is white. 
Since $x$ and $u$ cannot be both white, we  conclude that $\gamma(x)\neq \gamma(u)$.
\end{itemize}
\vspace{-0.8cm}\end{proof}

We now describe several situations where the color of a vertex $v$ can be fixed because if there is a feasible complete coloring of $G$, then there is at least one in which $v$ has such a color. The graphs $F_i$, $i=1,\ldots,10$ we refer to in the following lemma are shown in Figure \ref{figF123456789}.
\begin{figure}
\begin{center}
\scalebox{0.15}{\includegraphics{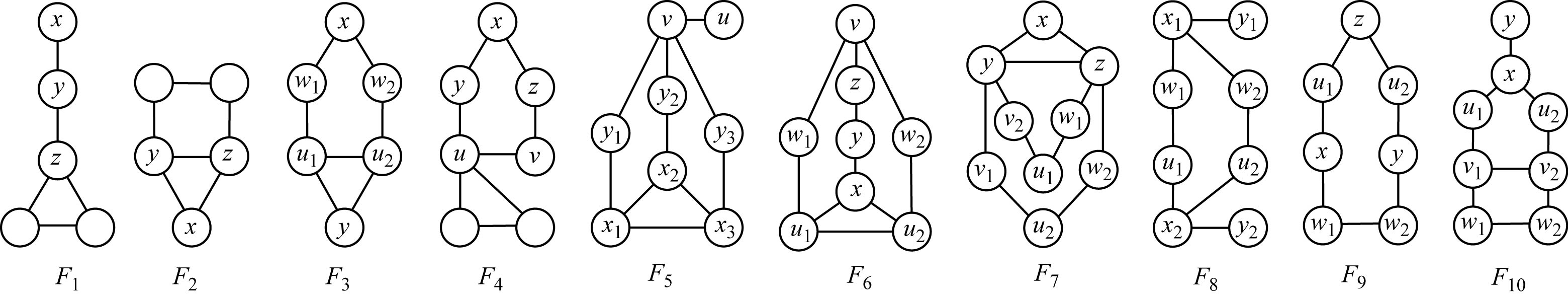}}\caption{The graphs $F_1,\ldots, F_{10}$.}
\label{figF123456789}
\end{center}
\end{figure}

\vspace{0.1cm}\begin{lemma}
\label{lem:simpleforcing}
Let $\gamma$ be a feasible partial coloring of a graph $G=(V,E)$. If $\gamma$ is completable, then the following rules are valid for obtaining 
a $\gamma$-completion. 
\begin{itemize}
\vspace{-0.3cm}\item[(a)] An isolated vertex must be white, and the neighbor of a vertex of degree 1 must be black.
\vspace{-0.3cm}\item[(b)] If two non-adjacent vertices in $G$ are in $B_{\gamma}$, then all their common neighbors must be  white.
\vspace{-0.3cm}\item[(c)] If two triangles in $G$ share a single vertex, then this vertex must be white. 
\vspace{-0.3cm}\item[(d)] If two triangles in $G$ share two vertices, then both of these vertices must be black.
\vspace{-0.3cm}\item[(e)] If a vertex $u$ of $G$ has $k>1$ neighbors of degree 1, then at least $k-1$ of these neighbors are not colored black, and color white can be assigned to them.
\vspace{-0.3cm}\item[(f)] Suppose $G$ contains $P_4$ with edge set $\{v_1v_2,v_2v_3,v_3v_4\}$. If $d_G(v_3)=2$  and $v_1$ is black, then $v_4$ must be black.
\vspace{-0.3cm}\item[(g)] Suppose $G$ contains $F_1$ as an induced subgraph. If $d_G(y)=2$, then $x$ must be black.
\vspace{-0.3cm}\item[(h)] If $G$ contains $F_2$ as an induced subgraph, then $x$ must be black.
\vspace{-0.3cm}\item[(i)] If $G$ contains $F_3$ as an induced subgraph, then $x$ must be black. 
Moreover, 
\begin{itemize}
\vspace{-0.4cm}\item if $d_G(x)=2$, then $y$ must be black;
\vspace{-0.2cm}\item if $y$ is black, then all neighbors $z\neq w_1,w_2$ of $x$ must be white;
\vspace{-0.2cm}\item if $d_G(u_1)=d_G(u_2)=3$ and $d_G(w_1)=d_G(w_2)=2$ and if these vertices are not yet colored by $\gamma$, then color white can be assigned to $w_1$.
\end{itemize}
\vspace{-0.4cm}\item[(j)] Suppose $G$ contains $F_4$ as an induced subgraph. If $d_G(y)=d_G(z)=2$, then $x$ must be black and all neighbors $w\neq y,z$  of $x$ must be white. 
\vspace{-0.3cm}\item[(k)] Suppose $G$ contains $F_5$ as an induced subgraph. If $d_G(y_i)=d_G(y_2)=d_G(y_2)=2$, then $u$ must be white.
\vspace{-0.3cm}\item[($\ell$)] Suppose $G$ contains $C_4$. If $d_G(v)=2$ for some vertex $v$ of this $C_4$, then $v$ must be white.
\vspace{-0.3cm}\item[(m)] Suppose $G$ contains $F_6$ as an induced subgraph. If $v$ is black, then $x$ must be white.
\vspace{-0.3cm}\item[(n)] Suppose $G$ contains $F_7$ as an induced subgraph. If $d_G(y)=d_G(z)=4$, $d_G(v_i)=d_G(w_i)=2$, $i=1,2$ and if these vertices are not yet colored by $\gamma$, then color white can be assigned to $w_1,w_2$.
\vspace{-0.3cm}\item[(p)] If $G$ is $S_{2,2,2}$-free and contains a vertex $v$ such that the subgraph induced by $N(v)$ has three isolated vertices, then $v$ must be black.
\vspace{-0.3cm}\item[(q)] Suppose $G$ is $S_{2,2,2}$-free. If it contains $F_8$ as an induced subgraph, and if $d_G(x_1)=d_G(x_2)=3$, $d_G(w_i)=d_G(v_i)=2$, $i=1,2$, then $y_1$ and $y_2$ must be white.
\vspace{-0.3cm}\item[(r)] Suppose $G$ is butterfly-free and contains a vertex $v$ with four neighbors $w_1,w_2,w_3,w_4$ such that only two of them are adjacent, say $w_1$ and $w_2$. If
$G$ does not contain two vertices $u_1,u_2$ such that  $N(u_1)\cap \{v,w_1,w_2,w_3,w_4,u_2\}=\{w_3\}$ and $N(u_2)\cap \{v,w_1,w_2,w_3,w_4,u_1\}=\{w_4\}$,
then $v$ must be black.
\vspace{-0.3cm}\item[(s)] If $G$ contains $F_9$ as an induced subgraph and $x$ and $y$ are black, then $z$ must be black.
\vspace{-0.3cm}\item[(t)] If $G$ contains $F_{10}$ as an induced subgraph and $x$ is black, then $y$ must be white.
\end{itemize}
\end{lemma}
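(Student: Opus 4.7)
The proof is a collection of short, essentially independent case analyses, one per item, unified by the two feasibility constraints: no two adjacent vertices are both white, and every black vertex has exactly one black neighbour. The rules fall into two kinds. A ``$v$ must be black/white'' claim asserts that \emph{every} $\gamma$-completion colours $v$ the prescribed way; I plan to prove it by assigning $v$ the opposite colour and propagating the feasibility constraints until one of three things happens: a white--white edge is produced, a black vertex acquires two black neighbours, or (in items that assume $G$ is $S_{2,2,2}$-free or butterfly-free) the very forbidden subgraph gets exhibited. A ``colour $X$ can be assigned to $v$'' claim asserts only that \emph{some} $\gamma$-completion colours $v$ by $X$, and I plan to prove it by an exchange argument, starting from an arbitrary $\gamma$-completion and swapping colours along a short alternating substructure so as to transfer colour $X$ onto $v$ without breaking feasibility.

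Items (a)--(f), (s) and (t) follow directly from the two constraints, often in combination with Lemma~\ref{lem:samedifferent}; for instance (d) is obtained by noting that if either vertex shared by the two triangles were white, then the two non-shared triangle vertices would each be forced black and the other shared vertex would collect two black neighbours outside its own partnership. Items (g)--(o) apply the same template to the explicit configurations $F_1,\ldots,F_7$ and $C_4$ displayed in Figure~\ref{figF123456789}: I plan to enumerate the very few feasible colourings of the fixed vertices of each configuration and use the degree hypotheses, which seal off every edge escaping the configuration, to eliminate all colourings except the one asserted in the lemma. The embedded ``can be assigned'' sub-claims inside (e), (i) and (n) are handled by the exchange argument, for example in (e) by trading the colours of two degree-one neighbours of $u$.

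The main obstacle lies in items (p), (q) and (r), where the desired contradiction is not local infeasibility but the production of a forbidden induced subgraph. In (p), assuming $v$ is white forces the three pairwise non-adjacent neighbours $a,b,c$ to be black, each with a unique black partner $a', b', c'$. The uniqueness of black partners then gives that $a', b', c'$ are pairwise distinct and non-adjacent, and that no $a'$ is adjacent to any of $b, c, b', c'$; so, provided none of $a', b', c'$ lies in $N(v)$, the set $\{v,a,a',b,b',c,c'\}$ induces a long claw centred at $v$. The technical difficulty is to rule out the exceptional case in which some partner is itself a neighbour of $v$: here I would replace that partner by another witness found in $N(a)\setminus\{v\}$, using that a black vertex cannot have degree one. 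Item (q) runs the same strategy on the configuration $F_8$ augmented by the forced partners of $y_1, y_2$, producing a long claw whenever either of them is black. For (r) I plan to show, in the spirit of the above, that if $v$ were white then $w_1 w_2$ would be the unique black edge inside $\{w_1,\ldots,w_4\}$, and the partners $u_1$ and $u_2$ of $w_3$ and $w_4$ would both be forbidden from $N(v)$ by butterfly-freeness (else two triangles $v w_1 w_2$ and $v w_i u_i$ would share only $v$, yielding a butterfly); the non-existence hypothesis on $u_1, u_2$ then closes the argument directly.
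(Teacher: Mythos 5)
Your plan matches the paper's proof essentially item for item: direct propagation of the two feasibility constraints (together with Lemma~\ref{lem:samedifferent}) for (a)--(f), (s), (t), enumeration of the few feasible colourings of the configurations $F_1,\ldots,F_7$ and $C_4$ for (g)--(n) with exchange arguments for the ``can be assigned'' sub-claims in (e), (i), (n), and production of an induced $S_{2,2,2}$ (resp.\ butterfly) for (p), (q) (resp.\ (r)). One remark on (p): the ``exceptional case'' you set aside is vacuous --- since the three vertices are isolated in $G[N(v)]$, every neighbour of each of them other than $v$ already lies outside $N(v)$, so their black partners are automatically non-adjacent to $v$ and no replacement witness is needed (which is just as well, since an arbitrary vertex of $N(a)\setminus\{v\}$ would not come with the non-adjacencies to $b,c,b',c'$ that blackness of the partner guarantees).
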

\vspace{-0.1cm}\begin{proof}
\begin{itemize}
\vspace{-0.3cm}\item[(a)] If $d_G(u)=0$, then $u$ must be white since $u$ cannot have a black neighbor. If $d_G(u)=1$, then the neighbor of $u$ cannot be white since otherwise $u$ would need to be black with no black neighbor, a contradiction.
\vspace{-0.3cm}\item[(b)] If a common neighbor $w$ of the two non-adjacent black vertices is black, then $w$ has two black neighbors, a contradiction.
\vspace{-0.3cm}\item[(c)] If the vertex shared by the two triangles is black, then, since the white vertices form a stable set, it must have at least two black neighbors, one in each triangle, a contradiction.
\vspace{-0.3cm}\item[(d)] If one of the two vertices shared by the two triangles is white, then the other one is black and must have at least two black neighbors, one in each triangle, a contradiction.
\vspace{-0.3cm}\item[(e)] It follows from (a) that $u$ is black in all $\gamma$-completions. Hence, at most one of its neighbors is black. We can therefore impose color white on $k-1$ of its neighbors of degree 1.
\vspace{-0.3cm}\item[(f)] Suppose to the contrary that $v_4$ is  white. Then $v_3$ is black and since $d_G(v_3)=2$, it follows that $v_2$ is black. But now $v_2$ has two black neighbors, a contradiction. 
\vspace{-0.3cm}\item[(g)] If $x$ is white, then $y$ is black and it follows from Lemma \ref{lem:samedifferent} (ii) that $z$ is white, which means that $y$ has no black neighbor, a contradiction.
\vspace{-0.3cm}\item[(h)] It follows from Lemma \ref{lem:samedifferent} (i) that $y$ and $z$ must get different colors, which means that $x$ is necessarily black.
\vspace{-0.3cm}\item[(i)] Suppose to the contrary that $x$ is white. Then both $w_1,w_2$ must be black, and  
it follows from Lemma \ref{lem:samedifferent} (ii) that both $u_1,u_2$ are white, a contradiction. Now,
\begin{itemize} 
\vspace{-0.2cm}\item if $d_G(x)=2$, then one of $w_1,w_2$ must be black, which implies that one of $u_1,u_2$ must
be white. Hence $y$ must be black;
\vspace{-0.2cm}\item if $y$ is black, then one of $u_1,u_2$ is white, which means that one of $w_1,w_2$ is the black neighbor of $x$. 
Hence all neighbors $z\neq w_1,w_2$ of $x$ are white;
\vspace{-0.2cm}\item if $d_G(u_1)=d_G(u_2)=3$ and $d_G(w_1)=d_G(w_2)=2$ and if these vertices are not yet colored by $\gamma$, then consider any $\gamma$-completion. If $w_1$ is black, then $u_1,w_2$ are white and $y,u_2$ are black. We can easily transform this $\gamma$-completion into another by coloring $u_1,w_2$ black and $u_2,w_1$ white.
\end{itemize}
\vspace{-0.3cm}\item[(j)] If follows from (g) that $x$ must be black. Suppose that $d_G(y)=d_G(z)=2$. If $y$ is white, then it follows from Lemma \ref{lem:samedifferent} (ii) that $u$ is black and $v$ is white, which means that $z$ is black. Hence, either $y$ or $z$ is black, which means that all other neighbors of $x$ must be white. 
\vspace{-0.3cm}\item[(k)] Clearly, exactly one of $x_1,x_2,x_3$ must be white, say $x_1$. Then $y_1$ and $v$ are black (since $d_G(y_1)=2$), which means that $u$ must be white.
\vspace{-0.3cm}\item[($\ell$)] Suppose to the contrary that $v$ is black. Lemma 2.1 (i) implies that the neighbors of $v$ are white, that is $v$ has no black neighbor, a contradiction.
\vspace{-0.3cm}\item[(m)] Suppose to the contrary that $x$ is black. It then follows from  Lemma 2.1 (ii) that $y$ is white. Hence, $z$ must be the black neighbor of $v$, and $w_1,w_2$ must be white. But then all three vertices of the triangle induced by $u_1, u_2, x$ are black, which is impossible.
\vspace{-0.3cm}\item[(n)] Consider any $\gamma$-completion. It follows from (g) that $u_1$ and $u_2$ are black.  By Lemma 2.1 (ii), $v_1$ has the same color as $v_2$ and $w_1$ has the same color as $w_2$. Note that at least one of these pairs of vertices must be white, otherwise $u_1$ (and $u_2$) would have two black neighbors. Assume that $w_1$ and $w_2$ are black, then $v_1, v_2$ are white, $x,y$ are black and $z$ is white. Since $d_G(y)=d_G(z)=4$, $d_G(v_i)=d_G(w_i)=2$, $i=1,2$, we can easily transform this $\gamma$-completion into another 
by recoloring $v_1,v_2,z$ black and $w_1,w_2,y$ white.
\vspace{-0.31cm}\item[(p)] Consider three isolated vertices $x,y,z$ in $N(v)$, and suppose $v$ is white. It follows that $x,y,z$ are black and hence each has a neighbor not in $N(v)$ which must also be black. 
Since these neighbors must be distinct and non-adjacent, we obtain an induced $S_{2,2,2}$, a contradiction.
\vspace{-0.31cm}\item[(q)] It follows from (p) that $x_1$ and $x_2$ must be black. Hence, exactly one of $w_1,w_2$ and exactly one of $u_1,u_2$ must be black, which means that $y_1$ and $y_2$ must be white.
\vspace{-0.31cm}\item[(r)] Suppose to the contrary that $v$ is white. Then $w_1,w_2,w_3,w_4$ are black. Let $u_1$ the the black neighbor of $w_3$ and $u_2$ be the black neighbor of $w_4$ in a $\gamma$-completion. Since $G$ is butterfly-free and every black vertex has exactly one black neighbor, we have $N(u_1)\cap \{v,w_1,w_2,w_3,w_4,u_2\}=\{w_3\}$ and $N(u_2)\cap \{v,w_1,w_2,w_3,w_4,u_1\}=\{w_4\}$, a contradiction.
\vspace{-0.31cm}\item[(s)] If $x$ and $y$ are black, then one of $w_1,w_2$ must be black. Hence one of $u_1,u_2$ must be white, which implies that $z$ must be black.
\vspace{-0.3cm}\item[(t)] If $y$ is black, then $u_1,u_2$ must be white, $v_1,v_2$ must be black and $w_1,w_2$ are then two white adjacent vertices, a contradiction.
\end{itemize}
\vspace{-0.8cm}\end{proof}

\vspace{0.4cm}In addition to the above forcing rules, we will also use the following ones which are clearly valid : 
\begin{itemize}
\vspace{-0.3cm}\item[(i)] If a vertex $v$ is white, then all its neighbors must be black.
\vspace{-0.3cm}\item[(ii)] If two adjacent vertices are black, then all their neighbors must be white.
\vspace{-0.3cm}\item[(iii)] If a vertex $u$ is black, and all its neighbors, except $v$, are white, then $v$ must be black.
\end{itemize}

If one of the rules (i), (ii), (iii), or one of the rules described in Lemmas \ref{lem:samedifferent} and \ref{lem:simpleforcing} imposes color black (resp. white) on a vertex that is already forced to be white (resp. black), we conclude that the considered graph does not admit a feasible complete coloring. 
Applying these rules repeatedly, as often as possible, on a given graph $H$, we thus either get a proof that $H$ does not 
admit a feasible complete coloring, or we obtain a feasible partial coloring of $H$. 
In the latter case, we can apply a cleaning to obtain a graph $G$ with a feasible partial coloring $\gamma$ so that $W_{\gamma}=\emptyset$ 
and the distance between any two vertices of $B_{\gamma}$ is at least 3. Indeed, $B_{\gamma}$ is a stable set since adjacent black 
vertices are removed by a cleaning, and two vertices $u,v$ in $B_{\gamma}$ cannot have a common neighbor $w$ since 
Lemma \ref{lem:simpleforcing} (b) would impose color white on $w$, and $w$ would therefore be removed by a cleaning. 
This justifies the following definition.

\begin{definition}
Let $\gamma$ be a feasible partial coloring of a graph $G$ such that $W_{\gamma}=\emptyset$ and the distance between any two vertices of $B_{\gamma}$ is at least 3. The pair $(G,\gamma)$ is called {\sc clean} if none of the forcing rules defined above can color additional vertices.
\end{definition}

\begin{rmks}\label{remarks}
\end{rmks}\vspace{-0.3cm}
{\em\begin{itemize}
\vspace{-0.3cm}\item[(a)] Rules (c) and (d) of Lemma \ref{lem:simpleforcing} show that if $(G,\gamma)$ is clean, then $G$ does not contain any induced 
diamond and any induced butterfly.
\vspace{-0.3cm}\item[(b)] It is easy to see that a graph containing a $K_4$ cannot admit a feasible complete coloring.
\end{itemize}}

Hence, it follows from the remarks above that we may suppose that all considered graphs have no induced diamond, no induced butterfly and are $K_4$-free. 
Note that if $(G,\gamma)$ is clean and was obtained from an $S_{2,2,2}$-free graph $H$ by applying the above-mentioned forcing rules followed by a cleaning, 
then $G$ is an induced subgraph of $H$ and is therefore also $S_{2,2,2}$-free. The following lemma gives additional properties of clean pairs.

\begin{lemma}\label{lem:cleanPairProperties}
Let $(G,\gamma)$ be a clean pair. If $\gamma$ is completable, then the following claims hold.
\begin{enumerate}
\vspace{-0.3cm}\item[(a)] Each vertex of $G$ belongs to at most one triangle.
\vspace{-0.3cm}\item[(b)] If $G$ contains $F_3$ as an induced subgraph, then $x\in B_{\gamma}$, the degree of any neighbor of $x$ is at most two, and $x$ does not belong to any triangle. 
\vspace{-0.3cm}\item[(c)] If $G$ contains $F_2$ as an induced subgraph, then $x$ is black and $x$ has no other neighbors.
\vspace{-0.3cm}\item[(d)] Let $T_1$ and $T_2$ be two vertex-disjoint triangles in $G$. 
Then there are at most two edges between $T_1$ and $T_2$. Moreover, if there are exactly two edges between the triangles, then these two edges are not adjacent.
\end{enumerate}
\end{lemma}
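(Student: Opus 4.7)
The plan is to deduce each of the four claims by showing that its negation triggers one of the forcing rules of Lemma~\ref{lem:samedifferent} or Lemma~\ref{lem:simpleforcing}, contradicting the fact that $(G,\gamma)$ is clean. Throughout the argument I would exploit the two consequences of cleanness: $W_\gamma=\emptyset$, so no rule may force the color white; and any two vertices of $B_\gamma$ are at distance at least $3$, so no rule may force a vertex at distance less than $3$ from $B_\gamma$ to be black. By Remark~\ref{remarks}(a), the graph $G$ additionally contains no induced diamond, butterfly, or $K_4$.

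For~(a), suppose $v$ lies on two distinct triangles. These share either exactly one vertex (necessarily $v$) or exactly two vertices. In the first case, Lemma~\ref{lem:simpleforcing}(c) would force $v$ to be white, contradicting $W_\gamma=\emptyset$. In the second case, Lemma~\ref{lem:simpleforcing}(d) would force both shared vertices to be black; being in a common triangle they are adjacent, which violates the distance-$3$ condition on $B_\gamma$.

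For~(b) and~(c), Lemma~\ref{lem:simpleforcing}(i) and~(h) respectively assert that $x$ must be black; since the pair is clean and no further vertex can be colored, $x$ already lies in $B_\gamma$. The remaining structural assertions I would establish by contradiction: for each conceivable extra feature around $x$ --- an additional neighbor outside $F_3$ (resp.\ $F_2$), a neighbor of $x$ of degree at least three, or a triangle through $x$ --- I would identify an induced subgraph of $G$ that triggers a further forcing rule, produces an induced diamond or butterfly forbidden by Remark~\ref{remarks}(a), or yields a vertex lying on two triangles forbidden by part~(a). This case analysis, which must be matched against the detailed adjacency patterns of $F_2$ and $F_3$ together with the refined clauses of Lemma~\ref{lem:simpleforcing}(i), is where I expect the main technical burden to fall.

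For~(d), let $T_1=\{a_1,a_2,a_3\}$ and $T_2=\{b_1,b_2,b_3\}$ be vertex-disjoint triangles and examine the edges between them. If two such edges share an endpoint, say $a_1b_1$ and $a_1b_2$, then $\{a_1,b_1,b_2\}$ is a triangle (since $b_1b_2$ lies in $T_2$), and $b_1$ belongs to two triangles, contradicting part~(a). Hence the edges between $T_1$ and $T_2$ form a matching, which already establishes the non-adjacency statement. To rule out three such edges, observe that three pairwise non-adjacent bipartite edges must form a perfect matching, say $a_ib_i$ for $i=1,2,3$, and each set $\{a_i,a_j,b_i,b_j\}$ with $i\neq j$ then induces a $C_4$ (the potential chords $a_ib_j$ and $a_jb_i$ are excluded by the matching condition). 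Applying Lemma~\ref{lem:samedifferent}(i) to these three induced $C_4$s yields the chain $\gamma(a_1)=\gamma(b_2)=\gamma(a_3)=\gamma(b_1)=\gamma(a_2)=\gamma(b_3)$, so in particular all three vertices of $T_1$ receive the same color. This is impossible in any feasible complete coloring: an all-white triangle contains two adjacent white vertices, while an all-black triangle gives each vertex two black neighbors.
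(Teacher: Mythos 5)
Your parts (a) and (d) are correct. Part (a) is exactly the paper's argument (the paper phrases it via Remark~\ref{remarks}(a), i.e.\ cleanness excludes induced diamonds and butterflies, which is the same appeal to rules (c) and (d) of Lemma~\ref{lem:simpleforcing} that you make). For part (d) your first step (the cross edges form a matching, by part (a)) coincides with the paper's; for excluding three cross edges you take a slightly different route, chaining Lemma~\ref{lem:samedifferent}(i) over the three induced $C_4$'s to force all six vertices to have one color, whereas the paper simply applies Lemma~\ref{lem:samedifferent}(ii) to each cross edge and notes that three color-flipping edges would give $T_2$ two white vertices. Both are valid; the paper's is a line shorter.

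The genuine gap is in parts (b) and (c): you state the correct opening move (rules (i) and (h) of Lemma~\ref{lem:simpleforcing} put $x$ in $B_\gamma$), but for the remaining assertions you only announce that a case analysis "would" be carried out and explicitly defer the technical burden. That case analysis is the actual content of the lemma and cannot be waved at, because it is not a routine enumeration: for (b) one must first show that no neighbor of $w_1$ is adjacent to $x$ (if $w_1$ had a neighbor $z$ adjacent to $x$, part (a) and Lemma~\ref{lem:samedifferent}(ii) would force $w_2$ white), and only then can one apply Lemma~\ref{lem:samedifferent}(ii) with $u=w_1$ to conclude that a triangle through $x$ would force $w_1\in W_\gamma$; the degree bound on a neighbor $v\neq w_1,w_2$ of $x$ then comes from combining Lemma~\ref{lem:samedifferent}(ii) with Lemma~\ref{lem:simpleforcing}(p). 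For (c) the argument is short but still needs to be said: an extra neighbor $s$ of $x$ is, by part (a), nonadjacent to both $y$ and $z$, so Lemma~\ref{lem:samedifferent}(ii) forces $s\in W_\gamma$, contradicting $W_\gamma=\emptyset$. Without identifying which rules close which cases, your proposal for (b) and (c) is a plan rather than a proof, and it cannot be verified that the plan succeeds.
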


\begin{proof}
\begin{enumerate}
\vspace{-0.3cm}\item[(a)] This is a direct consequence of the Remarks \ref{remarks} and the fact that $(G,\gamma)$ is clean.
\vspace{-0.3cm}\item[(b)] Lemma \ref{lem:simpleforcing} (i) implies that $x\in B_{\gamma}$, which means that no neighbor of $x$ belongs to $B_{\gamma}\cup W_{\gamma}$.  
Assume $w_1$ has a neighbor $z$ different from $u_1$ and $x$. If $z$ is adjacent to $x$, then (a) and Lemma \ref{lem:samedifferent} (ii) imply that $w_2\in W_{\gamma}$, a contradiction. Hence, no neighbor of $w_1$ is adjacent to $x$. Now, if $x$ belongs to a triangle then Lemma \ref{lem:samedifferent} (ii) implies that $w_1\in W_{\gamma}$, a contradiction. Finally, let $v\neq w_1,w_2$ be a neighbor of $x$. If $d_G(v)\geq 3$, then  Lemma \ref{lem:samedifferent} (ii) and  Lemma \ref{lem:simpleforcing} (p) imply that $v\in B_{\gamma}\cup W_{\gamma}$, a contradiction.
\vspace{-0.3cm}\item[(c)] Lemma \ref{lem:simpleforcing} (h) implies that $x\in B_{\gamma}$. Assume that $x$ has a neighbor $s$ different from 
$y$ and $z$. Then (a) and Lemma \ref{lem:samedifferent} (ii) imply that $s\in W_{\gamma}$, a contradiction.
\vspace{-0.3cm}\item[(d)]By (a), all edges between $T_1$ and $T_2$ are pairwise nonadjacent. 
Hence there are at most three edges between the triangles. By Lemma \ref{lem:samedifferent} (ii) each of the edges connects 
vertices of different colors. Therefore there are at most two edges between $T_1$ and $T_2$, otherwise one of the triangles would have 
two white vertices, a contradiction.
\end{enumerate}
\vspace{-0.8cm}\end{proof}


\section{Graph reductions}
\label{sec:REDUCTION}


\begin{definition}
Let $(G,\gamma)$ be a clean pair, $G'$ an induced subgraph of $G$ and $\delta$ the restriction of $\gamma$ to $G'$. 
The replacement of $(G,\gamma)$ by $(G',\delta)$ is a {\sc valid reduction} if either both $\gamma$ and $\delta$ are completable, or none of them is.
\end{definition}

In this section, we will present eight valid reductions  $\rho_1,\dots,\rho_8$. Assume $G=(V,E)$ contains one of the graphs $H_i$ ($i=1,\dots,8$) of Figure \ref{figH1a8} as induced subgraph, where each of the dashed edges can be replaced by a true edge or a non-edge. Let $i\in \{1,\ldots,8\}$. Let $U_i$ be the set of grey vertices in $H_i$, and assume that no vertex in $U_i$ has other neighbors in $G$ than those in $H_i$. Finally, let $\delta_i$ be the restriction of $\gamma$ to $G[V\setminus U_i]$. Reduction $\rho_i$ consists in replacing $(G,\gamma)$ by $(G[V\setminus U_i],\delta_i)$. Note that if $G$ is $S_{2,2,2}$-free, then the graph obtained by applying reduction $\rho_i$ is also $S_{2,2,2}$-free since it is an induced subgraph of $G$.

\begin{figure}[!ht]
\begin{center}
\scalebox{0.163}{\includegraphics{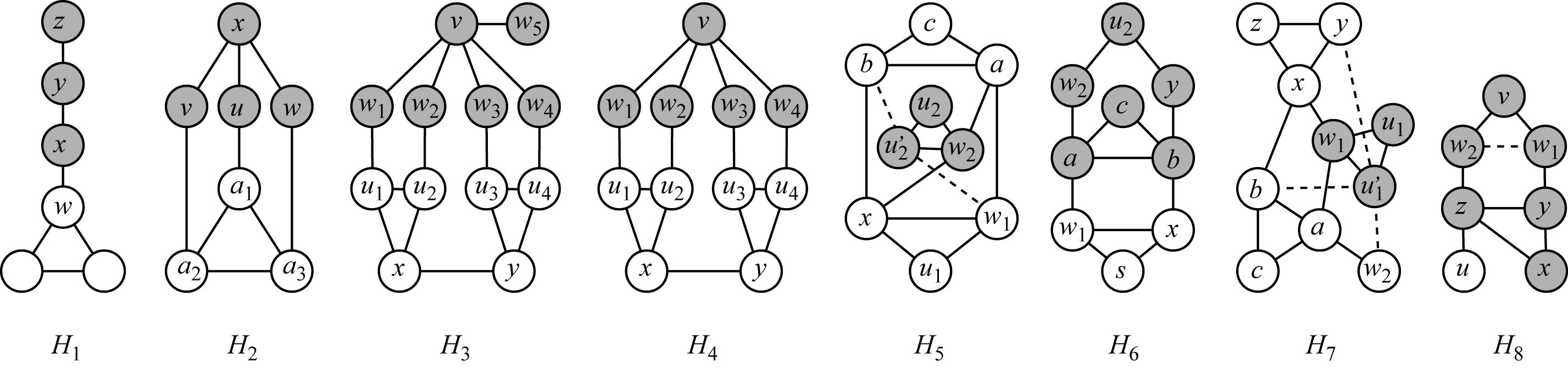}}\caption{Eight reductions.}
\label{figH1a8}
\end{center}
\end{figure}

\vspace{-0.5cm}\begin{lemma}\label{lem:8reductions}
Reductions $\rho_1,\dots,\rho_8$ are valid.
\end{lemma}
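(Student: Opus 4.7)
The plan is to establish, for each $i \in \{1,\ldots,8\}$, the two implications contained in the definition of a valid reduction: starting from a $\gamma$-completion of $G$ one must produce a $\delta_i$-completion of $G[V\setminus U_i]$, and conversely. Both directions reduce to a finite inspection inside the subgraph $H_i$, since the hypothesis that no vertex of $U_i$ has neighbors outside $H_i$ confines all modifications to $H_i$.

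For the forward direction, I would take any $\gamma$-completion $\gamma^*$ of $G$ and claim that its restriction $\delta_i^*$ to $V\setminus U_i$ is a $\delta_i$-completion. The condition ``no two adjacent white vertices'' is automatically preserved under vertex deletion, and $\delta_i^*$ extends $\delta_i$ tautologically. The only genuine obligation is to check that no black vertex $v\in V\setminus U_i$ has its unique black partner inside $U_i$. Since $U_i$-vertices have no external neighbors, such a $v$ would have to lie in $H_i\setminus U_i$; for each of the eight figures one enumerates the feasible colorings of $H_i$ (using the clean-pair properties from Lemma~\ref{lem:cleanPairProperties} and the forcing rules of Lemma~\ref{lem:simpleforcing}) and verifies that this configuration can always be avoided, possibly by a local swap on $H_i$ that leaves colors outside $H_i$ untouched.

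For the backward direction, given a $\delta_i$-completion $\delta_i^*$ of $G[V\setminus U_i]$, I would extend it to a $\gamma$-completion of $G$ by prescribing explicit colors on $U_i$. The prescription depends on the colors that $\delta_i^*$ already assigns to the vertices of $H_i\setminus U_i$, so the argument is organized as a short case analysis over the boundary patterns. For each admissible pattern, one exhibits colors for $U_i$ satisfying two properties: (i) no two adjacent vertices involving $U_i$ are both white, and (ii) every black vertex of $U_i$ has exactly one black neighbor, and every black boundary vertex that originally had its unique black neighbor in $U_i$ still does. Because the $H_i$ are small and the grey vertices are only internally connected, only a handful of cases arise per figure.

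The main obstacle will be the backward direction. For some of the reductions, a boundary vertex of $H_i\setminus U_i$ may be forced, by feasibility inside $H_i$, to receive its unique black partner from $U_i$; a $\delta_i$-completion could in principle place that partner outside $H_i$ instead, after which no colors on $U_i$ complete $\delta_i^*$ consistently. In each such situation one of two things must be shown: either the clean-pair hypothesis together with the forcing rules of Lemma~\ref{lem:simpleforcing} rules out the offending boundary colorings, or a small local recoloring on $H_i$ (swapping the roles of two black pairs, or relocating a matched pair) repairs feasibility while still extending $\delta_i^*$. Exhibiting the correct swap for each of the eight configurations is the delicate step; once this is carried out figure by figure, the two implications combine to give the validity of $\rho_1,\ldots,\rho_8$.
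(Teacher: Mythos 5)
Your overall architecture (restrict a completion for one direction, extend by explicit colors on $U_i$ for the other) matches the paper's, but as written the proposal is a plan rather than a proof: both of your case analyses are announced and then deferred (``one enumerates \dots\ and verifies'', ``once this is carried out figure by figure''). The entire substance of the lemma lives in those eight verifications, and none of them is supplied. In particular, the backward direction genuinely requires, for each $\rho_i$, first using the forcing rules (e.g.\ Lemma~\ref{lem:simpleforcing} (a), (h), (p) and Lemma~\ref{lem:samedifferent}) to pin down which boundary vertices of $H_i$ can lie in $B_\gamma$ and which color patterns a $\delta_i$-completion can exhibit on $S_i$, and then writing down an explicit coloring of $U_i$ for each surviving pattern. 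Nothing in your text identifies these patterns or the colorings, so the lemma is not yet proved.

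Your forward direction is also needlessly complicated, and the complication hides the one idea that makes this direction trivial. You worry that a black vertex outside $U_i$ might have its unique black partner inside $U_i$, and you propose to repair this by enumeration and ``local swaps''. The paper disposes of this uniformly: in each $H_i$, every edge $v_1v_2$ with $v_1\in U_i$ and $v_2\in S_i$ has at least one endpoint lying in a triangle whose other two vertices are nonadjacent to the other endpoint, so Lemma~\ref{lem:samedifferent}~(ii) forces ${\bar \gamma}(v_1)\neq{\bar \gamma}(v_2)$ for every crossing edge. Hence no black vertex of $V\setminus U_i$ can have its black partner in $U_i$, the restriction of any $\gamma$-completion is feasible as it stands, and no swap is ever needed. (Swaps of the kind you describe do occur in the proofs of some transformations $\tau_i$, but not for the reductions.) Without either this observation or the explicit per-figure enumerations you promise, the argument does not go through.
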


\vspace{-0.3cm}\begin{proof}  
Let $i\in \{1,\ldots,8\}$. First observe that ($G[V\setminus U_i],\delta_i$) is clean since ($G,\gamma$) is clean. Now, let $S_i$ be the set of non-grey vertices in $H_i$, which means that all neighbors of the vertices in $U_i$ belong to $S_i$. Let ${\bar \gamma}$ be a $\gamma$-completion and ${\bar \delta_i}$ the restriction of ${\bar \gamma}$ to $G[V\setminus U_i]$.
Consider any two adjacent vertices $v_1,v_2$ in $H_i$ such that $v_1\in U_i$ and $v_2\in S_i$. 
Note that if $v_1$ belongs to a triangle in $G[U_i]$, then $v_2$ is nonadjacent to the other two vertices of that 
triangle, while if $v_2$ belongs to a triangle in $G[S_i]$, then $v_1$ is nonadjacent to the other two vertices of that 
triangle. It then follows from Lemma 2.1 (ii) that ${\bar \gamma}(v_1)\neq {\bar \gamma}(v_2)$, which implies that 
${\bar \delta_i}$ is feasible and thus $\delta_i$ is completable.\\



Let now ${\bar \delta_i}$ be a $\delta_i$-completion. We show how to extend ${\bar \delta_i}$ to a $\gamma$-completion.

\begin{itemize}
\vspace{-0.3cm}\item $i=1$. Lemma \ref{lem:simpleforcing} (a) implies that $y\in B_{\gamma}$. Hence, none of $x,z$ belongs to $B_{\gamma}$. 
If ${\bar \delta_1}(w)=$ black, we obtain a $\gamma$-completion by assigning color black to $y,z$, and color white to $x$. 
If ${\bar \delta_1}(w)=$ white, a $\gamma$-completion is obtained by assigning color black to $x,y$, and color white to $z$.
\vspace{-0.3cm}\item $i=2$. Lemma \ref{lem:simpleforcing} (p) implies that $x\in B_{\gamma}$. Hence, none of $u,v,w$ belongs to $B_{\gamma}$.
Without loss of generality, we may assume that ${\bar \delta_2}(a_1)=$ white, and we can obtain a $\gamma$-completion 
by assigning color black to $x,u$, and color white to $v,w$.
\vspace{-0.3cm}\item $i=3$ or $4$. Lemma \ref{lem:simpleforcing} (p) implies that $v\in B_{\gamma}$. 
Hence, none of the $w_i$'s belongs to $B_{\gamma}$. Lemma \ref{lem:samedifferent} (ii) 
implies that exactly one of $x,y$ is black in ${\bar \delta_i}$, say $x$. Then exactly one of $u_1,u_2$ is white in ${\bar \delta_i}$, say $u_1$. Hence $u_2,u_3,u_4$ 
are black and $y$ is white in ${\bar \delta_i}$. We can then obtain a $\gamma$-completion by assigning color black to $w_1,v$, and color white to $w_2,w_3,w_4$, and to $w_5$ if $i=3$.
\vspace{-0.3cm}\item $i=5$. Lemma \ref{lem:samedifferent} (ii) implies that $w_2 \notin B_{\gamma}$ (else $x$ and $a$ would belong to $W_{\gamma}$), 
and Lemma \ref{lem:samedifferent} (i) implies ${\bar \delta_5}(x) = {\bar \delta_5}(a) \neq {\bar \delta_5}(b) = {\bar \delta_5}(w_1)$. 
If $u_2' \in B_{\gamma}$ then $u_2 \notin B_{\gamma}$ and $b$ and $w_1$ are not neighbors of $v$ (else they would belong to $W_{\gamma}$). 
We can then obtain a $\gamma$-completion by assigning color ${\bar \delta_5}(x)$ to $u_2$, color ${\bar \delta_5}(w_1)$ to $w_2$ and color black to $u_2'$. 
If $u _2'\notin B_{\gamma}$, we obtain a $\gamma$-completion by assigning color ${\bar \delta_5}(x)$ to $u_2'$, color ${\bar \delta_5}(w_1)$ 
to $w_2$ and color black to $u_2$.
\vspace{-0.3cm}\item $i=6$. Lemma \ref{lem:samedifferent} (ii) implies that $\{a,b,y,w_2\}\cap B_{\gamma}=\emptyset$, (otherwise two of them would belong to $W_{\gamma}$).
Also, Lemma \ref{lem:simpleforcing} (h) implies that $s \in B_{\gamma}$. Hence ${\bar \delta_6}(x)\neq {\bar \delta_6}(w_1)$, 
and a $\gamma$-completion is obtained by assigning color ${\bar \delta_6}(x)$ to $a,y$, color ${\bar \delta_6}(w_1)$ to $b, w_2$ and 
color black to $c,u_2$. 
\vspace{-0.3cm}\item $i=7$. Lemma \ref{lem:simpleforcing} (h) and Lemma \ref{lem:samedifferent} (i) imply that $c \in B_{\gamma}$ and ${\bar \delta_7}(b) = {\bar \delta_7}(w_2) \neq {\bar \delta_7}(a)={\bar \delta_7}(x)$. 
If $b,y$ and $w_2$ are not adjacent to $u_1'$, then either $u_1\in B_{\gamma}$ and a $\gamma$-completion is obtained
by assigning color black to $u_1$, color ${\bar \delta_7}(b)$ to $w_1$ and color ${\bar \delta_7}(a)$ to $u_1'$, 
or $u_1\notin B_{\gamma}$ and a $\gamma$-completion is obtained by assigning color black to $u_1'$, color ${\bar \delta_7}(b)$ to $w_1$ and color ${\bar \delta_7}(a)$ to $u_1$.
So assume at least one of $b,y,w_2$ is adjacent to $u_1'$. Then Lemma \ref{lem:simpleforcing} (h) implies that $u_1\in B_{\gamma}$.
Note that if $u_1'$ is adjacent to $y$ then Lemma \ref{lem:simpleforcing} (h) and Lemma \ref{lem:samedifferent} (ii) imply that$z \in B_{\gamma}$ and 
${\bar \delta_7}(y)={\bar \delta_7}(b)\neq {\bar \delta_7}(a)$. Hence, a $\gamma$-completion is obtained by assigning color black to $u_1$, color ${\bar \delta_7}(b)$ to $w_1$ 
	and color ${\bar \delta_7}(a)$ to $u_1'$. 
\vspace{-0.3cm}\item $i=8$. Lemma \ref{lem:samedifferent} (ii) implies that $\{w_1,w_2,y,z\}\cap B_{\gamma}=\emptyset$ (else $W_{\gamma}\neq \emptyset$). 
If ${\bar \delta_8}(u)=$black, then a $\gamma$-completion is obtained by assigning color white to 
	$z,w_1$ and color black to $x,y,w_2,v$. If ${\bar \delta_8}(u)=$white, then a $\gamma$-completion is obtained 
	by assigning color white to $y,w_2$ and 
	color black to $x,z,w_1,v$.  
\end{itemize}
\vspace{-0.5cm}\end{proof}


\section{Graph transformations}
\label{sec:TRANSFORMATION}


Let $G=(V,E)$ be a graph, and let $\gamma$ be a feasible partial coloring of $G$. Let $G'$ be a graph obtained from $G$ by removing a subset $X$ of its vertices, adding a subset $Y$ of new vertices, adding or/and removing some edges in $G[V\setminus X]$, and finally 
adding some edges linking pairs of vertices in $Y$ as well as some edges linking some vertices in $Y$ with some vertices in $V\setminus X$. Such an operation is called a \textit{graph transformation}. The restriction $\delta$ of $\gamma$ to $G'$ is defined as the partial coloring of $G'$ obtained by setting $\delta(v)=\gamma(v)$ for all vertices in $V\setminus X$, and 
by leaving all vertices in $Y$ uncolored.

\begin{figure}[!ht]
\begin{center}
\scalebox{0.165}{\includegraphics{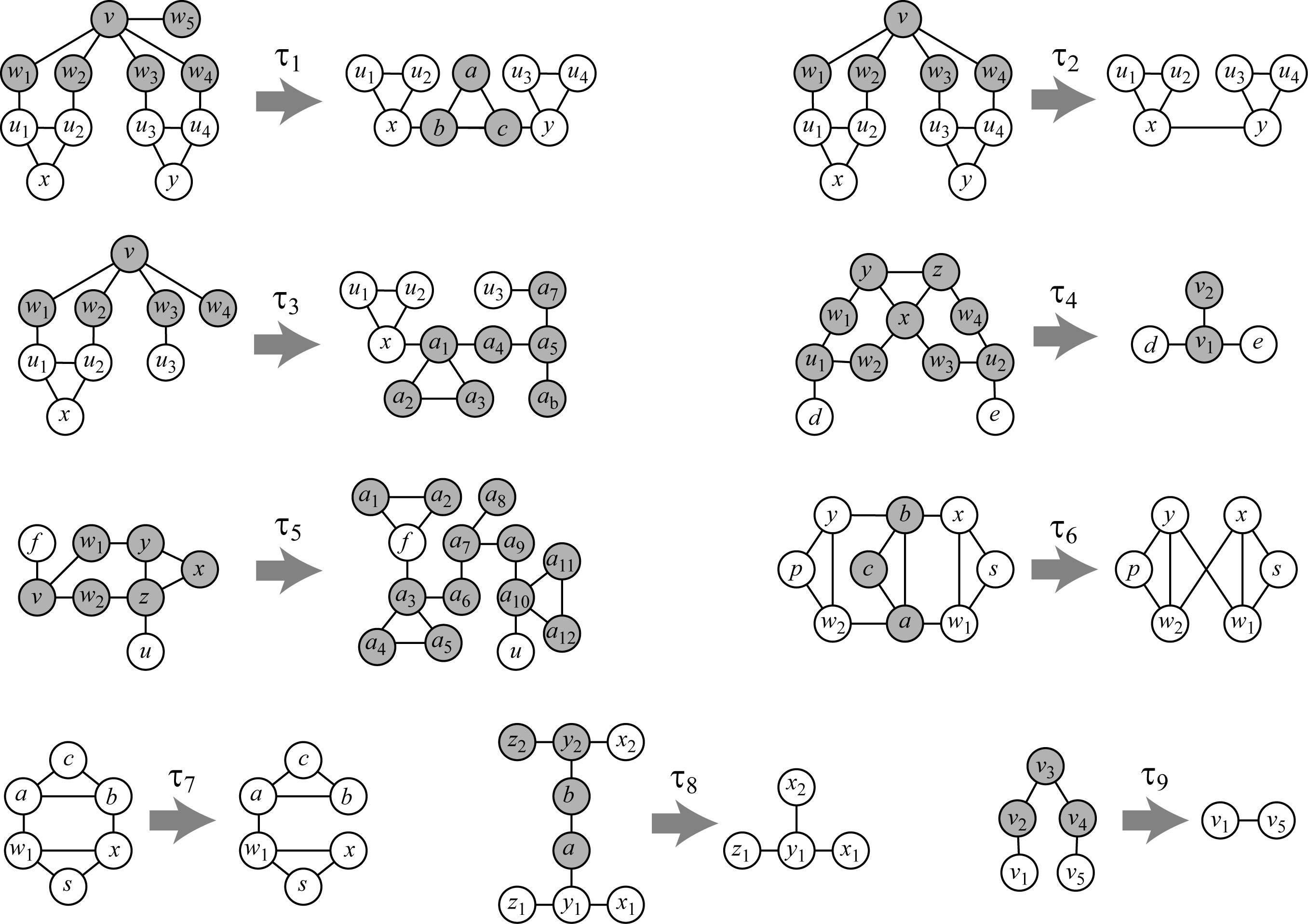}}\caption{Nine graph transformations.}
\label{fig:Transformations}
\end{center}
\end{figure}

\begin{definition}
Let $(G,\gamma)$ be a clean pair where $G$ is $S_{2,2,2}$-free. Let $G'$ be a graph obtained from $G$ by applying some graph transformation, and let $\delta$ be the restriction of $\gamma$ to $G'$. The replacement of $(G,\gamma)$ by $(G',\delta)$ is a {\sc valid transformation} if $G'$ is $S_{2,2,2}$-free and either both $\gamma$ and $\delta$ are completable, or none of them is.
\end{definition}

Nine graph transformations $\tau_{1},\dots,\tau_{9}$ are represented in Figure \ref{fig:Transformations}. For every transformation, we show on the left an induced subgraph of $G$ while modifications made on $G$ to obtain $G'$ appear on the right. The set $X$ of removed vertices from $G$ and the set $Y$ of added vertices to $G'$ are shown in grey.
No vertex in $X$ (resp. $Y$) has other neighbors in $G$ (resp. $G'$) than those shown in Figure \ref{fig:Transformations}. In the following lemmas, we assume that $(G,\gamma)$ is a clean pair, that $G$ is $S_{2,2,2}$-free, and that $\delta$ is the restriction of $\gamma$ to $G'$. When constructing a $\gamma$-completion ${\bar \gamma}$ from a $\delta$-completion ${\bar \delta}$, or the opposite, we will always assume ${\bar \gamma}(v)={\bar \delta}(v)$ for all $v\in V\setminus X$, unless otherwise specified.

\vspace{0.3cm}\begin{lemma}
\label{lem:T1}
Transformation $\tau_{1}$ is valid.
\end{lemma}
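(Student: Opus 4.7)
The plan is to establish the two conditions in the definition of a valid transformation: first, that $G'$ remains $S_{2,2,2}$-free, and second, that $\gamma$ is completable if and only if $\delta$ is. Throughout, I would work with the explicit picture of $\tau_1$ from Figure~\ref{fig:Transformations}, using the assumption that no vertex of $X$ has neighbors in $G$ outside the displayed subgraph, and no vertex of $Y$ has neighbors in $G'$ outside the displayed subgraph.

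For $S_{2,2,2}$-freeness, I would argue locally. Any induced $S_{2,2,2}$ in $G'$ must use at least one vertex of $Y$ (otherwise it would already be an induced $S_{2,2,2}$ in $G$, contradicting that $G$ is $S_{2,2,2}$-free, since the only part of the old graph touched by $\tau_1$ is the edge set inside $G[V\setminus X]$ and the interface with $Y$). Because the vertices of $Y$ have a very restricted neighborhood (only the prescribed vertices of $V\setminus X$ plus other vertices of $Y$), I would enumerate which of the seven vertices of $S_{2,2,2}$ can play the central role or the endpoints/midpoints of the three paths, and in each case derive either a contradiction with the local structure of $\tau_1$ or extract an induced $S_{2,2,2}$ in the original graph $G$ by mapping through the gadget. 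This is mostly a finite case analysis.

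For the equivalence of completability I would construct explicit color transfers in both directions. Given a $\gamma$-completion $\bar\gamma$ of $G$, the clean-pair hypothesis together with the forcing rules of Lemmas~\ref{lem:samedifferent} and \ref{lem:simpleforcing} pins down the colors of the vertices of $X$ and of their neighbors in the gadget up to a small number of alternatives (typically, which of two vertices carries the black neighbor of a forced black vertex). I would read off the corresponding colors that must be assigned to $V\setminus X$ and define $\bar\delta$ as the restriction, then check that every edge of $G'$ — the old edges in $G[V\setminus X]$ and the edges added through the transformation — satisfies the feasibility condition (no two adjacent whites; each black has exactly one black neighbor). Conversely, given a $\delta$-completion $\bar\delta$, I would specify colors for the vertices of $X$ in terms of the colors that $\bar\delta$ assigns to the interface vertices in $V\setminus X$, using the same forced structure to guarantee feasibility inside $G[X]$ and across the $X$-to-$(V\setminus X)$ boundary.

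The main obstacle I anticipate is the case analysis in the reverse direction (lifting a $\delta$-completion to a $\gamma$-completion): there may be several configurations of the interface colors in $\bar\delta$, and for each one I must exhibit a concrete completion on $X$ that is consistent with the black-degree exactly one requirement, in particular verifying that whenever the gadget forces a black vertex in $X$, its unique black neighbor is available. Checking that the degree constraint at internal vertices of the gadget is not violated — especially if $\tau_1$ merges or splits neighborhoods — is where most of the care is needed, and this is where I would rely on the fact that the coloring inherited from $\bar\delta$ was already feasible in $G'$, so the edges added or removed by $\tau_1$ exactly compensate for the edges to the deleted vertices $X$.
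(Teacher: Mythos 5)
Your overall strategy matches the paper's: prove $S_{2,2,2}$-freeness of $G'$ by a local case analysis, and prove equivalence of completability by explicit color transfers in both directions. However, the proposal defers exactly the steps where the difficulty lies, and two of them are not routine in the way you suggest.

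First, for $S_{2,2,2}$-freeness your plan is to ``extract an induced $S_{2,2,2}$ in the original graph $G$ by mapping through the gadget.'' For $\tau_1$ this is not a pure remapping. In the paper's argument, when the putative $S_{2,2,2}$ $H$ in $G'$ uses $x$ as its degree-3 vertex together with $u_1$ and three further vertices $z_1,z_2,z_3$, the induced $S_{2,2,2}$ exhibited in $G$ is $v,w_1,u_1,z_1,z_4,x,z_2$, where $z_4$ is a \emph{new} vertex not belonging to $H$ at all: one must first argue via Lemma~\ref{lem:simpleforcing}~(a) that $z_1$ has a second neighbor $z_4\neq u_1$ (which requires knowing $u_1\notin B_\gamma$, which in turn requires first establishing $v\in B_\gamma$ by Lemma~\ref{lem:simpleforcing}~(p)), and then rule out adjacencies of $z_4$ to $u_1,u_2,x,z_2$ using Lemma~\ref{lem:cleanPairProperties}~(a)--(c). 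A finite case analysis confined to the gadget and the seven vertices of $H$ would not find this witness; you need to reach outside both.

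Second, in the reverse direction (lifting a $\delta$-completion) the whole construction hinges on one structural fact that you do not identify: at least one of $x,y$ must be white in any $\delta$-completion, which the paper derives from the triangle on the new vertices $a,b,c$ via Lemma~\ref{lem:samedifferent}~(ii). This is what guarantees that at most one of $u_1,\dots,u_4$ is white, so that a single $w_j$ can be chosen as the black partner of $v$ consistently. Without this observation the lift can appear to fail (two white $u_i$'s would demand two black $w_j$'s adjacent to the black vertex $v$), and the generic appeal to ``the edges added or removed by $\tau_1$ exactly compensate'' does not supply it. Your outline is a correct plan of attack, but as written it omits the two insights that actually make the proof close.
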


\vspace{-0.3cm}\begin{proof}
Lemma \ref{lem:simpleforcing} (p) implies that $v\in B_{\gamma}$, which means that $u_i,w_i\not\in B_{\gamma}$ for $i=1,\ldots,4$, $j=1\ldots,5$. 
Suppose by contradiction that $G'$ contains an induced $S_{2,2,2}$, and denote this $S_{2,2,2}$ by $H$. Since $G$ is $S_{2,2,2}$-free, we may assume without loss of generality that 
$x$ is the vertex of degree 3 in $H$, and either $u_1$ or $u_2$, say $u_1$ is a neighbor of $x$ in $H$. 
In other words, $G'$ contains three vertices $z_1,z_2,z_3$ such that $x,b,c,u_1,z_1,z_2,z_3$ induce $H$ in $G'$, with $z_1\in N(u_1)$, 
and $\{z_3,x\}\subseteq N(z_2)$. Note that Lemma \ref{lem:cleanPairProperties} (a) implies that $u_2$ is not adjacent to $z_1,z_2$. 
Since $u_1\notin B_{\gamma}$ (otherwise $w_1\in W_{\gamma}$), Lemma \ref{lem:simpleforcing} (a) implies that $d_G(z_1)>1$. 
So let $z_4\neq u_1$ be another neighbor to $z_1$. Lemma \ref{lem:cleanPairProperties} (a) implies that $z_4$ is not adjacent 
to $u_1$, and Lemma \ref{lem:cleanPairProperties} (a) and (c) imply that $z_4$ is not adjacent 
to $u_2$ and $x$. Finally, Lemma \ref{lem:cleanPairProperties} (b) implies that $z_4$ is not adjacent to $z_2$, which means that
$v,w_1,u_1,z_1,z_4,x,z_2$ induce an $S_{2,2,2}$ in $G$, a contradiction.

\vspace{0.1cm}Let now ${\bar \gamma}$ be a $\gamma$-completion. If $w_5$ is black, then $w_1,w_2,w_3,w_4,x,y$ are white, $u_1,u_2,u_3,u_4$ are black, and we obtain a $\delta$-completion 
by assigning color black to $b,c$ and color white to $a$. 
If $w_5$ is white, then one of $w_1,w_2,w_3,w_4$ is black, say $w_1$, which means that $u_2,u_3,u_4,x$ are black, $u_1,y$ are white, and we obtain a $\delta$-completion 
by assigning color black to $a,c$ and color white to $b$.

\vspace{0.1cm}Finally, let ${\bar \delta}$ be a $\delta$-completion. 
Note that at least one of $x,y$ is white. Indeed, if $x$ is black, then Lemma \ref{lem:samedifferent} (ii) 
implies that $b$ is white, which means that $a$ and $c$ are black and $y$ is white. 
Hence, at most one of $u_1,u_2,u_3,u_4$ is white. If none of them is white, we obtain a $\gamma$-completion by assigning 
color black to $w_5,v$ and color white to $w_1,w_2,w_3,w_4$. If one of $u_1,u_2,u_3,u_4$ is white, say $u_1$, we obtain a $\gamma$-completion 
by assigning color black to $w_1,v$ and color white to $w_2,w_3,w_4,w_5$.
\vspace{-0.5cm} \end{proof}

\vspace{0.3cm}\begin{lemma}\label{lem:T2}
	Transformation $\tau_{2}$  is valid.
\end{lemma}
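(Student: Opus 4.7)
The proof will follow exactly the three-step template that Lemma~\ref{lem:T1} establishes for $\tau_1$: first verify that the transformed graph $G'$ is still $S_{2,2,2}$-free, then convert any $\gamma$-completion of $G$ into a $\delta$-completion of $G'$, and finally convert any $\delta$-completion of $G'$ back into a $\gamma$-completion of $G$. My plan is to read off the forced colors on the boundary (the non-grey vertices incident to $X$) using Lemmas~\ref{lem:samedifferent} and \ref{lem:simpleforcing}, then enumerate the few color patterns that can occur on this boundary and, for each, exhibit an explicit coloring of the grey vertices on the other side of the transformation.

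For the $S_{2,2,2}$-freeness step I plan to argue by contradiction: any induced long claw $H$ in $G'$ must use at least one vertex of the added set $Y$, since $G-X$ is an induced subgraph of $G$ and hence $S_{2,2,2}$-free. I will then distinguish cases according to the role each vertex of $Y$ plays in $H$ (the degree-three center, a middle vertex of one arm, or a leaf), using Lemma~\ref{lem:cleanPairProperties}(a) to rule out shared triangles and (b)--(c) to bound neighborhoods around the $F_2/F_3$-type configurations that appear in the transformation. As in the proof of Lemma~\ref{lem:T1}, when an arm of $H$ ends at a vertex in $Y$ whose preimage vertex in $X$ had low degree in $G$, I will invoke Lemma~\ref{lem:simpleforcing}(a) to locate an additional neighbor $z_4$ in $G$, then use Lemma~\ref{lem:cleanPairProperties} to check that $z_4$ is nonadjacent to enough vertices of the picture to produce an induced $S_{2,2,2}$ inside the original $G$, contradicting the hypothesis.

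For the two completion directions, I will first use the cleanness of $(G,\gamma)$ and the structural rules (notably (p), (h) and the $C_4$-forcing of Lemma~\ref{lem:samedifferent}(i)) to show that in every feasible complete coloring of $G$ the boundary vertices can only exhibit a small number of colorings; the symmetric statement on the $G'$ side is obtained analogously. For each such pattern I will write down by hand the colors assigned to $X$ (respectively $Y$), and check in a few lines that no two adjacent vertices are white and that each black vertex has exactly one black neighbor. The main obstacle, as in Lemma~\ref{lem:T1}, will be the $S_{2,2,2}$-freeness verification, because it forces one to track neighborhoods of the grey vertices that can be arbitrarily far outside the local picture; the color-propagation bookkeeping is routine once the boundary patterns have been classified. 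I expect the argument to split into a manageable number of cases, each dispatched by combining the clean-pair properties of Lemma~\ref{lem:cleanPairProperties} with one of the forcing rules of Lemma~\ref{lem:simpleforcing}.
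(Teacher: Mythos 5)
Your plan follows the paper's proof of this lemma essentially step for step: the paper likewise disposes of the $S_{2,2,2}$-freeness of $G'$ by repeating the argument of Lemma~\ref{lem:T1} (with $b,c$ replaced by $y$ and a neighbor $z\neq x$ of $y$), then uses Lemma~\ref{lem:simpleforcing}~(p) to force $v\in B_{\gamma}$ and Lemma~\ref{lem:samedifferent}~(ii) to force $x$ and $y$ to receive different colors, which pins down the boundary patterns and lets one copy or explicitly extend the coloring in each direction. The only caveat is that your text is an outline rather than a worked-out argument, and the relevant forcing rules here are (p) and Lemma~\ref{lem:samedifferent}~(ii) rather than (h) or the $C_4$ rule, but the strategy is the correct one and matches the paper.
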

\vspace{-0.3cm}\begin{proof}
The proof that $G'$ is $S_{2,2,2}$-free is the same as the one in Lemma \ref{lem:T1}, where $b,c$ are replaced by $y$ and a neighbor $z\neq x$ of $y$.
Hence, we only show that $\gamma$ is completable if and only if $\delta$ is completable.
Lemma \ref{lem:simpleforcing} (p) implies that $v\in B_{\gamma}$, which means that $u_i,w_i\not\in B_{\gamma}$ for $i=1,\ldots,4$.

\vspace{0.1cm}Let ${\bar \gamma}$ be a $\gamma$-completion. Exactly one of $w_1,w_2,w_3,w_4$ is black, which implies that exactly one of $u_1,u_2,u_3,u_4$ and one of $x,y$ is white.
Hence, a $\delta$-completion is obtained by coloring all vertices of $G'$ as in $G$.
	
\vspace{0.1cm}Let now ${\bar \delta}$ be a $\delta$-completion. Lemma \ref{lem:samedifferent} (ii) implies that $x$ and $y$
		have different colors. Hence, exactly one of $u_1,u_2,u_3,u_4$ is
		white, say $u_1$, and we obtain a $\gamma$-completion 
		by assigning color black to $w_1,v$ and color white to $w_2,w_3,w_4$.
\vspace{-0.5cm}\end{proof}

\vspace{0.3cm}\begin{lemma}\label{lem:T3}
	Transformation $\tau_{3}$  is valid.
\end{lemma}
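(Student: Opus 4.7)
My plan is to follow the template established by Lemmas \ref{lem:T1} and \ref{lem:T2}: establish the $S_{2,2,2}$-freeness of $G'$ by a lift-back argument to $G$, then prove that a $\gamma$-completion yields a $\delta$-completion and vice versa by exploiting the forced colors dictated by Lemmas \ref{lem:samedifferent}, \ref{lem:simpleforcing} and \ref{lem:cleanPairProperties} on the central vertex $v$ of the $\tau_{3}$-gadget. As in the previous two lemmas, rule \ref{lem:simpleforcing}(p) (or, failing that, whichever forcing rule from Lemma \ref{lem:simpleforcing} fits the specific gadget of $\tau_{3}$) will pin $v$ to $B_{\gamma}$, so the surrounding $u_i$'s and $w_i$'s lie outside $B_{\gamma}$ and Lemma \ref{lem:samedifferent}(ii) will force alternating colors along the triangles attached to $v$.

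For $S_{2,2,2}$-freeness, I will argue by contradiction. Assume $G'$ contains an induced $S_{2,2,2}=H$. Since $G$ is $S_{2,2,2}$-free, $H$ must involve at least one vertex added by $\tau_{3}$ or use an edge newly created in $G'$. I will split on which role each added vertex can play in $H$ (vertex of degree 3, interior vertex of a path, or leaf). In each case I will extend the partial $S_{2,2,2}$ back into $G$ by replacing the added vertex with one of its $G$-preimages (typically one of the $u_i$'s or $w_i$'s of the gadget) and by extending any truncated path using a neighbor guaranteed by Lemma \ref{lem:simpleforcing}(a) (since $u_i,w_i\notin B_\gamma\cup W_\gamma$ forces $d_G\geq 2$). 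Lemma \ref{lem:cleanPairProperties}(a), (b), (c) will guarantee that the extra vertices I pick up are not adjacent to anything they should not be adjacent to (no extraneous triangles, no short chords to the central gadget), producing an induced $S_{2,2,2}$ in $G$, a contradiction.

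For completability, I will use the fact that $v\in B_\gamma$ controls the colors around the gadget in both graphs. In the forward direction, I take a $\gamma$-completion $\bar\gamma$, observe that exactly one of the $w_i$'s is black (since $v$'s unique black neighbor sits among them), and read off a feasible coloring of the surviving gadget vertices in $G'$, leaving the newly added vertices of $G'$ uncolored in $\delta$ but colorable from $\bar\gamma$ in the obvious way. In the reverse direction, starting from $\bar\delta$, I use Lemma \ref{lem:samedifferent}(ii) to see that the anchor vertices of the gadget inherited in $G'$ must receive mismatched colors; this determines which $u_i$ of $G$ should be white and then which $w_i$ should be black. I assign color black to the corresponding $w_i$ and to $v$, and color white to the remaining $w_j$'s, giving a $\gamma$-completion.

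The main obstacle will be the $S_{2,2,2}$-freeness step: because $\tau_{3}$ may introduce edges between vertices that were at distance $\geq 2$ in $G$, I must rule out \emph{new} induced $S_{2,2,2}$'s whose degree-$3$ vertex lies on one of these new edges. The critical ingredients will be Lemma \ref{lem:cleanPairProperties}(a) (each vertex lies in at most one triangle) and \ref{lem:cleanPairProperties}(b)--(c) (strong degree restrictions on neighbors of vertices forming $F_3$ or $F_2$), which together confine the possible attachments of $z_1,z_2,z_3$ and their extensions enough to reconstruct a genuine $S_{2,2,2}$ in the original $S_{2,2,2}$-free graph $G$; any slack here would break the proof, so the case analysis on which vertex of $H$ is the degree-$3$ vertex must be exhaustive and tight.
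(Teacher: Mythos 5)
Your plan follows the paper's proof essentially verbatim: $S_{2,2,2}$-freeness of $G'$ is obtained by the same lift-back argument as for $\tau_1$ (with $b,c$ replaced by $a_1,a_2$), the centre $v$ is pinned to $B_\gamma$ so that the $u_i$'s and $w_j$'s avoid $B_\gamma$, the forward direction reads off the colours of the new vertices from which single $w_i$ is black, and the reverse direction uses the structure of $G'$ to show at most one $u_i$ is white and then recolours the corresponding $w_i$ and $v$ black. The only slip is the forcing rule for $v$: in the $\tau_3$-gadget $v$ has a degree-one neighbour, so it is Lemma \ref{lem:simpleforcing}(a) rather than (p) that forces $v\in B_\gamma$ — but you explicitly hedge on this point, and the rest of the argument is unaffected.
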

\vspace{-0.3cm}\begin{proof}
The proof that $G'$ is $S_{2,2,2}$-free is the same as the one in Lemma \ref{lem:T1}, where $b,c$ are replaced by $a_1,a_2$.
Hence, we only show that $\gamma$ is completable if and only if $\delta$ is completable.
Lemma \ref{lem:simpleforcing} (a) implies that $v\in B_{\gamma}$, which means that $u_i,w_j\not\in B_{\gamma}$ for $i=1,2,3$, $j=1,\ldots,4$. 

\vspace{0.1cm}Let ${\bar \gamma}$ be a $\gamma$-completion. Exactly one of $w_1,w_2,w_3,w_4$ is black. If $w_4$ is black, then $w_1,w_2,w_3,x$
		are white, $u_1,u_2,u_3$ are black, and we obtain a $\delta$-completion 
		by assigning color black to $a_1,a_2,a_5,a_6$ and color white to $a_3,a_4,a_7$.
		If $w_3$ is black, then $w_1,w_2,w_4,x,u_3$ are white, $u_1,u_2$ are
		black, and we obtain a $\delta$-completion 
		by assigning color black to $a_1,a_2,a_5,a_7$ and color white to $a_3,a_4,a_6$.
		Finally, if one of $w_1,w_2$ is black, say $w_1$,
		then $u_1,w_2,w_3,w_4$ are white, $u_2,u_3,x$ are black, and we obtain
		a $\delta$-completion by assigning color black to $a_2,a_3,a_4,a_5$ and color white to $a_1,a_6,a_7$.
	
\vspace{0.1cm}Let now ${\bar \delta}$ be a $\delta$-completion. Note that $u_3$ is black whenever $x$ 
		is black. Indeed, if $x$ is black, then it follows from Lemma \ref{lem:samedifferent} (ii) that 
		$a_1$ is white and $a_4$ is black, which means that $a_5$ is black, 
		$a_7$ is white, and $u_3$ is black. Hence, at most one of $u_1,u_2,u_3$ is 
		white. If none of them is white, we obtain a $\gamma$-completion by assigning color black to $w_4,v$ and color white 
		to $w_1,w_2,w_3$. If $u_3$ is white, we obtain a
		$\gamma$-completion by assigning color black to 
		$w_3,v$ and color white to $w_1,w_2,w_4$. We proceed in a similar way if $u_1$ or $u_2$ is white.
\vspace{-0.5cm}\end{proof}

\vspace{0.3cm}\begin{lemma}\label{lem:T4}
Transformation $\tau_{4}$  is valid.
\end{lemma}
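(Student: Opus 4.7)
The plan is to follow the same three-step template used in the proofs of Lemmas \ref{lem:T1}--\ref{lem:T3}: first verify that $G'$ is $S_{2,2,2}$-free, then construct a $\delta$-completion from a $\gamma$-completion, then reverse the construction.

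For the $S_{2,2,2}$-freeness of $G'$, I would argue exactly as in Lemma \ref{lem:T1}. Suppose, for contradiction, that $G'$ contains an induced copy $H \cong S_{2,2,2}$. Since $G$ is $S_{2,2,2}$-free and $G[V\setminus X]=G'[V\setminus X]$, at least one vertex of $H$ must lie in the newly added set $Y$, or $H$ must use a newly added edge. I would identify the possible roles (center of degree three, or leaf of a length-two arm) that a vertex of $Y$ can play in $H$, and in each configuration trace $H$ back to an induced $S_{2,2,2}$ in the original graph $G$. The clean-pair structural properties of Lemma \ref{lem:cleanPairProperties} are the tools that will rule out the unwanted adjacencies: part (a) forbids a vertex from lying in two triangles, while parts (b) and (c) restrict the neighborhoods of the $F_3$-center and $F_2$-center respectively, and this is precisely what is needed to conclude that the extra vertex arising from tracing back is non-adjacent to the ``wrong'' parts of the reconstructed $S_{2,2,2}$.

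For the completability equivalence, Lemma \ref{lem:simpleforcing} (p) applied to the hub-vertex $v$ of the substructure forces $v \in B_{\gamma}$, so in any $\gamma$-completion $\bar{\gamma}$ the $w_i$'s and $u_j$'s of the substructure are uncolored by $\gamma$, and exactly one of the $w_i$'s adjacent to $v$ is black (being the unique black neighbor of $v$). Given $\bar{\gamma}$, I would split into cases according to the identity of this black $w_i$ and, in each case, write down an explicit feasible coloring of the new vertices of $Y$ that agrees with $\bar{\gamma}$ on $V\setminus X$; the check reduces to verifying locally that every new black vertex gets exactly one black neighbor and no two new white vertices are adjacent. Conversely, given a $\delta$-completion $\bar{\delta}$, I would use Lemma \ref{lem:samedifferent} (i)--(ii) applied to the triangles and $C_4$'s visible in the replacement gadget to pin down how many and which of $u_1,\ldots,u_k$ can be white: at most one, by the forced alternation induced by the $C_4$-argument. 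In each of the resulting small number of cases I would paint $X$ explicitly ($v$ black, the matching $w_i$ black, the remaining $w_j$ white, and the $u_j$'s chosen consistently) to obtain a $\gamma$-completion.

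The main obstacle I expect is the $S_{2,2,2}$-freeness step: because $\tau_4$'s left-hand subgraph contains dashed (optional) edges, the number of subcases for how a vertex of $Y$ can serve as a degree-three center or as an arm-endpoint of a new $S_{2,2,2}$ grows, and squeezing the required non-adjacency out of Lemma \ref{lem:cleanPairProperties} may require carefully distinguishing which dashed edges are present. By contrast, the two completability constructions are expected to be routine once $v$ is forced to be black and the structure of the substructure's forced colors is understood.
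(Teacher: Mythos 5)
Your three-step template (show $G'$ is $S_{2,2,2}$-free, then transfer completions in both directions) is indeed the skeleton of the paper's proof, but the proposal stops at the level of a plan and never executes the case analysis that constitutes the actual content of the lemma; moreover, the details you do commit to are keyed to the wrong gadget. The substructure of $\tau_4$ is not the single-hub configuration of $\tau_1$--$\tau_3$: it contains \emph{two} vertices $u_1,u_2$ to which Lemma~\ref{lem:simpleforcing}~(p) applies, and after that rule fires, \emph{none} of $w_1,\dots,w_4,x,y,z,d,e$ lies in $B_\gamma$. The case split that actually transfers colorings is not ``which $w_i$ is the black neighbor of $v$'' but the colors of the two boundary vertices $d,e$ that survive into $G'$ (one first shows at most one of them can be black, since otherwise $x,y,z$ would all be forced black). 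Likewise, in the converse direction the paper does not use a $C_4$-alternation argument on the $u_j$'s; it uses Lemma~\ref{lem:simpleforcing}~(a) to force the new vertex $v_1$ black (because $v_2$ has degree one), deduces again that at most one of $d,e$ is black, and then writes down the two explicit colorings of the removed set $X$. None of these colorings, nor the verification that they are feasible, appears in your proposal.

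The $S_{2,2,2}$-freeness step is also not ``exactly as in Lemma~\ref{lem:T1}.'' In $\tau_1$ the argument hunts for an extra vertex $z_4$ outside the gadget and assembles an $S_{2,2,2}$ in $G$ from it; for $\tau_4$ the paper instead observes that any induced $S_{2,2,2}$ in $G'$ must contain both $d$ and the new vertex $v_1$, and then performs a direct substitution depending on $d_H(d)\in\{1,2,3\}$ (replacing $v_1$ and possibly its neighbor by $u_1,w_1$, or $v_1$ by $u_1$, or $d,v_1$ by $w_3,u_2$) to exhibit an induced $S_{2,2,2}$ in $G$. You correctly identify this step as the main obstacle, but identifying an obstacle is not the same as overcoming it: as written, the proposal contains no argument specific to $\tau_4$ that could be checked, so the lemma remains unproved.
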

\vspace{-0.3cm}\begin{proof}
Suppose $G'$ contains an induced $S_{2,2,2}$, and denote this $S_{2,2,2}$ by $H$. Since $G$ is $S_{2,2,2}$-free, we may assume without loss of generality that
$d$ and $v_1$ are vertices in $H$. We then get an $S_{2,2,2}$ in $G$ as follows: (i) if $d_H(d)=3$, we replace $v_1$ and its neighbor in $H$ by $u_1,w_1$; (ii) if $d_H(d)=2$, we replace
$v_1$ by $u_1$; (iii) if $d_H(d)=1$, we replace $d,v_1$ by $w_3,u_2$, a contradiction. Thus $G'$ is $S_{2,2,2}$-free.
	
	\vspace{0.1cm}Let ${\bar \gamma}$ be a $\gamma$-completion. Lemma \ref{lem:simpleforcing} (p) implies that $u_1,u_2\in B_{\gamma}$, 
	which means that none of $w_1,w_2,w_3,w_4,x,y,z,d,e$ belongs to $B_{\gamma}$.
	At most one of $d,e$ is black, else $x,y,z $ would necessarily be black as well, a contradiction.
	If $d$ and $e$ are white, a 
		$\delta$-completion is obtained by assigning color black to $v_1,v_2$.
	If $d$ and $e$ have different colors, a $\delta$-completion is obtained by 
		assigning color black to $v_1$ and color white to $v_2$. 
	
\vspace{0.1cm}Let now ${\bar \delta}$ be a $\delta$-completion. Lemma \ref{lem:simpleforcing} (a) implies that $v_1$ is black and 
		hence at most one of $d$ and $e$ is black. If $d$ and $e$ are white,
		we obtain a $\gamma$-completion by assigning 
		color black to $u_1,u_2,w_2,w_3,y,z$ and color white to $w_1,x,w_4$.
		If $d$ and $e$ have different colors, say $d$ is black and $e$ is white, 
		we obtain a $\gamma$-completion by assigning 
		color black to $u_1,u_2,x,y,w_4$ and color white to $w_1,w_2,w_3,z$.
\vspace{-0.5cm}\end{proof}

\vspace{0.3cm}\begin{lemma}\label{lem:T5}
		Transformation $\tau_{5}$  is valid.
\end{lemma}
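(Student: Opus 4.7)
The plan is to follow exactly the template of the proofs of Lemmas~\ref{lem:T1}--\ref{lem:T4}: first verify that $G'$ is $S_{2,2,2}$-free, and then prove the two directions of completability by exhibiting explicit colourings of the removed/added vertices.

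For the $S_{2,2,2}$-freeness part I would argue by contradiction. Suppose $G'$ contains an induced copy $H\cong S_{2,2,2}$. Since $G$ itself is $S_{2,2,2}$-free, $H$ must involve at least one vertex of the grey set $Y$. Because the neighbourhood of each $Y$-vertex in $G'$ is very restricted by the definition of $\tau_5$, I would ``lift'' each such vertex back to a suitably chosen vertex or short path in $X\subseteq V(G)$, producing a set of seven vertices of $G$ that still induces $S_{2,2,2}$. The workhorse here is Lemma~\ref{lem:cleanPairProperties}: each vertex of $G$ lies in at most one triangle; the neighbourhood of the central vertex of an induced $F_2$ or $F_3$ is essentially trivial; and between two disjoint triangles there are at most two non-adjacent crossing edges. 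Together with the Remarks following Lemma~\ref{lem:simpleforcing} (no induced diamond, no induced butterfly, no $K_4$), these constraints ensure that the lifted arms of the claw remain induced and pairwise non-adjacent except at the centre, yielding the contradiction.

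For the completability part I would first apply the forcing rules to pin down the colours of $X$ that are already forced by $\gamma$. As in the proofs of $\tau_1,\dots,\tau_4$, Lemma~\ref{lem:simpleforcing}(p), and sometimes (a) or (h), typically forces the central high-degree vertex of the configuration to be black, which in turn rules out black for every vertex in its neighbourhood. This reduces the available colourings of the $X$-side to a small number of concrete cases, enumerated by which neighbour of the centre carries the black-black edge. Given a $\gamma$-completion $\bar\gamma$, I read off the active case and assign to each added vertex in $Y$ the unique colour pattern that makes $\delta$ feasible. Conversely, given a $\delta$-completion $\bar\delta$, I use Lemma~\ref{lem:samedifferent} on the boundary vertices of $X$ (those still present in $G'$) to determine which case-colouring of $X$ is compatible with the colours $\bar\delta$ inherits, and I glue that colouring to $\bar\delta$ to obtain a $\gamma$-completion.

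The main obstacle I anticipate is the bookkeeping in the $S_{2,2,2}$-freeness step, specifically when the hypothetical $H$ uses several vertices of $Y$ simultaneously, or uses a vertex of $Y$ as its degree-$3$ centre. In those cases one must choose the preimages so that no two lifted arms are joined by an unwanted edge in $G$; ruling out such parasitic edges is where the full force of the clean-pair structure (no diamond, no butterfly, Lemma~\ref{lem:cleanPairProperties}(a)--(d)) is really needed. Once the lifting procedure is set up carefully, the completability half reduces to a routine, if tedious, case enumeration driven entirely by the forcing rules of Section~\ref{sec:DIM}.
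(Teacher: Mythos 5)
Your overall strategy is indeed the one the paper follows, but what you have written is a plan for a proof rather than a proof: the content that actually makes $\tau_5$ work is absent, and in one place your anticipated route is harder than the real one. First, the $S_{2,2,2}$-freeness step. You expect heavy bookkeeping over which vertices of $Y$ the hypothetical copy $H$ uses and whether a $Y$-vertex can serve as its degree-$3$ centre, and you leave this unresolved. The paper resolves it in one line: by Lemma~\ref{lem:cleanPairProperties}~(b) the boundary vertex $f$ satisfies $d_G(f)\le 2$, so $H$ must be centred at the other boundary vertex $u$, with one arm $u$--$a_{10}$--$a_9$ lying inside the gadget; replacing that arm by $u$--$z$--$x$ yields an induced $S_{2,2,2}$ in $G$, a contradiction. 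Without pinning down this structure your ``lifting'' procedure is not actually defined, and the diamond/butterfly/$K_4$ machinery you invoke is not what is needed here.

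Second, the completability step. The specific facts required are that $w_2$ and $u$ receive the same colour (Lemma~\ref{lem:samedifferent}~(ii)) and that $v\in B_\gamma$ (Lemma~\ref{lem:simpleforcing}~(p)), whence at most one of $f,u$ is black; this is what produces exactly three cases, indexed by the colours of the pair $(f,u)$, and in each case one must exhibit an explicit feasible colouring of all twelve added vertices $a_1,\dots,a_{12}$ (respectively of $v,w_1,w_2,x,y,z$ in the converse direction). The existence of these colourings is precisely the claim being proved and cannot be deferred to ``routine enumeration.'' Moreover, in the converse direction one must separately show that no $\delta$-completion colours both $f$ and $u$ black --- the paper does this inside the gadget via $a_6$, $a_9$ and $a_7$ --- whereas your ``read off the compatible case'' step silently assumes the same three cases are exhaustive on the $G'$ side. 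As it stands, neither direction is established.
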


\vspace{-0.3cm}\begin{proof}
Suppose $G'$ contains an induced $S_{2,2,2}$, and denote this $S_{2,2,2}$ by $H$. 
Since $G$ is $S_{2,2,2}$-free and $d_G(f)\leq 2$ (by Lemma \ref{lem:cleanPairProperties} (b)), 
$u$ is the vertex of degree 3 in $H$, 
	$a_{10}$ is a neighbor of $u$ in $H$, and without loss of generality, $a_9$ is the neighbor of $a_{10}$ in $H$. In other word, $G'$ contains
	four vertices $s_1,s_2,p_1,p_2$ such that $u,a_{10},a_9,s_1,s_2,p_1,p_2$ induce an
	$S_{2,2,2}$ in $G'$. But then $u,z,x,s_1,s_2,p_1,p_2$
 induce an $S_{2,2,2}$ in $G$, a contradiction.

\vspace{0.1cm}Let ${\bar \gamma}$ be a $\gamma$-completion. Lemmas \ref{lem:samedifferent} (ii) and \ref{lem:simpleforcing} (p) imply that $w_2,u$ have the same
color, and $v\in B_{\gamma}$.
Hence, at most one of $f,u$ is black, else the black vertex $v$ would have two black neighbors $f$ and $w_2$.
If $f$ is black and $u$ is white , we obtain a $\delta$-completion by 
assigning color black to $a_1,a_4,a_5,a_6,a_7,a_{10},a_{12}$ and color white to $a_2,a_3,a_8,a_9,a_{11}$.	 
If $f$ is white and $u$ is black, we obtain a $\delta$-completion by 
assigning color black to $a_1,a_2,a_3,a_4,a_7,a_9,a_{11},a_{12}$ and color white to $a_5,a_6,a_8,a_{10}$.
If both $f$ and $u$ are white, we obtain a $\delta$-completion by 
assigning color black to $a_1,a_2,a_3,a_4,a_7,a_8,a_{10},a_{11}$ and color white to $a_5,a_6,a_9,a_{12}$.

\vspace{0.1cm}Let now ${\bar \delta}$ be a $\delta$-completion. Lemma \ref{lem:samedifferent} (ii) implies that $\{w_1,w_2,y,z\}\cap B_{\gamma}=\emptyset$.
In $G'$, at most one of $f,u$ can be black. Indeed, if $f$ and $u$ are black, then Lemma \ref{lem:samedifferent} (ii) implies that $a_6$ and $a_9$
		are black as well. But this is impossible, since $a_7$ is black by Lemma \ref{lem:simpleforcing} (p).
Now if $f$ is black and $u$ is white, then Lemma \ref{lem:simpleforcing} (i) implies $x\notin B_{\gamma}$, and we obtain a $\gamma$-completion
		by assigning color black to $v,y,z$ and color white to $w_1,w_2,x$. 
If $f$ is white and $u$ is black, a $\gamma$-completion is obtained by
assigning color black to $v,w_2,y,x$ and color white to $w_1,z$. 
Finally, if $f$ and $u$ are both white, a $\gamma$-completion is obtained by
assigning color black to $v,w_1,z,x$ and color white to $w_2,y$.  
\vspace{-0.5cm}\end{proof}

\vspace{0.3cm}\begin{lemma}\label{lem:T6}
		Transformation $\tau_{6}$  is valid.
\end{lemma}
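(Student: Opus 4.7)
The plan is to follow the three-part template established in Lemmas \ref{lem:T1}--\ref{lem:T5}: first verify that $G'$ remains $S_{2,2,2}$-free, then argue both directions of completability, using the same toolbox of forcing rules. Since $(G,\gamma)$ is clean and $G$ is $S_{2,2,2}$-free, I can freely invoke Lemma \ref{lem:cleanPairProperties} to limit the neighborhoods of any distinguished vertices appearing in $H_6$ (no $K_4$, no diamond, no butterfly, vertices in $F_3$/$F_2$ configurations have constrained degrees, and every vertex lies in at most one triangle).

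First I will show that $G'$ is $S_{2,2,2}$-free. Suppose for contradiction that some induced $S_{2,2,2}$ appears in $G'$, call it $H$. Since $G$ was $S_{2,2,2}$-free, $H$ must use at least one of the newly added vertices from $Y$ (or rely on a newly added edge). I will argue, as in Lemmas \ref{lem:T4} and \ref{lem:T5}, that the degree-$3$ vertex of $H$ cannot be one of the low-degree grey vertices of $Y$, because the clean-pair properties in Lemma \ref{lem:cleanPairProperties} (b)--(c) cap their degrees. This forces the degree-$3$ vertex of $H$ to lie in $V\setminus X$, and then I lift the three paths of $H$ into $G$ by replacing each new vertex of $Y$ used by $H$ with the corresponding vertex of $X$ that it ``encodes'' in the transformation. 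Lemma \ref{lem:cleanPairProperties} (a) rules out the unwanted adjacencies (no extra triangles are created), so the lifted subgraph is an induced $S_{2,2,2}$ in $G$, a contradiction.

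Next I will handle the implication $\gamma$ completable $\Rightarrow$ $\delta$ completable. Given a $\gamma$-completion $\bar\gamma$, I use Lemma \ref{lem:simpleforcing} (p) and Lemma \ref{lem:samedifferent} (ii) to pin down the colors of the distinguished vertices of the induced subgraph on which $\tau_6$ acts: the high-degree vertex of the configuration must be black, and any vertex of a $C_4$ or triangle there inherits a forced color. The case split is over which vertex of the "choice set" (typically a $P_3$ or claw-like neighborhood) is white in $\bar\gamma$, giving a small number of cases (parallel to the $i=5$ case of Lemma \ref{lem:8reductions} and to Lemma \ref{lem:T5}). In each case I exhibit an explicit extension to $Y$ producing a $\delta$-completion. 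For the converse, given a $\delta$-completion $\bar\delta$, I read off the forced colors on the boundary of $X$ using Lemma \ref{lem:samedifferent}, verify that at most one of the ``switching pair'' of vertices in $G'$ is black (this is the step where I expect the calculation to be delicate, since the same trick of Lemma \ref{lem:T5}'s last paragraph must be reproduced: a parity argument via the black neighbor of $v$), and then extend to $X$ by assigning the uniquely determined colors.

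The main obstacle I anticipate is the $G'$ is $S_{2,2,2}$-free step: as in Lemma \ref{lem:T1}, the argument that any hypothetical induced $S_{2,2,2}$ in $G'$ can be lifted to $G$ requires a careful use of Lemma \ref{lem:cleanPairProperties} (a)--(c) to forbid adjacencies between the new vertices in $Y$ and the existing leaves of the would-be $S_{2,2,2}$. A lesser but real concern is the second forward direction: verifying that the two candidate vertices in $G'$ cannot both be black simultaneously, which typically hinges on the fact that a common black neighbor (a vertex forced black by Lemma \ref{lem:simpleforcing} (p)) already has its unique black partner fixed, making a second black neighbor impossible. Once these are in place, the explicit color assignments follow the same bookkeeping pattern as Lemmas \ref{lem:T4} and \ref{lem:T5}.
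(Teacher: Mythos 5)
Your proposal is a template rather than a proof: it never engages with the actual structure of $\tau_6$, and where it does commit to specifics, those specifics do not match what the transformation requires. In $\tau_6$ the modification consists of deleting the vertices $a,b,c$ and adding the two \emph{edges} $yw_1$ and $xw_2$ between vertices that survive into $G'$; there are no new vertices to ``lift'' back, so your plan of replacing each new vertex of $Y$ by the vertex of $X$ it encodes, and of arguing that the degree-3 vertex of a hypothetical $S_{2,2,2}$ cannot be a low-degree grey vertex, has no purchase here. The actual $S_{2,2,2}$-freeness argument is different in kind: any induced $S_{2,2,2}$ $H$ in $G'$ must use at least one of the two new edges, and since $x,y,w_1,w_2$ induce a $C_4$ in $G'$ while $S_{2,2,2}$ is $C_4$-free, $H$ uses \emph{exactly} one of them, say $xw_2$; one then substitutes $b$ for $w_2$ (or $a$ for $x$, or the pairs $b,c$ resp.\ $a,c$ for a vertex together with its degree-one neighbour in $H$, depending on $d_H(x)$ and $d_H(w_2)$) to produce an induced $S_{2,2,2}$ in $G$. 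Without the $C_4$-freeness observation the case analysis does not even get started, and your proposal does not contain it.

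The completability directions are also misdiagnosed. You anticipate a case split over which vertex of a ``choice set'' is white, and a delicate parity argument in the converse direction analogous to the last paragraph of Lemma \ref{lem:T5}. Neither occurs: both directions are immediate consequences of Lemma \ref{lem:samedifferent} (i) applied to the $C_4$ on $x,y,w_1,w_2$, which forces ${\bar\gamma}(x)={\bar\gamma}(y)\neq{\bar\gamma}(w_1)={\bar\gamma}(w_2)$ in $G$ and the same pattern in $G'$; the forward direction then needs no recolouring at all, and the converse extends to $X$ by giving $a$ the colour of $x$, $b$ the colour of $w_1$, and $c$ colour black (using Lemma \ref{lem:samedifferent} (ii) to see $a,b\notin B_\gamma$). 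A reader cannot extract a correct proof of Lemma \ref{lem:T6} from your text, because the two ideas that actually carry the proof --- the $C_4$-freeness of $S_{2,2,2}$ in the structural step, and the $C_4$ colour-alternation rule in the colouring steps --- are absent, while the machinery you do invoke (degree caps from Lemma \ref{lem:cleanPairProperties} (b)--(c), Lemma \ref{lem:simpleforcing} (p), a parity argument) is not what is needed.
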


\vspace{-0.3cm}\begin{proof}
	Suppose $G'$ contains an induced $S_{2,2,2}$, and denote this $S_{2,2,2}$ by $H$. Since $G$ is $S_{2,2,2}$-free,
	$H$ contains at least one of the
	new edges $yw_1$ and $xw_2$. In fact, $H$ contains exactly one of these edges,
	because $H$ is $C_4$-free. Without loss of generality, assume that $H$ contains
	$xw_2$. If $d_{H}(x) = 2$ and $d_{H}(w_2) = 1$ (resp. $d_{H}(w_2) = 2$ and $d_{H}(x) = 1$), then by replacing $w_2$ (resp. $x$) in $H$ by $b$ (res. $a$) we 
	obtain an induced $S_{2,2,2}$ in $G$, a contradiction. If $d_{H}(x) = 3$ and $d_{H}(w_2) = 2$ (resp. $d_{H}(w_2) = 3$ and $d_{H}(x) = 2$),
	then by replacing $w_2$ (resp. $x$) and the neighbor of $w_2$ (resp. $x$) of degree one in $H$ by
	$b$ and $c$ (resp. $a$ and $c$), we obtain an induced $S_{2,2,2}$ in $G$, a contradiction.
	
	\vspace{0.1cm}Let now ${\bar \gamma}$ be a $\gamma$-completion. 
	It follows from Lemma \ref{lem:samedifferent} (i) that ${\bar \gamma}(y)={\bar \gamma}(x)\neq {\bar \gamma}(w_1)={\bar \gamma}(w_2)$. Hence
	a $\delta$-completion can be obtained by coloring every vertex of $G'$ as in $G$.
	
	\vspace{0.1cm}Finally, let ${\bar \delta}$ be a $\delta$-completion. Lemma \ref{lem:samedifferent} (ii) implies that $\{a,b\}\cap B_{\gamma}=\emptyset$, and 
	Lemma \ref{lem:samedifferent} (i) implies ${\bar \delta}(x) = {\bar \delta}(y) \neq {\bar \delta}(w_1) = {\bar \delta}(w_2)$. We therefore obtain
	a $\gamma$-completion by assigning color ${\bar \delta}(x)$ to $a$, color ${\bar \delta}(w_1)$ to $b$, and color black to $c$.
\end{proof}

\vspace{0.3cm}\begin{lemma}\label{lem:T7}
		If $d_G(b)=3$ then transformation $\tau_{7}$ is valid.
\end{lemma}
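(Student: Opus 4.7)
The plan is to follow the three-step template used for $\tau_1,\dots,\tau_6$: verify that the transformed graph $G'$ is $S_{2,2,2}$-free, then prove completability preservation in both directions by exploiting the forcing rules of Lemmas \ref{lem:samedifferent} and \ref{lem:simpleforcing} together with the clean-pair properties of Lemma \ref{lem:cleanPairProperties}. The hypothesis $d_G(b)=3$ is the key extra ingredient: it prevents new long induced paths in $G'$ from escaping through an otherwise unspecified neighbor of $b$, and it pins down which vertices of the local configuration can carry black color.

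First, to show $G'$ is $S_{2,2,2}$-free, I would suppose for contradiction that $G'$ contains an induced $S_{2,2,2}$, call it $H$. Since $G$ is $S_{2,2,2}$-free, $H$ must involve at least one vertex of $Y$ or at least one edge of $G'$ that is not an edge of $G$. Using $d_G(b)=3$, I would enumerate the possible roles (center of degree 3 or leaf at distance 1 or 2 from the center) that the vertices incident to the new edges can play in $H$. In each case, I would lift $H$ to an induced $S_{2,2,2}$ in $G$ by replacing each new vertex/edge with a short path through the removed set $X$, exactly as done in Lemmas \ref{lem:T4}, \ref{lem:T5}, and \ref{lem:T6}; the fact that $b$ has no other neighbor (thanks to $d_G(b)=3$ combined with the clean-pair property that $b$'s neighborhood is tightly constrained) rules out the one escape route that would otherwise produce a "spurious" $S_{2,2,2}$ not present in $G$.

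Next, for the forward direction, let $\bar\gamma$ be a $\gamma$-completion. I would use Lemmas \ref{lem:samedifferent} and \ref{lem:simpleforcing}(p) to determine the forced colors of the vertices in the local configuration of $\tau_7$: typically the central vertex of a degree-3 neighborhood with three independent neighbors must be black, and any $C_4$ in the picture forces the alternation of colors around it. From these constraints together with $\bar\gamma$, I would read off a consistent coloring of the vertices of $Y$ (the new grey vertices) that respects the adjacencies in $G'$, verify that the result is feasible, and declare it a $\delta$-completion. For the reverse direction, I would take a $\delta$-completion $\bar\delta$ and branch on the forced colors of the surviving local vertices; for each branch, I would assign colors to the vertices of $X$ (the removed grey vertices) so that every black vertex in $X$ has exactly one black neighbor and every white vertex has only black neighbors, producing a $\gamma$-completion.

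The main obstacle I expect is the $S_{2,2,2}$-freeness step: the hypothesis $d_G(b)=3$ is apparently necessary precisely because without it there would be a fourth neighbor of $b$ that could serve as the missing endpoint of an $S_{2,2,2}$ in $G'$ which has no preimage in $G$. Making this case analysis exhaustive — covering every way that $H$ can thread through the new edges and every possible identification of $H$'s degree-3 center — will be the most delicate part of the argument, and it is where I would spend the bulk of the proof, while the color-extension steps should follow the same clean pattern as in Lemmas \ref{lem:T1}--\ref{lem:T6}.
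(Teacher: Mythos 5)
Your outline misses the defining feature of $\tau_{7}$: it removes no vertices and adds no vertices (the sets $X$ and $Y$ are empty); it only deletes the edge $bx$. Consequently the completability argument cannot be carried out by ``reading off a coloring of $Y$'' and ``assigning colors to $X$'' as in $\tau_1$--$\tau_6$ --- there is nothing to color. The entire content of the equivalence is a single structural fact: \emph{every} $\gamma$-completion of $G$ and \emph{every} $\delta$-completion of $G'$ assigns different colors to $b$ and $x$, and once this is known the presence or absence of the edge $bx$ is immaterial to feasibility (a black--white edge never violates either condition). For a $\gamma$-completion this is immediate from Lemma \ref{lem:samedifferent}~(ii); for a $\delta$-completion one needs the chain: Lemma \ref{lem:simpleforcing}~(h) forces $c,s\in B_{\gamma}$, hence ${\bar\delta}(a)\neq{\bar\delta}(b)$ and ${\bar\delta}(x)\neq{\bar\delta}(w_1)$, and Lemma \ref{lem:samedifferent}~(ii) gives ${\bar\delta}(a)\neq{\bar\delta}(w_1)$, whence ${\bar\delta}(b)\neq{\bar\delta}(x)$. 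Your proposal contains none of this, and without it the reverse direction genuinely fails: a priori a $\delta$-completion could color $b$ and $x$ both black (or both white), and restoring the edge $bx$ would then destroy feasibility.

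On the $S_{2,2,2}$-freeness step your general strategy (suppose $H$ is an induced $S_{2,2,2}$ in $G'$, note it must use the modified adjacency, and lift it back to $G$) is the right one and matches the paper, but your diagnosis of the hypothesis $d_G(b)=3$ is only partly on target. The paper first invokes Lemma \ref{lem:cleanPairProperties}~(c) to get $d_G(c)=d_G(s)=2$, then observes that $H$ must contain both $b$ and $x$, that the degree constraints force $d_H(b)=1$, and that replacing $b$ by $c$ produces an induced $S_{2,2,2}$ in $G$ --- a short substitution argument rather than the exhaustive threading analysis you anticipate spending most of the proof on. In its current form your proposal is a plan rather than a proof, and the one place where a new idea is actually required (the two-coloring of $b$ and $x$) is precisely the place it leaves blank.
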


\vspace{-0.3cm}\begin{proof}
	First notice that Lemma \ref{lem:cleanPairProperties} (c) implies that $d_G(c)=d_G(s)=2$. Suppose $G'$ contains an induced $S_{2,2,2}$, and denote 
	this $S_{2,2,2}$ by $H$. Since $G$ is $S_{2,2,2}$-free, $H$ must contain both $b$ and $x$. Since $b$ and $c$ cannot have degree $2$ or $3$ in $H$, we have $d_{H}(b)=1$ and $d_{H}(a)=2$. 
	But then by replacing $b$ in $H$ with $c$, we obtain an induced 
	$S_{2,2,2}$ in $G$, a contradiction.
	
	In order to show that $\gamma$ is completable if and only if $\delta$ is completable, 
	it is sufficient to prove that all $\gamma$-completions and $\delta$-completions assign different
	colors to $b$ and $x$. For a $\gamma$-completion this is 
	guaranteed by 
	Lemma \ref{lem:samedifferent} (ii). Now let ${\bar \delta}$ be a $\delta$-completion. 
	Lemma \ref{lem:simpleforcing} (h) implies that $c,s \in B_{\gamma}$. Hence, 
	${\bar \delta}(a) \neq {\bar \delta}(b)$, and ${\bar \delta}(x) \neq {\bar \delta}(w_1)$.
	By Lemma \ref{lem:samedifferent} (ii), vertices $a$ and $w_1$ have different colors, and therefore
	$b$ and $x$ have different colors as well. 
\end{proof}

\vspace{0.3cm}\begin{lemma}
\label{lem:T8}
If $d_G(z_1)=1$, then transformation $\tau_{8}$ is valid.
\end{lemma}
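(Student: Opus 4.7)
The plan is to imitate the three-step proof pattern used for $\tau_1$ through $\tau_7$. First I would verify that $G'$ is $S_{2,2,2}$-free; then I would prove both directions of the completability equivalence, namely that a $\gamma$-completion yields a $\delta$-completion and conversely. As with the previous lemmas, the technical input is Lemmas~\ref{lem:samedifferent}, \ref{lem:simpleforcing} and \ref{lem:cleanPairProperties}, applied to the pattern depicted for $\tau_8$.

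For the $S_{2,2,2}$-freeness of $G'$, I would argue by contradiction: assume $G'$ contains an induced $S_{2,2,2}$, call it $H$. Since $G$ itself is $S_{2,2,2}$-free, $H$ must use at least one of the vertices or edges introduced by $\tau_8$. I would then case-split on the position of $H$ inside the pattern and attempt, in each case, to lift $H$ back to an induced $S_{2,2,2}$ of $G$ by replacing each new vertex (resp.\ new edge) of $H$ by a removed vertex of $X$ (resp.\ a short path through $X$). The hypothesis $d_G(z_1)=1$ is what makes this lifting go through: it guarantees that $z_1$ is a leaf of $G$, so that any branch of a candidate $S_{2,2,2}$ that would escape via $z_1$ in $G$ cannot actually exist, and symmetrically the corresponding branch of $H$ in $G'$ cannot be extended beyond $z_1$'s image either. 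Combined with the no-diamond, no-butterfly structure from Remarks~\ref{remarks} and the degree bounds of Lemma~\ref{lem:cleanPairProperties} (b)(c), this should rule out every case.

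For the forward direction, given a $\gamma$-completion $\bar{\gamma}$, I would invoke the relevant forcing rules from Lemma~\ref{lem:simpleforcing}---in particular (a) applied to $z_1$ (its unique neighbor must be black), together with (h) or (p) for triangles and claw centers in the pattern---to pin down the color of essentially every vertex of $X$, leaving only a small number of cases parametrized by the color of the ``boundary'' vertex along which the transformation was performed. In each case I would read off an explicit feasible coloring of the new vertices $Y$ that extends $\delta$, exactly as in the proofs of Lemmas~\ref{lem:T4}--\ref{lem:T7}.

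For the reverse direction, starting from a $\delta$-completion $\bar{\delta}$, I would apply the same forcing rules inside $G'$ to determine the color of the boundary vertex (or vertices), then split on this color and in each case describe a feasible coloring of the removed set $X$ that extends to a $\gamma$-completion, verifying that every black vertex ends up with exactly one black neighbor and that the white set remains independent. The hypothesis $d_G(z_1)=1$ again simplifies the count because $z_1$'s unique neighbor's black/white status is dictated by $\bar{\delta}$. The main obstacle I anticipate is the first step: the $S_{2,2,2}$-freeness of $G'$ typically requires the most careful case analysis, since many configurations cross the boundary of $X$; once it is settled, the color-translation arguments reduce to a mechanical enumeration closely parallel to the one carried out for $\tau_6$ and $\tau_7$.
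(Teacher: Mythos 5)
Your proposal is a plan, not a proof: none of the three steps is actually carried out. You correctly identify the template shared by Lemmas~\ref{lem:T1}--\ref{lem:T7} (show $G'$ is $S_{2,2,2}$-free, then translate completions in both directions), but the entire content of the lemma lies in executing that template on the specific pattern of $\tau_8$, and that is precisely what is missing. For the $S_{2,2,2}$-freeness step, the paper's argument is short and concrete: any induced $S_{2,2,2}$ in $G'$ must contain both $y_1$ and $x_2$; if one of them is a leaf of $H$ it is replaced by $y_2$ or $a$ respectively, and otherwise one of them has degree 2 in $H$ and is replaced together with a neighbor by $y_2,b$ or $a,b$, producing an induced $S_{2,2,2}$ in $G$. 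Your sketch gestures at "case-split and lift back" but names no replacement vertices, so there is no way to check it. Similarly, the two completion-translation directions in the paper hinge on the observation (from Lemma~\ref{lem:simpleforcing}~(a)) that $y_1,y_2\in B_{\gamma}$, which forces $z_1\notin B_{\gamma}$ and exactly one of $a,b$ black; the proof then gives explicit recolorings in each of a small number of cases. You do not derive these constraints or exhibit any coloring.

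A second, more substantive concern: you locate the role of the hypothesis $d_G(z_1)=1$ in the $S_{2,2,2}$-freeness lifting, claiming it prevents a branch of a candidate $S_{2,2,2}$ from "escaping via $z_1$". In the paper's proof the freeness argument does not use that hypothesis at all; the leaf condition on $z_1$ matters in the completion-translation step, where in one case the color of $z_1$ is switched to black when passing from $\bar{\gamma}$ to $\bar{\delta}$ (this is only harmless because $z_1$ has no neighbors in $G$ other than the one prescribed by the pattern). Misplacing the hypothesis suggests the case analysis you defer would not come out as you expect. As it stands the proposal cannot be credited as a proof of the lemma.
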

\vspace{-0.3cm}\begin{proof} 
Suppose $G'$ contains an induced $S_{2,2,2}$, and denote this $S_{2,2,2}$ by $H$. Since $G$ is $S_{2,2,2}$-free, 
both $y_1$ and $x_2$ belong to $H$. If $y_1$ or $x_2$ has degree $1$ in $H$, then an $S_{2,2,2}$ in $G$ is obtained 
by replacing $y_1$ by $y_2$ or $x_2$ by $a$. Hence, one of $y_1,x_2$ has degree $2$, and the other has degree $3$ in $H$. 
But then an $S_{2,2,2}$ in $G$ is obtained by replacing $y_1$ and one of its neighbors different from $x_2$ by $y_2,b$ (if $d_H(y_1)=2$) or $x_2$ and 
one of its neighbors different from $y_1$ by $a,b$ (if $d_H(x_2)=2$), a contradiction.

\vspace{0.1cm}Let ${\bar \gamma}$ be a $\gamma$-completion. Lemma \ref{lem:simpleforcing} (a) implies that $y_1,y_2 \in B_{\gamma}$, which means that
$z_1\notin B_{\gamma}$ and exactly one of $a,b$ is black. If $a$ is white or both $a$ and $x_2$ are black, then a $\delta$-completion ${\bar \delta}$ is obtained by setting 
${\bar \delta}(v)={\bar \gamma}(v)$ for all vertices $v$ in $G'$. If $a$ is black while $x_2$ is white, then $x_1,z_1,b$ are white, and $z_2$ 
is black. Hence, a $\delta$-completion ${\bar \delta}$ is obtained by changing the color of $z_1$ to black  
and setting ${\bar \delta}(v)={\bar \gamma}(v)$ for all other vertices $v$ in $G'$.

\vspace{0.1cm}Let now ${\bar \delta}$ be a $\delta$-completion. Since $y_1,y_2 \in B_{\gamma}$, we have $\{a,b,z_2\}\cap B_{\gamma}=\emptyset$ and 
at most one of $x_1,x_2,z_1$ is black in ${\bar \delta}$. If $x_1$ or $z_1$ is black, or if $x_1,x_2,z_1$ are white, we 
obtain a $\gamma$-completion by assigning color black to $b,y_2$ and color white to $a,z_2$. 
If $x_2$ is black, 
we obtain a $\gamma$-completion by assigning color black to $a,y_2$ and color white to $b,z_2$. 
\vspace{-0.5cm} \end{proof}

\vspace{0.3cm}\begin{lemma}
\label{lem:T9}
Transformation $\tau_{9}$ is valid.
\end{lemma}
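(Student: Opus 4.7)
The plan is to follow the same two-stage template used in Lemmas~\ref{lem:T1}--\ref{lem:T8}. First I would verify that $G'$ is $S_{2,2,2}$-free, and then I would establish the equivalence of completability of $\gamma$ and $\delta$ by exhibiting, in each direction, an explicit way of extending one completion to the other using the forcing rules from Lemmas~\ref{lem:samedifferent} and~\ref{lem:simpleforcing} together with the structural consequences of $(G,\gamma)$ being clean (Lemma~\ref{lem:cleanPairProperties}).

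For $S_{2,2,2}$-freeness of $G'$, I would suppose, for contradiction, that $G'$ contains an induced $S_{2,2,2}$, call it $H$. Since $G$ is $S_{2,2,2}$-free by assumption, $H$ must involve at least one vertex of $Y$ or one edge that is present in $G'$ but not in $G$. I would then enumerate the possibilities for the role of these new vertices/edges inside $H$ (degree $1$, $2$ or $3$ in $H$, and which of its three branches they lie on), and in each case lift $H$ back to an induced $S_{2,2,2}$ in $G$ by replacing the new vertices with appropriate vertices from $X$ along the same pattern used in the proofs of $\tau_3$--$\tau_8$. Lemma~\ref{lem:cleanPairProperties} will be used to rule out unwanted adjacencies (no diamonds, butterflies or shared triangle vertices), which is what makes the lifting clean.

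For the equivalence of completability, I would apply the clean-pair constraints at the boundary between $X$ (the removed vertices) and $V\setminus X$. This typically forces the colors of several boundary vertices of $G$ (for instance using part (p) of Lemma~\ref{lem:simpleforcing} when a degree-$3$ independent neighborhood appears, part (a) when a pendant is present, and Lemma~\ref{lem:samedifferent}(ii) across triangle edges). Given a $\gamma$-completion $\bar\gamma$, I would read off the colors of the vertices in $V\setminus X$, check that they yield a feasible partial coloring of $G'$, and then extend it to a $\delta$-completion by coloring the vertices of $Y$ consistently; the number of sub-cases depends on the binary choices left at the boundary of the transformation region (typically two or three). In the reverse direction, I would split on the colors of the key boundary vertices in a $\delta$-completion $\bar\delta$ and, in each case, provide an explicit black/white assignment on $X$ that makes every black vertex of $G$ have exactly one black neighbor and keeps the white set independent.

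The main obstacle I expect is the $S_{2,2,2}$-freeness step: because $\tau_9$ is the last and presumably most intricate transformation, the putative induced $S_{2,2,2}$ in $G'$ may span several new vertices/edges simultaneously, so the case distinction on where its degree-$3$ vertex and its three branches sit relative to the transformation region is the combinatorially heaviest part. The completion-swap step, by contrast, is mostly mechanical once the boundary colors are fixed by the forcing rules, and follows the pattern of the earlier lemmas.
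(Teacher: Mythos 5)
Your proposal is a template, not a proof: it restates the two-stage strategy common to Lemmas~\ref{lem:T1}--\ref{lem:T8} (lift a putative induced $S_{2,2,2}$ from $G'$ back to $G$, then exchange completions in both directions) without ever engaging with what $\tau_9$ actually does. Transformation $\tau_9$ contracts an induced path $v_1v_2v_3v_4v_5$ whose internal vertices have degree two into the single edge $v_1v_5$; none of your case distinctions are instantiated for this configuration, no explicit colorings of $X=\{v_2,v_3,v_4\}$ are given, and so the proposal cannot be checked for correctness. The $S_{2,2,2}$-freeness step, which you predict to be the ``combinatorially heaviest part,'' is in fact short here: any induced $S_{2,2,2}$ in $G'$ must use the new edge $v_1v_5$, and one lifts it back to $G$ by substituting $v_4$ (and, if needed, $v_3$) for the endpoint of smaller degree in $H$ -- a two-case argument.

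The genuine mathematical content that your plan misses sits precisely where you declare things ``mostly mechanical'': the reverse direction when a $\delta$-completion makes both $v_1$ and $v_5$ black. To recolor the removed path one wants to set $v_2,v_4$ black and $v_3$ white, but this is only legitimate if the precoloring $\gamma$ did not already force $v_3\in B_\gamma$. The paper devotes a separate contradiction argument to this: assuming $v_3\in B_\gamma$, it combines Lemma~\ref{lem:cleanPairProperties}~(a), Lemma~\ref{lem:simpleforcing}~(f) and~(p), and Lemma~\ref{lem:samedifferent}~(ii) to show that $v_1$ would then be forced into $B_\gamma\cup W_\gamma$ or would acquire two black neighbors, contradicting either cleanness or the assumption that $v_1$ and $v_5$ are both black. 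There is also a preliminary observation (via Lemma~\ref{lem:simpleforcing}~(f)) that $v_1\in B_\gamma$ iff $v_4\in B_\gamma$ and $v_2\in B_\gamma$ iff $v_5\in B_\gamma$, which is needed to guarantee the proposed recolorings do not clash with $B_\gamma$. None of this is foreseen or supplied by your outline, so the proposal has a real gap rather than being merely terse.
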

\vspace{-0.3cm}\begin{proof}
Suppose $G'$ contains an induced $S_{2,2,2}$, and denote this $S_{2,2,2}$ by $H$. Since $G$ is $S_{2,2,2}$-free, 
both $v_1$ and $v_5$ belong to $H$. If one of them has degree $1$ in $H$, say $v_1$, then an $S_{2,2,2}$ in $G$ is obtained 
by replacing $v_1$ by $v_4$. Hence one of $v_1,v_5$ has degree 2 in $H$, while the other has degree 3, say $d_H(v_1)=2$ and $d_H(v_5)=3$. But then an $S_{2,2,2}$ in $G$ 
is obtained by replacing $v_1$ and one of its neighbors different from $v_5$ by $v_4,v_3$, a contradiction.

\vspace{0.1cm}Since $(G,\gamma)$ is clean, at most one among $v_2,v_3,v_4$ can belong to $B_{\gamma}$, and 
Lemma \ref{lem:simpleforcing} (f) implies that $v_1\in B_{\gamma}$ (resp. $v_2\in B_{\gamma}$) if and only if $v_4\in B_{\gamma}$ (resp. $v_5\in B_{\gamma})$. 
Hence, if $\{v_1,v_4\}\subseteq B_{\gamma}$ (resp. $\{v_2,v_5\}\subseteq B_{\gamma}$) then $\{v_2,v_5\}\cap B_{\gamma}=\emptyset$ (resp. $\{v_1,v_4\}\cap B_{\gamma}=\emptyset$). 

\vspace{0.1cm}Let now ${\bar \gamma}$ be a $\gamma$-completion. If $v_3$ is white, then $v_1,v_2,v_4,v_5$ are black, 
while if $v_3$ is black, then exactly one of $v_2,v_3$ and exactly one of $v_1,v_5$ is black. 
In both cases, we obtain a $\delta$-completion ${\bar \delta}$ by setting ${\bar \delta}(v)={\bar \gamma}(v)$ for all vertices $v$ in $G'$.

\vspace{0.1cm}Finally, let ${\bar \delta}$ be a $\delta$-completion. 
If exactly one of $v_1,v_5$ is black, say $v_1$, then $v_2\notin B_{\gamma}$ 
and we obtain a $\gamma$-completion by assigning color black to $v_3,v_4$ and color white to $v_2$. 

Suppose now that both $v_1,v_5$ are black. We show that $v_3\notin B_{\gamma}$. Assume by contradiction that $v_3\in B_{\gamma}$. 
Then none of $v_1,v_2$ belongs to $B_{\gamma}$ and Lemma (2.2) (a) implies $d_G(v_1)>1$. 
If $d_G(v_1)=2$, Lemma (2.2 (f) implies that $v_1$ has a neighbor $w\neq v_2$ in $B_{\gamma}$, which means that ${\bar \delta}$ is not feasible (since
$w,v_5$ are two black neighbors of $v_1$), a contradiction.
Hence, $d_G(v_1)\geq 3$, and Lemma (2.2) (p) implies that $N(v_1)\setminus \{v_2\}$ contains two adjacent vertices $w,w'$ (else $v_1\in B_{\gamma}$). But Lemma (2.1) (ii) then implies
that $v_1,v_5$ have different colors, a contradiction.
So $v_3\notin B_{\gamma}$, and we obtain a $\gamma$-completion by assigning color black to $v_2,v_4$ and color white to $v_3$.

%
%
\end{proof}


\section{Irreducible graphs}
\label{sec:Irr}


\begin{definition}
We say that a pair $(G,\gamma)$ is {\sc irreducible} if it is clean and none of the reductions $\rho_1,\dots,\rho_8$ and transformations $\tau_1,\dots,\tau_9$ can be applied to $G$. 
\end{definition}

\begin{lemma}\label{lem:outsideTriangle}
	Let $(G,\gamma)$ be an irreducible pair. If $G$ is $S_{2,2,2}$-free, then $\Delta(G)\leq 4$ and every vertex of
	degree 4 belongs to a triangle.
\end{lemma}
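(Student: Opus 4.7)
The plan is to argue by contradiction. Suppose $v\in V(G)$ is a vertex with $d_G(v)\ge 5$, or with $d_G(v)=4$ and $v$ contained in no triangle. I first aim to show that $v\in B_\gamma$. By the Remarks following Lemma~\ref{lem:cleanPairProperties}, $G$ has no $K_4$, no induced diamond and no induced butterfly, so Lemma~\ref{lem:cleanPairProperties}(a) ensures that each vertex lies in at most one triangle; consequently $G[N(v)]$ has at most one edge. In either of the two bad cases, $N(v)$ then contains at least three pairwise non-adjacent vertices, which makes Lemma~\ref{lem:simpleforcing}(p) applicable. Since $(G,\gamma)$ is clean, $v$ must already lie in $B_\gamma$, and hence no neighbor of $v$ is in $B_\gamma$ and no two neighbors of $v$ share a black common neighbor.

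Writing $N(v)=\{w_1,\dots,w_k\}$ with $k\ge 4$, the next step is to analyze each $w_i$ together with its external neighborhood $N(w_i)\setminus\{v\}$. First, each $w_i$ must have $d_G(w_i)\ge 2$, since a pendant $w_i$ attached to the black vertex $v$ fits the grey-vertex side of reduction $\rho_1$, which irreducibility forbids. Lemma~\ref{lem:cleanPairProperties}(a)--(d) together with the absence of $K_4$, diamond and butterfly tightly restrict how these external neighborhoods can interact: at most one triangle meets each $w_i$, no two $w_i$'s share a common neighbor (that would create a diamond around $v$), and the number of edges between any pair of vertex-disjoint triangles meeting $N(v)$ is at most two and these edges are non-adjacent. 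In parallel, the $S_{2,2,2}$-freeness of $G$ forbids three pairwise vertex-disjoint ``long arms'' at three distinct $w_i$'s, since such a configuration together with $v$ would induce a long claw.

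Combining all these constraints, the induced subgraph on $v$, its neighbors and the neighbors of its neighbors is forced into one of a small number of shapes, each matching the left-hand side of one of the transformations $\tau_1,\tau_2,\tau_3$ from Section~\ref{sec:TRANSFORMATION} (which are precisely the transformations applicable to a black vertex of high degree, or to a black vertex of degree $4$ whose neighborhood is independent). To finish, I would verify that in each such configuration the grey-vertex condition (the affected vertices have no other neighbors in $G$) is satisfied, using the non-applicability of the remaining reductions $\rho_1,\dots,\rho_8$. The main obstacle is precisely the bookkeeping in this case analysis: one must show that every configuration compatible with the previous paragraph is covered by one of $\tau_1,\tau_2,\tau_3$, and the $S_{2,2,2}$-free hypothesis is what makes this finite list of transformations sufficient. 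Applying the relevant transformation then contradicts the assumed irreducibility of $(G,\gamma)$ and completes the proof.
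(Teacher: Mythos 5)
Your opening move matches the paper's: in both bad cases ($d_G(v)\ge 5$, or $d_G(v)=4$ with $v$ in no triangle) the subgraph induced by $N(v)$ has at least three isolated vertices, so rule (p) of Lemma \ref{lem:simpleforcing} applies, and cleanliness of $(G,\gamma)$ forces $v\in B_\gamma$. From that point on, however, the proposal is a plan rather than a proof, and parts of the plan are wrong. First, your claim that every $w_i\in N(v)$ has degree at least $2$ ``since a pendant $w_i$ attached to the black vertex $v$ fits the grey-vertex side of $\rho_1$'' does not hold: the configuration $H_1$ of $\rho_1$ is a specific pendant tree whose black vertex has degree three and whose grey vertices have no outside neighbors, and a degree-$4$ or degree-$5$ black vertex with one pendant neighbor does not match it. Indeed, the paper's proof explicitly retains the possibility of one pendant neighbor (rule (e) only excludes two or more) and needs it: the degree-$5$ case is resolved by $\rho_3$/$\tau_1$, whose picture contains a fifth neighbor $w_5$ of degree one, and the case $d_G(v)=4$ with a pendant neighbor is the one handled by $\tau_3$. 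Your claim, if true, would make these configurations unreachable. Second, the statement that no two $w_i$'s share a common neighbor ``because that would create a diamond'' is misattributed: two nonadjacent neighbors of $v$ with a second common neighbor form a $C_4$, not a diamond, and the correct argument is via Lemma \ref{lem:samedifferent}(i) plus cleanliness.

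The more serious gap is that the substance of the paper's proof is exactly the ``bookkeeping'' you defer. Before any of $\rho_3,\rho_4,\tau_1,\tau_2,\tau_3$ can be applied, one must establish a chain of structural facts that your sketch never derives: (i) every neighbor of $v$ has degree at most $2$ (via rules (p) and Lemma \ref{lem:samedifferent}(ii), since a degree-$\ge 3$ neighbor would be forced into $B_\gamma\cup W_\gamma$, contradicting cleanliness); (ii) the second neighbors $u_i$ are pairwise distinct and each is adjacent to at most one other $u_j$ (rules (k), (p), (j)); (iii) $S_{2,2,2}$-freeness forces the $u_i$'s to pair up, and each adjacent pair $u_i,u_j$ must have a common neighbor, shown via rules (f), (p) and (j), with rule (m) separating the two resulting triangles. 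Only after (iii) do the left-hand sides of the relevant reductions and transformations appear, and the grey-vertex condition you mention is verified precisely because (i) bounds the degrees of the $w_i$'s. Without these steps the claim that the neighborhood ``is forced into one of a small number of shapes'' is unsupported, so the proof is incomplete as written.
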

\vspace{-0.3cm}\begin{proof}
Assume $\Delta(G)\geq 4$ and let $v$ be a vertex of maximum degree in $G$. Lemma \ref{lem:cleanPairProperties} (a) implies that at most two vertices in $N(v)$ may be adjacent.
Hence, at least $d_G(v)-2$ neighbors of $v$ are 
isolated vertices in the subgraph induced by $N(v)$. If $v\notin B_{\gamma}$ then if follows from Lemma \ref{lem:simpleforcing} (p)  
that $\Delta(G)=4$ and that $v$ belongs to a triangle.
It is therefore sufficient to prove that $v$ cannot belong to $B_{\gamma}$. So assume to the contrary that $v\in B_{\gamma}$. It then follows from Lemma \ref{lem:samedifferent} (ii) that $v$ does not belong to a triangle.
Consider four neighbors $w_1,w_2,w_3,w_4$ of $v$. 
If a neighbor $w$ of $v$ has degree at least $3$, then Lemmas \ref{lem:simpleforcing} (p) and \ref{lem:samedifferent} (ii)
	imply that $w \in W_{\gamma} \cup B_{\gamma}$, which is a contradiction to the assumption that 
	$(G,\gamma)$ is clean. Hence, $d_G(w) \leq 2$ for every $w \in N(v)$. Also, we know from
	Lemma \ref{lem:simpleforcing} (e) that at most one vertex in $N(v)$ has degree 1. Without loss of
	generality, assume $d_G(w_i) = 2$ for $i=1,2,3$ and $d_G(w_4) \leq 2$. 
	For $i = 1, \ldots, 4$, let $u_i$ be the second neighbor of $w_i$ different from $v$, 
	if any. Since $(G,\gamma)$ is clean and $v \in B_{\gamma}$, we know that $w_i,u_i\not\in B_{\gamma}$ for $i=1,\ldots4$. 
	Therefore, Lemma \ref{lem:samedifferent} (i)
	implies that no two vertices in $N(v)$ can have a common neighbor $w$ different from
	$v$, which means that all $u_i$ are distinct.
	
	Suppose that $u_i$ is adjacent to $u_j$ and $u_k$, where $i,j,k$ are three distinct 
	indices. Lemma \ref{lem:simpleforcing} (k) implies that $u_j$ is not adjacent to $u_k$, else at
	least one vertex in $N(v)$ belongs to $W_{\gamma}$. Since $u_i \notin B_{\gamma}$, we know from Lemma \ref{lem:simpleforcing} (p) that $u_i$ has a fourth neighbor $y\neq w_i,u_j,u_k$ adjacent to one of $u_j, u_k$, say $u_j$. Note that Lemma \ref{lem:cleanPairProperties} (a) implies that $y$
	is not adjacent to both $u_j,u_k$. But then Lemma \ref{lem:simpleforcing} (j) implies that $w_j$
	belongs to $W_{\gamma}$, a contradiction. In summary, every $u_i$ is adjacent to at most 
	one other vertex $u_j$.
	
	Suppose $d_G(w_4) =2$. Since $G$ is $S_{2,2,2}$-free, we have $d_G(v)\leq 5$ and the fifth neighbor $w_5$ of $v$, if any, has degree 1.
		Also, without loss of generality, we may assume that $u_1$ is adjacent to $u_2$ while
		$u_3$ is adjacent to $u_4$. Since $u_1$ and $u_2$ do not belong to $B_{\gamma}$,
		it follows from Lemma \ref{lem:simpleforcing} (f) that $d_G(u_1), d_G(u_2) \geq 3$. Lemma \ref{lem:simpleforcing} (p) implies
		then that both $u_1$ and $u_2$ have at least two adjacent neighbors. It then
		follows from Lemma \ref{lem:simpleforcing} (j) that $u_1$ and $u_2$ have a common neighbor $x$,
		otherwise $w_3$ and $w_4$ would belong to $W_{\gamma}$. Similarly, $u_3$ and 
		$u_4$ have a common neighbor $y$. Notice that  
		Lemma \ref{lem:cleanPairProperties} (a) implies $x \neq y$.
		Also, Lemma \ref{lem:simpleforcing} (m) implies that $x$ is not adjacent to $u_3,u_4$, else $x\in W_{\gamma}$.
		Similarly, $y$ is not adjacent to $u_1,u_2$. 
		But this contradict the irreducibility of $(G,\gamma)$ since $\rho_3$ or $\tau_{1}$ can be applied if $d_G(v)=5$, and $\rho_4$ or $\tau_{2}$ can
		be applied if $d_G(v)=4$.
		
	We can therefore suppose that $d_G(v)=4$ and $d_G(w_4) =1$. Since $G$ is $S_{2,2,2}$-free, without
		loss of generality, we may assume that $u_1$ is adjacent to $u_2$. As was the case  when $w_4$ had two neighbors in $G$, we know that $u_1,u_2$ have a common neighbor $x$ that is not adjacent to $u_3$.
		But it then follows that $(G,\gamma)$ is not irreducible since $\tau_{3}$ can be applied, a contradiction. 
\end{proof}

\begin{lemma}\label{lem:insideTriangle}
	Let $(G,\gamma)$ be an irreducible pair. If $G$ is $S_{2,2,2}$-free, then every vertex of
	degree four belongs to a unique triangle and the two other vertices of the triangle have
	degree two.
\end{lemma}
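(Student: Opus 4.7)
The plan is the following. By Lemma~\ref{lem:outsideTriangle} we already have $\Delta(G)\le 4$ and every degree-$4$ vertex lies in a triangle; Lemma~\ref{lem:cleanPairProperties}(a) then makes that triangle unique. Fix a vertex $v$ of degree four, write its unique triangle as $\{v,w_1,w_2\}$, and let $w_3,w_4$ be the two remaining neighbours of $v$. Uniqueness of the triangle through $v$ forces each of $w_1w_3$, $w_1w_4$, $w_2w_3$, $w_2w_4$, $w_3w_4$ to be a non-edge. I will prove $d_G(w_1)=d_G(w_2)=2$ by contradiction, assuming $d_G(w_1)\ge 3$ (the case $d_G(w_2)\ge 3$ is symmetric).

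First I will rule out $v\in B_\gamma$: if $v$ were black, Lemma~\ref{lem:samedifferent}(ii) applied to the triangle $\{v,w_1,w_2\}$ with the outside neighbour $w_3$ (resp.\ $w_4$) of $v$ would force $w_3$ (resp.\ $w_4$) white in every completion, contradicting cleanness since $W_\gamma=\emptyset$ and no further colouring may be imposed. Now $v$ has four neighbours in $G$ with exactly two of them ($w_1$ and $w_2$) adjacent, and $(G,\gamma)$ is butterfly-free by Remarks~\ref{remarks}(a); thus Lemma~\ref{lem:simpleforcing}(r) applies to $v$. Because $v\notin B_\gamma$ and cleanness forbids new forcing, the hypothesis of (r) must fail, yielding distinct vertices $u_3\in N(w_3)$ and $u_4\in N(w_4)$ with $u_3$ non-adjacent to every vertex in $\{v,w_1,w_2,w_4,u_4\}$ and $u_4$ non-adjacent to every vertex in $\{v,w_1,w_2,w_3,u_3\}$. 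In particular $d_G(w_3),d_G(w_4)\ge 2$.

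Next pick any $u\in N(w_1)\setminus\{v,w_2\}$. Triangle-uniqueness at $w_1$ forces $u\not\sim v$ and $u\not\sim w_2$, and the constraints above force $u\notin\{w_3,w_4,u_3,u_4\}$. Hence $v,w_1,w_3,w_4,u,u_3,u_4$ are pairwise distinct, and a direct check of the edges established so far shows that every edge between these seven vertices other than those possibly incident with $u$ is precisely an edge of an induced $S_{2,2,2}$ centred at $v$, with middle vertices $w_1,w_3,w_4$ and leaves $u,u_3,u_4$. Since $G$ is $S_{2,2,2}$-free, $u$ must therefore be adjacent to at least one vertex of $\{w_3,w_4,u_3,u_4\}$.

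The remainder is a case analysis on which of these adjacencies holds. In each case the configuration obtained --- together with the triangle $\{v,w_1,w_2\}$ and the pendant structures at $w_3,w_4$ --- contains a small induced subgraph matching the pattern of one of the reductions $\rho_1,\dots,\rho_8$ or transformations $\tau_1,\dots,\tau_9$, contradicting irreducibility. The most delicate sub-cases are $u\sim w_3$ (or symmetrically $u\sim w_4$), which create an induced $C_4$ sharing the edge $vw_1$ with the triangle $\{v,w_1,w_2\}$; combined with the pendant at $w_4$ or $w_3$ these trigger a transformation in the $\tau_6$--$\tau_9$ family. The main technical difficulty I anticipate lies in the book-keeping needed to match each sub-case to the correct reduction or transformation and to verify that its non-adjacency prerequisites are met; fortunately the strong non-adjacency constraints provided by Lemma~\ref{lem:simpleforcing}(r) keep this verification local and finite.
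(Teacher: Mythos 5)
Your setup is sound and coincides, up to relabelling, with how the paper opens its own proof: uniqueness of the triangle comes from Lemma~\ref{lem:outsideTriangle} together with Lemma~\ref{lem:cleanPairProperties}(a); the degree-four vertex cannot lie in $B_\gamma$; rule (r) of Lemma~\ref{lem:simpleforcing} yields the private neighbours $u_3,u_4$ of $w_3,w_4$; and $S_{2,2,2}$-freeness forces any third neighbour $u$ of a triangle-partner to be adjacent to one of $w_3,w_4,u_3,u_4$. All of that checks out.

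The problem is that you stop exactly where the proof begins. The concluding case analysis is not the ``book-keeping'' you describe, and none of the configurations you have reached at that point matches the hypotheses of any $\rho_i$ or $\tau_j$ directly: every reduction and transformation requires that its grey vertices have \emph{no neighbours outside the depicted subgraph}, and nothing you have established bounds the degrees of $u$, $u_3$, $u_4$ or of the second triangle-partner, nor pins down their colours. The paper needs well over a page of further deductions to supply this. For instance, when $u\sim u_3$ one must first use Lemma~\ref{lem:cleanPairProperties}(b) to show that $u_3$ is black, lies in no triangle, has degree at most three, and that $u$ and $w_3$ have no other neighbours; then split on whether the other triangle-partner also has a third neighbour, and in the hardest sub-case produce two further vertices $d,e$ (third neighbours of $u_3$ and $u_4$), prove $d\neq e$ and $d\not\sim e$ via rule (s), and only then apply $\tau_4$; the sub-case where the other partner has degree two instead requires a third neighbour $f$ of $u_3$ and rule (j) before $\tau_5$ applies, with $\rho_8$ handling $d_G(u_3)=2$. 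Likewise, when $u\sim w_3$ one must first prove, via rules ($\ell$) and (p) and Lemma~\ref{lem:samedifferent}, that $u$ and $w_3$ each belong to a triangle, and then distinguish whether these triangles coincide; the contradictions there come from $\rho_5$, $\rho_6$, $\rho_7$, $\tau_6$ and $\tau_7$, not only from the ``$\tau_6$--$\tau_9$ family'' you anticipate. Until this analysis is actually carried out, no contradiction with irreducibility has been derived, so the proof is incomplete in its essential part.
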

\vspace{-0.3cm}\begin{proof}
	Let $a$ be a vertex of degree four. It follows from Lemmas \ref{lem:outsideTriangle} and \ref{lem:cleanPairProperties} (a)
	that $a$ belongs to a exactly one triangle. Denote by $b,c$ the two other vertices of 
	this triangle. Let $w_1$ and $w_2$ be the two neighbors 
	of $a$ different from $b$ and $c$.
	
	Lemma \ref{lem:cleanPairProperties} (a) implies that $w_1, w_2$ are 
	nonadjacent to $b,c$ and that $w_1$ is nonadjacent to $w_2$. By Lemma \ref{lem:samedifferent} (ii)
	$w_1$ and $w_2$ have the same color. Moreover, there are 
	vertices $u_1$ and $u_2$ such that $N(u_1)\cap \{a,b,c,w_1,w_2,u_2\}=\{w_1\}$ and $N(u_2)\cap \{a,b,c,w_1,w_2,u_1\}=\{w_2\}$, 
	else Lemma \ref{lem:simpleforcing} (r) and \ref{lem:samedifferent} (ii) 
	would imply that $a\in B_{\gamma}$ and $\{w_1,w_2\}\subseteq W_{\gamma}$, and $(G,\gamma)$ would not be clean.
	
	Suppose to the contrary, that at least one of the vertices $b$ or $c$ has degree at least 3, say $b$. 
	Let $x\neq a,c$ be a third neighbor of $b$.
		\vspace{0.2cm}
	
	\noindent
\textbf{Case 1}. \textit{$(N(w_1)\cup N(w_2))\cap (N(b)\cup N(c))=\{a\})$}
	
	Since $G$ is $S_{2,2,2}$-free, $x$ must be adjacent to $u_1$ or $u_2$. 
	Lemma \ref{lem:cleanPairProperties} (b) implies that $x$ is adjacent exactly to
	one of them, say $u_1$, that $u_1$ is black and does not belong to a triangle, and that $x,w_1$ have no other neighbors.
	Hence, Lemma \ref{lem:outsideTriangle} implies that $d_G(u_1)\leq 3$ (otherwise $w_1,v\in W_{\gamma}$). 
	Suppose $b$ has a fourth neighbor $y\neq a,c,x$. Similarly to $x$,
	vertex $y$ is adjacent to exactly one of the vertices $u_1$ and $u_2$. 
	If $y$ is adjacent to $u_1$, then Lemma \ref{lem:samedifferent} (i) implies that $\{x,y\}\in W_{\gamma}$, while if $y$ is adjacent to $u_2$ then  
	Lemma \ref{lem:simpleforcing} (n) implies $w_1,w_2 \in W_{\gamma}$, a contradiction. Hence both $b$ and $c$ have at most 3 neighbors.
	
	Suppose $c$ also has a third neighbor $y\neq a,b$. 
	\begin{enumerate}
\vspace{-0.1cm}		\item \textit{Both $x$ and $y$ are adjacent to $u_1$}. By Lemma \ref{lem:cleanPairProperties} (b) $y$ has no more neighbors. This contradicts the irreducibility of $(G,\gamma)$ since $\rho_2$ can be applied.
		
\vspace{-0.1cm}		\item \textit{$x$ is adjacent to $u_1$ and $y$ is adjacent to $u_2$}.
		As for vertex $u_1$, Lemma \ref{lem:cleanPairProperties} (b) and \ref{lem:outsideTriangle} imply that $u_2\in B_{\gamma}$ and $d_G(u_2)\leq 3$.
		As $(G,\gamma)$ is clean, none of the vertices $a,b,c,x,y,w_1,w_2$ 
		is in $B_{\gamma}$. It then follows from Lemma \ref{lem:simpleforcing} (i) that $d_G(u_1)=3$, else $c\in B_{\gamma}$.
		Similarly, $d_G(u_2)=3$. So, let $d\neq x,w_1$ be a third neighbor of $u_1$, and let $e\neq y,w_2$ be a third neighbor of $u_2$.
		Since $u_1, u_2$ both belong to $B_{\gamma}$ and since $(G,\gamma)$ is irreducible, they have no common
		neighbors, and therefore $d$ and $e$ are different. Moreover, $d$ and $e$
		are not adjacent, else Lemma \ref{lem:simpleforcing} (s) implies $a\in B_{\gamma}$, a contradiction.
		But now $\tau_{4}$ can be applied, a contradiction.
\end{enumerate}
	\vspace{-0.2cm}Thus we may assume now that $d_G(c)=2$. Then $d_G(u_1) \neq 2$, else $(G,\gamma)$ is not irreducible since $\rho_8$ can be applied.
	Hence, $u_1$ has a third neighbor $f\neq x,w_1$, and $d_G(f) \leq 2$ by Lemma \ref{lem:cleanPairProperties} (b). Also, 
	Lemma \ref{lem:simpleforcing} (j) implies that $f$ is not adjacent to
	$w_2$, else $f\in W_{\gamma}$. But this contradicts the irreducibility of $(G,\gamma)$ since $\tau_{5}$ can be applied.
	\vspace{0.2cm}
	
	\noindent
	\textbf{Case 2}. \textit{$\mid (N(w_1)\cup N(w_2))\cap (N(b)\cup N(c))\mid \geq 2$}
	
	Assume, without loss of generality, that $x$ is adjacent to $w_1$. By 
	Lemma \ref{lem:cleanPairProperties} (c), we have $c\in B_{\gamma}$ and $d_G(c)=2$.
	First, we show that $x$ is nonadjacent to both $w_2,u_2$. 
	Lemma \ref{lem:cleanPairProperties} (b) excludes the case when $x$ is adjacent 
	to $u_2$, but is not adjacent to $w_2$. Therefore, suppose $x$ is adjacent to 
	$w_2$. In order to avoid a forbidden $S_{2,2,2}$ (induced by
	$x,b,c,w_1,u_1,w_2,u_2$), $x$ must be adjacent to $u_1$ or $u_2$. Lemma \ref{lem:outsideTriangle} implies 
	$d_G(x)\leq 4$ and $x$ is therefore adjacent to exactly one of 
	$u_1$ and $u_2$. By symmetry, we may assume without loss of generality that $x$ is adjacent to $u_1$. 
	By Lemma \ref{lem:cleanPairProperties} (c) $u_1$ has no neighbors different from
	$x,w_1$ and $u_1\in B_{\gamma}$.
	Note that $w_2$ must belong to a triangle, because otherwise
	Lemma \ref{lem:simpleforcing} (p) and Lemma \ref{lem:samedifferent} (ii) would imply $a \in W_{\gamma}$.
	Hence there is a vertex $u_2'$ adjacent to both $w_2$ and $u_2$.
By Lemma \ref{lem:outsideTriangle}, $w_2$ has no other neighbors than $a,x,u_2,u_2'$.
	Moreover, neither $u_2$, nor $u_2'$ has a neighbor outside 
	$\{a,b,c,x,w_1,u_1,w_2\}$. Indeed,
	if say $u_2'$ had such a neighbor $z$, then $\{w_2,u_2',z,x,u_1,a,c\}$ would
	induce an $S_{2,2,2}$. Also, it follows from Lemma \ref{lem:cleanPairProperties} (a) and (c) that at most one
 of $u_2,u_2'$  can be adjacent to $b$ or $w_1$. 
But then $\rho_5$ can be applied and  
	$(G,\gamma)$ is therefore not irreducible, a contradiction.
	
	Now Lemma \ref{lem:simpleforcing} ($\ell$) implies $d_G(x)\geq 3$, and it follows from Lemma \ref{lem:simpleforcing} (p) and 
	Lemma \ref{lem:samedifferent} (i) that $x$ belongs to a triangle $T_1$. Similarly, $w_1$ belongs 
	to a triangle $T_2$. 
		\vspace{-0.2cm}
\begin{enumerate}
		\item \textit{$T_1 \neq T_2$}. Lemma \ref{lem:cleanPairProperties} (a) implies that 
		the triangles have no common vertices. Moreover, by Lemma 
		\ref{lem:cleanPairProperties} (d) there are at most two edges between 
		$T_1$ and $T_2$ and if there are exactly two edges, then they are not adjacent. 
		Let $x,y,z$ be the vertices of $T_1$ and $w_1,u_1,u_1'$ be the vertices of $T_2$.
		Denote by $M$ the set of vertices $\{ a,b,c,x,y,z,w_1,u_1,u_1',w_2,u_2 \}$.

\vspace{-0.1cm}It follows from Lemma \ref{lem:outsideTriangle} that $x$ and $w_1$ have no neighbors outside $M$.
Also, $u_1$ and $u_1'$ have no neighbors outside $M$. Indeed, if say $u_1$ had such a neighbor $r$, then Lemmas 
\ref{lem:cleanPairProperties} (a) and (c) imply that $w_1,a,c,u_1,r,x$ together with $y$ or $z$ induce a $S_{2,2,2}$, a contradiction. 
Moreover, it follows from Lemmas \ref{lem:cleanPairProperties} (a) and (b) that $y$ and $z$ are nonadjacent to both $b,w_2$. 
It then follows from Lemma \ref{lem:cleanPairProperties} (a) and (c) that $\rho_{7}$ can be applied.
Indeed, if $N(u_1')\cap \{b,w_2\}\neq \emptyset$, then $d_G(u_1)=2$ and $u_1'$ is adjacent to at most one of $y,z$, while if
$N(u_1')\cap \{b,w_2\}= \emptyset$, then at most one of $u_1,u_1'$ is adjacent to at most one of $y,z$. 
Hence $(G,\gamma)$ is not irreducible, a contradiction.
		
\vspace{-0.1cm}		\item \textit{$T_1 = T_2$}. Let $x,w_1,s$ be the vertices of $T_1$ (where $s$
		may coincide with $u_1$). Lemma \ref{lem:cleanPairProperties} (a) and (c) implies that $d_G(s)=2$ and $s\in B_{\gamma}$.
		Now $d_G(b)>3$, else $\tau_{7}$ can be applied and $(G,\gamma)$ is not irreducible. Let $y\neq a,c,x$ be the fourth neighbor of $b$. 
Then Lemma \ref{lem:cleanPairProperties} (a) implies that $y$ is not adjacent to $x$. Also, $y$ is adjacent to $w_1$ or $w_2$. 
Indeed, if $y$ is nonadjacent to $w_1,w_2$, then since $G$ is $S_{2,2,2}$-free, it must
		be adjacent to $u_1$ or $u_2$. Lemma \ref{lem:cleanPairProperties} (b) implies
		that $y$ cannot be adjacent to $u_1$. Hence, $y$ is adjacent to
		$u_2$. It follows from Lemma \ref{lem:cleanPairProperties} (b) that $d_G(y)=d_G(w_2)=2$ and $u_2\in B_{\gamma}$. Furthermore, if $d_G(u_2)\geq 3$, say $u_2$ has a neighbor $t\neq y,w_2$, then it follows from Lemma \ref{lem:simpleforcing} (t) that $t$ must be white, a contradiction. Hence, $d_G(u_2)=2$ but this contradicts the irreducibility of $(G,\gamma)$ since $\rho_6$ can be applied.

\vspace{-0.1cm}		If now $y$ is adjacent to $w_1$, then, similarly to $x$, we conclude that $y$
		and $w_1$ belong to a same triangle, which is impossible by
		Lemma \ref{lem:cleanPairProperties} (a). Hence $y$ is adjacent to $w_2$, and similarly to $x$, we know that
		$y$ and $w_2$ have a common neighbor $p$ (possibly equal to $u_2$). Furthermore, $d_G(p)=2$ by Lemma \ref{lem:cleanPairProperties} (c). But then $\tau_{6}$ can be applied, which contradicts the irreducibility of $(G,\gamma)$.
		
	\end{enumerate}
\vspace{-0.8cm}\end{proof}

\vspace{0.3cm}In the remainder of the paper  $\mathcal{T}$ will denote the subset of vertices that belong to a triangle in $G$.
\begin{lemma}
\label{lem:degree3}
Let $(G,\gamma)$ be an irreducible pair where $G=(V,E)$ is $S_{2,2,2}$-free. 
Let $P$ be an induced path in $G$ with edge set $\{v_1v_2,v_2v_3,\ldots,v_{\ell-1}v_{\ell}\}$ ($\ell>1$) and 
with $d_G(v_1)\geq 3$, $d_G(v_{\ell})\geq 3$ and $d_G(v_i)=2$ for $i=2,\ldots,\ell-1$. We have either
\begin{itemize}
\vspace{-0.3cm} \item[(1)] $\ell=2$ and both $v_1,v_{\ell}$ belong to $\mathcal{T}$, or
\vspace{-0.3cm} \item[(2)] $\ell=3$ and exactly one of $v_1,v_{\ell}$ belongs to $\mathcal{T}$, or
\vspace{-0.3cm} \item[(3)] $\ell=4$ and none of $v_1,v_{\ell}$ belongs to $\mathcal{T}$.
\end{itemize}
\end{lemma}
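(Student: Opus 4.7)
I would prove the lemma by ruling out each configuration outside the three listed cases, showing that any other situation would force either an induced $S_{2,2,2}$ in $G$ (contradicting $S_{2,2,2}$-freeness), a still-applicable forcing rule (contradicting cleanness of $(G,\gamma)$), or an applicable reduction/transformation from Sections~\ref{sec:REDUCTION}--\ref{sec:TRANSFORMATION} (contradicting irreducibility).

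First I would establish $\ell \leq 4$. If $\ell \geq 5$, then the subpath $v_1 v_2 v_3 v_4 v_5$ is an induced $P_5$ with $d_G(v_2)=d_G(v_3)=d_G(v_4)=2$ and $d_G(v_1)\geq 3$. This matches precisely the configuration on which transformation $\tau_9$ (Lemma~\ref{lem:T9}) operates, so $(G,\gamma)$ would not be irreducible, a contradiction.

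Next, I would rule out the remaining configurations by a case analysis on $\ell \in \{2,3,4\}$. In each case I assume for contradiction that the triangle-membership condition fails at some endpoint, and I construct an induced $S_{2,2,2}$ centered either at an endpoint or at a triangle-partner vertex. The key tools are: Lemma~\ref{lem:outsideTriangle} (if $v_1 \notin \mathcal{T}$ then $d_G(v_1)=3$ with pairwise non-adjacent neighbors); Lemma~\ref{lem:insideTriangle} (if $d_G(v_1)=4$ then both triangle partners of $v_1$ have degree 2); and Lemma~\ref{lem:cleanPairProperties}(a) (at most one triangle through any vertex). For instance, in the case $\ell=2$ with $v_1 \notin \mathcal{T}$ one has $d_G(v_1)=3$ with non-path neighbors $a,b$ pairwise non-adjacent to $v_2$, and since $d_G(v_2)\geq 3$ there is a third neighbor $v_2'$ of $v_2$; compatible extensions $a',b'$ of $a,b$ then exist (otherwise a forcing rule or a reduction/transformation from Sections~\ref{sec:DIM}--\ref{sec:TRANSFORMATION} would apply), yielding the $S_{2,2,2}$ centered at $v_1$ with branches $v_1 v_2 v_2'$, $v_1 a a'$, $v_1 b b'$. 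The cases $\ell=3$ (both or neither endpoint in $\mathcal{T}$) and $\ell=4$ (at least one endpoint in $\mathcal{T}$) proceed in the same spirit, combining the path with extensions taken from the endpoints' extra neighbors or triangle partners.

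The main obstacle is verifying, in each sub-case, that the constructed $S_{2,2,2}$ is truly induced: its seven vertices must be distinct and no chord may exist among them. This verification leans on Lemma~\ref{lem:cleanPairProperties}(a), the degree bounds from Lemmas~\ref{lem:outsideTriangle} and~\ref{lem:insideTriangle}, the degree-2 constraint on internal path vertices, and the induced-path property (which rules out short-cut edges between non-consecutive $v_i$'s). The most delicate sub-case is likely $\ell=4$ with $v_1 \in \mathcal{T}$: there the triangle at $v_1$ must be combined with the length-3 path and a branch at $v_4$, requiring a further split according to whether $d_G(v_1)=3$ or $d_G(v_1)=4$, and in the latter case using the two degree-2 triangle partners of $v_1$ together with $v_4$'s neighborhood to build the $S_{2,2,2}$.
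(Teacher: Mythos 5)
Your first step ($\ell\le 4$ via $\tau_9$) is exactly the paper's, and is fine. The rest of the proposal has a genuine gap: your principal mechanism for excluding the bad configurations is to build an induced $S_{2,2,2}$, but the configurations you need to exclude are \emph{not} structurally forbidden in $S_{2,2,2}$-free graphs, so no such construction can exist in general. For example, in your $\ell=2$ sub-case ($v_1\notin\mathcal{T}$, $d_G(v_1)=3$ with independent neighborhood $\{a,b,v_2\}$), nothing prevents $a$ and $b$ from sharing their only other neighbor, or from having their extensions $a',b'$ adjacent to each other or to $v_2$; a six- or seven-vertex $S_{2,2,2}$-free graph realizing this configuration is easy to write down. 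The same applies to $\ell=3$ with neither endpoint in $\mathcal{T}$, and to $\ell=4$ with an endpoint in $\mathcal{T}$. These cases are excluded only because of the \emph{coloring}: in a clean pair $W_\gamma=\emptyset$ and the vertices of $B_\gamma$ are pairwise at distance at least~$3$, and the contradiction comes from forcing rules placing two black vertices at distance at most~$2$, or forcing a white vertex.

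Concretely, the paper's proof runs entirely on Lemma~\ref{lem:simpleforcing}\,(p) (an endpoint of degree $\ge 3$ not in $\mathcal{T}$ has three isolated vertices in its neighborhood, hence is forced black) and Lemma~\ref{lem:simpleforcing}\,(g) (a triangle with a pendant $P_2$ through a degree-two vertex forces the far endpoint black), combined with Lemma~\ref{lem:samedifferent}\,(ii). For $\ell=2$: if both endpoints avoid $\mathcal{T}$, rule (p) makes them two adjacent black vertices; if exactly one avoids $\mathcal{T}$, rules (p) and Lemma~\ref{lem:samedifferent}\,(ii) force one endpoint black and the other white. For $\ell=3$: an endpoint not in $B_\gamma$ must lie in $\mathcal{T}$ by (p), and then (g) forces the other endpoint black, so they cannot both be in $\mathcal{T}$ nor both outside. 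For $\ell=4$ with $v_1\in\mathcal{T}$: (g) forces $v_3$ black, and then either (g) forces $v_2$ black (if $v_4\in\mathcal{T}$) or (p) forces $v_4$ black, giving adjacent black vertices. You do list ``a still-applicable forcing rule'' as a possible source of contradiction, but you never identify which rules do the work, and the one sub-case you detail pursues the $S_{2,2,2}$ construction, which fails. Without the forcing-rule argument the lemma is not proved.
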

\vspace{-0.3cm}\begin{proof}
We necessarily have $\ell\leq 4$, otherwise $\tau_9$ can be applied and hence $(G,\gamma)$ would not be irreducible.
\begin{itemize}
\vspace{-0.3cm}\item If $\ell=2$ then at least one of $v_1,v_{2}$ belongs to $\mathcal{T}$ else Lemma \ref{lem:simpleforcing} (p) implies that the two adjacent vertices $v_1,v_{2}$ belong to $B_{\gamma}$. If only one of $v_1,v_{2}$ belongs to $\mathcal{T}$, say $v_1$, then Lemma \ref{lem:simpleforcing} (p) and Lemma \ref{lem:samedifferent} (ii) imply that $v_{2}\in B_{\gamma}$ and $v_1\in W_{\gamma}$, a contradiction.
\vspace{-0.3cm}\item Suppose $\ell=3$. Since $v_1$ and $v_{3}$ are at distance 2, at least one of  $v_1,v_{3}$ does not belong to $B_{\gamma}$, say $v_1$. Then  Lemma \ref{lem:simpleforcing} (g) and (p) imply that $v_1\in \mathcal{T}$ and $v_{3}\notin \mathcal{T}$.  
\vspace{-0.3cm}\item Suppose $\ell=4$  and, without loss of generality, assume $v_1\in \mathcal{T}$. Then Lemma \ref{lem:simpleforcing} (g) 
implies $v_3\in B_{\gamma}$. Now, either  $v_{4}\in \mathcal{T}$ and  Lemma \ref{lem:simpleforcing} (g) implies that $v_2\in B_{\gamma}$, or $v_{4}\notin \mathcal{T}$ and Lemma \ref{lem:simpleforcing} (p) implies that $v_{4} \in B_{\gamma}$. Hence $G$ contains two adjacent black vertices, a contradiction.
\end{itemize}
\vspace{-0.4cm}\end{proof}

\vspace{0.3cm}\begin{lemma}
\label{lem:Cp+1tree} Let $(G,\gamma)$ be an irreducible pair where $G$ is $S_{2,2,2}$-free. 
If $G[V\setminus \mathcal{T}]$ contains an induced cycle, then $\gamma$ is not completable. 
\end{lemma}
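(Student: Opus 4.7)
The approach is proof by contradiction. Assume $\gamma$ admits a feasible completion $\bar\gamma$ and let $C=v_1v_2\cdots v_kv_1$ be an induced cycle of $G[V\setminus\mathcal{T}]$. Since every $v\in C$ lies outside $\mathcal{T}$, the neighborhood $N_G(v)$ is independent, and by Lemma~\ref{lem:outsideTriangle} we have $d_G(v)\in\{2,3\}$. Lemma~\ref{lem:simpleforcing}(p) then forces every degree-$3$ vertex of $C$ to be black in $\bar\gamma$. Applying Lemma~\ref{lem:degree3} to the subpath of $C$ between two consecutive degree-$3$ vertices, whose endpoints lie outside $\mathcal{T}$, only case~(3) is compatible, so this subpath contains exactly two degree-$2$ interior vertices. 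Writing $m$ for the number of degree-$3$ vertices on $C$, either $m=0$ or $k=3m$.

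For the case $m\geq 1$, label the degree-$3$ vertices cyclically as $u_1,\ldots,u_m$, denote the two degree-$2$ vertices of the segment from $u_j$ to $u_{j+1}$ by $p_j$ (adjacent to $u_j$) and $q_j$ (adjacent to $u_{j+1}$), and let $x_j$ be the neighbor of $u_j$ off $C$. Since $p_j,q_j$ have all their $G$-neighbors on $C$, feasibility in each segment forces exactly one of $p_j,q_j$ to be black. Imposing that every $u_j$ has its unique black partner in $\{p_j,q_{j-1},x_j\}$ and propagating the resulting constraint cyclically yields that either every $u_j$ pairs with $p_j$ or every $u_j$ pairs with $q_{j-1}$; in either subcase every $x_j$ is white. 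Moreover, since $(G,\gamma)$ is clean and $u_1,\ldots,u_m\in B_\gamma$ by Lemma~\ref{lem:simpleforcing}(p), no vertex of $G$ can be adjacent to two distinct $u_j$'s (else two vertices of $B_\gamma$ would be at distance $\leq 2$). Hence, if $d_G(x_j)\geq 2$ for some $j$ and $y\neq u_j$ is a neighbor of $x_j$, then $y$ lies off $C$ and is non-adjacent to all of $u_j,p_j,q_j,q_{j-1},p_{j-1}$ (each of degree $\leq 3$ with known neighborhood); the seven vertices $\{u_j,p_j,q_j,q_{j-1},p_{j-1},x_j,y\}$ then induce an $S_{2,2,2}$, contradicting $S_{2,2,2}$-freeness. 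The remaining subcase $d_G(x_j)=1$ for all $j$ yields a cycle-with-pendants configuration to which a transformation of the form $\tau_8$ (whose applicability is conditioned on the presence of a degree-$1$ vertex) applies, contradicting irreducibility.

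For the case $m=0$, every vertex of $C$ has degree $2$ in $G$, so $C$ is a full connected component. Feasibility on $C$ forces the cyclic pattern \textbf{BBW}, which closes only when $k\equiv 0\pmod 3$, so $k\in\{4,5,7,8,\ldots\}$ is ruled out immediately (with $k=4$ also handled by Lemma~\ref{lem:simpleforcing}($\ell$)). The main obstacle of the proof is the remaining subcase $k\geq 6$ with $k\equiv 0\pmod 3$: here $C$ does admit a feasible $2$-coloring in isolation, so the contradiction cannot come from feasibility alone and must rely on irreducibility, by selecting a suitable subpath of $C$ of length $4$ and invoking a path-shortening transformation in the spirit of $\tau_9$. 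This subcase, together with the pendant subcase of $m\geq 1$ handled via $\tau_8$, constitutes the technically delicate heart of the argument; the $S_{2,2,2}$ construction for $m\geq 1$ with some $x_j$ of degree at least $2$ is by contrast the clean, immediate source of contradiction.
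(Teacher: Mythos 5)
Your skeleton matches the paper's: force the degree-three cycle vertices black via Lemma~\ref{lem:simpleforcing}~(p), pin down the local structure of $C$, and derive the contradiction either from an induced $S_{2,2,2}$ or from the applicability of $\tau_8$/$\tau_9$. Your seven-vertex set $\{u_j,p_j,q_j,q_{j-1},p_{j-1},x_j,y\}$ is exactly the paper's $S_{2,2,2}$ (its $\{v_1,v_2,v_3,v_{k-1},v_k,w,x\}$), and that part of your argument is sound. However, there are genuine gaps elsewhere. The most serious one is the subcase where every off-cycle neighbor $x_j$ has degree one: you assert that ``a transformation of the form $\tau_8$ applies,'' but this is not automatic. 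The paper's own proof must first exclude $k=6$: for a hexagon with two antipodal degree-three vertices carrying pendants, the relevant tool is not $\tau_8$ but forcing rule (q) of Lemma~\ref{lem:simpleforcing} (the configuration is $F_8$, rule (q) colors the two pendants white, and cleanness is violated); only for $k>6$ does the paper invoke $\tau_8$, after which the side conditions of $\tau_8$ (the grey vertices having no neighbors outside the pattern) can be met. Your proposal neither makes this case split nor checks $\tau_8$'s hypotheses.

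Two further points. First, the subcase $m=0$, $k\equiv 0\pmod 3$, $k\geq 6$ is identified but not proved: you correctly name a $\tau_9$-type shortening as the tool, but you stop at calling it ``the technically delicate heart'' instead of verifying that any five consecutive vertices of a degree-two cycle component of length at least six satisfy $\tau_9$'s hypotheses. (The paper disposes of this in one line by first bounding $k\leq 5$ via $\tau_9$, then killing $k=4$ with rule ($\ell$) and $k=5$ by the infeasibility of $C_5$; your order of steps is fine, but the step must actually be carried out.) Second, your conclusion ``either $m=0$ or $k=3m$'' rests on applying Lemma~\ref{lem:degree3} to the segment of $C$ between \emph{two} consecutive degree-three vertices; when $m=1$ there is no such induced path (the segment closes into the whole cycle), so $m=1$ escapes your analysis and needs separate treatment (e.g.\ propagation of rule (f) around the cycle, or $\tau_9$). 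Finally, the paragraph deducing that every $x_j$ is white is correct but unused: your contradiction in the $m\geq 1$ case is purely structural and never appeals to it.
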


\vspace{-0.2cm}\begin{proof} 
Let $C$ be an induced cycle in $G[V\setminus \mathcal{T}]$ with edge set $\{v_1v_2,\cdots,v_{k-1}v_k,v_kv_1\}$. 
Note that $k\geq 4$ since $G[V\setminus \mathcal{T}]$ contains no triangle. If $C$ is a connected component of $G$, then $k\leq 5$, otherwise $\tau_9$ can be applied and $(G,\gamma)$ would not be irreducible. It follows from Lemma \ref{lem:simpleforcing} ($\ell$) that $k\neq 4$. Hence, $k=5$ and it is then easy to observe that $C$ (and thus $G$) does not admit any feasible complete coloring.

So assume $C$ contains at least one vertex which has degree 3 in $G$, say $v_1$. We know from Lemma \ref{lem:degree3} that no vertex in $C$ has a neighbor 
in $G[\mathcal{T}]$. Hence $v_1$ has a neighbor $w\neq v_2,v_{k}$ in $G[V\setminus \mathcal{T}]$. It then follows from 
Lemma \ref{lem:simpleforcing} (p) that $v_1\in B_{\gamma}$. Also, $w$ has no other neighbor on $C$ else this neighbor would also belong to $B_{\gamma}$ 
and would be at distance 2 from $v_1$. Also, $k\geq 5$, else Lemma \ref{lem:samedifferent} (i) would imply that $v_2,v_4 \in W_{\gamma}$.
Now, Lemma \ref{lem:simpleforcing} (p) implies $d_G(v_2)=d_G(v_3)=d_G(v_{k-1})=d_G(v_k)=2$, else $B_{\gamma}$ would contain two vertices at distance at most 2. 
Since $v_3\notin B_{\gamma}$, we know from Lemma \ref{lem:simpleforcing} (f) that $k\geq 6$. Hence $d_G(w)=1$, otherwise $w$ would 
have a second neighbor $x\neq v_1$ and vertices $v_1,v_2,v_3,v_{k-1},v_k,w,x$ would induce an $S_{2,2,2}$ in $G$. 
Also, $v_4$ has a neighbor $w'\neq v_3,v_5$ in $G[V\setminus \mathcal{T}]$, else $\tau_9$ can be applied and $(G,\gamma)$ would not be irreducible.
Using the same arguments as for $w$, we obtain that $d_G(w')=1$. Now $k>6$ by Lemma \ref{lem:simpleforcing} (q). 
But then $\tau_{8}$ can be applied, which contradicts the irreducibility of $(G,\gamma)$. 
\vspace{-0.8cm}\end{proof}

\vspace{0.5cm}\begin{lemma}
\label{lem:Cp+1claw}  Let $(G,\gamma)$ be an irreducible pair where $G$ is $S_{2,2,2}$-free. 
Then every connected component of $G[V\setminus \mathcal{T}]$ is a claw whose center has exactly one neighbor of degree 1 in $G$. 
\end{lemma}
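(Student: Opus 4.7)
By Lemma \ref{lem:Cp+1tree}, every connected component $C$ of $G[V\setminus\mathcal{T}]$ is a tree, and by Lemma \ref{lem:outsideTriangle} every vertex of $C$ has $d_G\le 3$. The opening observation is this: if $v\in V\setminus\mathcal{T}$ has $d_G(v)=3$, then $v$ lies in no triangle and so its three neighbors are pairwise non-adjacent by Lemma \ref{lem:cleanPairProperties}(a); Lemma \ref{lem:simpleforcing}(p) then forces $v\in B_\gamma$, while if $v$ had a neighbor $t\in\mathcal{T}$, the triangle containing $t$ together with Lemma \ref{lem:samedifferent}(ii) would force $t\in W_\gamma$, contradicting $W_\gamma=\emptyset$. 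Consequently every degree-$3$ vertex of $C$ lies in $B_\gamma$ and has all three of its $G$-neighbors inside $C$ itself.

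The plan is then to establish, in order, four claims: (i) $C$ contains a vertex of degree $3$ in $G$; (ii) $C$ contains at most one such vertex; (iii) consequently $C$ equals $K_{1,3}$; and (iv) exactly one leaf of this $K_{1,3}$ has $d_G=1$. For (i), assuming all vertices of $C$ have $d_G\le 2$ makes $C$ a path (possibly a single vertex), with any internal vertex of $d_G=2$ having no $\mathcal{T}$-neighbor; handling the small cases via Lemma \ref{lem:simpleforcing}(a) and Lemma \ref{lem:samedifferent}(ii)---a degree-$1$ vertex whose neighbor lies in $\mathcal{T}$ forces that triangle neighbor black and then its partner in the triangle white, putting $W_\gamma\ne\emptyset$---and longer configurations via the applicability of some $\rho_i$ or $\tau_j$ (in the spirit of how $\tau_8,\tau_9$ were used in the proof of Lemma \ref{lem:Cp+1tree}), yields a contradiction with irreducibility in every case.

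For (ii), two degree-$3$ vertices $u,v\in C$ both lie in $B_\gamma$, hence at distance $\ge 3$; all internal vertices of the unique tree-path between them are within distance $2$ of $u$ or $v$, hence not in $B_\gamma$, hence have $d_G=2$ by the opening observation. Lemma \ref{lem:degree3}(3) then pins this path to exactly four vertices, and the branches at $u$ and $v$ create a substructure on which some reduction or transformation from Sections \ref{sec:REDUCTION}--\ref{sec:TRANSFORMATION} applies, contradicting irreducibility. For (iii), let $c$ be the unique degree-$3$ vertex; its three $C$-neighbors $a_1,a_2,a_3$ each satisfy $d_G\le 2$, and if some $a_i$ had an additional $C$-neighbor $b_i$, extending to a $P_5$ through $c$ (or appending further branches of the tree) either exhibits an induced $S_{2,2,2}$ in $G$ or permits an eliminating transformation; hence each $a_i$ is a leaf of $C$ and $C=K_{1,3}$.

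Finally for (iv), Lemma \ref{lem:simpleforcing}(e) caps the number of $d_G=1$ leaves of $c$ at one (two would allow forcing white and violate cleanness with $W_\gamma=\emptyset$); in the other direction, if all three $a_i$ had $d_G=2$ then each would have a $\mathcal{T}$-neighbor $t_i$ inside some triangle $T_i$, and a case analysis on how the $T_i$'s interact (distinct triangles, shared vertices, or extra edges between them) uncovers in each sub-case an applicable reduction $\rho_i$ from Section \ref{sec:REDUCTION}, contradicting irreducibility. The most delicate step I expect is (iv) in its ``at least one degree-$1$ leaf'' direction, since the sub-case analysis must track how the triangles at the $t_i$'s can be configured relative to $c$; step (ii) is the second principal obstacle, as one must carefully enumerate the two pendants of $u$ and of $v$ to identify which transformation collapses the $8$-vertex substructure.
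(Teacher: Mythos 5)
Your skeleton matches the paper's (the paper proves: the neighbor of any leaf of a component $H$ is a claw center; a claw center has exactly one degree-$1$ neighbor; then a longest-path argument with $\tau_8$ forces $H=K_{1,3}$), and your opening observation about degree-$3$ vertices of $G[V\setminus\mathcal{T}]$ lying in $B_\gamma$ with all neighbors in their component is correct. But the proposal has genuine gaps at every point where the actual work happens. In steps (i), (ii) and (iii) you repeatedly write that ``some $\rho_i$ or $\tau_j$ applies'' without naming which one or verifying its preconditions; this is precisely the content of the lemma. The most serious instance is step (ii): the transformation that eliminates the two-claw-centers configuration is $\tau_8$, whose hypothesis is $d_G(z_1)=1$, i.e.\ it needs a pendant vertex at each of the two centers. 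That pendant is supplied by your step (iv), which you have deferred to the end --- so as ordered, your argument for (ii) cannot be completed. The paper proves its Claim~2 (exactly one degree-$1$ neighbor of each claw center) \emph{before} the path argument for exactly this reason. Separately, your assertion in (ii) that all internal vertices of the tree-path between two degree-$3$ vertices are within distance $2$ of an endpoint is false for paths on $7$ or more vertices; you would need to pick a closest pair of degree-$3$ vertices to make the internal vertices degree $2$ and then invoke Lemma~\ref{lem:degree3}, and even then the contradiction still rests on the unidentified transformation.

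Step (iv) in the ``at least one degree-$1$ leaf'' direction also mischaracterizes the argument. In the paper, when all three neighbors $a_1,a_2,a_3$ of the center have degree $2$ with second neighbors $w_1,w_2,w_3$, only the sub-case where $w_1,w_2,w_3$ induce a triangle is killed by a reduction ($\rho_2$); the remaining sub-cases are resolved by $S_{2,2,2}$-freeness forcing two of the $w_i$ to be adjacent, the tree structure of the component forcing at least one of them into $\mathcal{T}$, and then forcing rules (Lemma~\ref{lem:simpleforcing} (g), (i), (j)) together with Lemma~\ref{lem:insideTriangle} producing a vertex of $W_\gamma$, contradicting cleanness --- not ``an applicable reduction $\rho_i$ in each sub-case.'' Finally, step (i) (handling the case where the component is a path) needs the specific chain of rules the paper uses in its Claim~1 ($\rho_1$ when the far endpoint has degree $1$, $\tau_9$ when it has degree $2$, and Lemma~\ref{lem:simpleforcing} (a), (g), (p) to exclude path lengths $3$ and $4$); ``in the spirit of'' the earlier proofs is not a substitute. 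As it stands the proposal is a plausible plan, not a proof.
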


\vspace{-0.3cm}\begin{proof} Let $H$ be a connected component of $G[V\setminus \mathcal{T}]$. By Lemma \ref{lem:Cp+1tree}, 
$H$ is a tree. If $H$ contains only one vertex, then it follows from Lemmas \ref{lem:simpleforcing} (a) and \ref{lem:samedifferent} (ii) that $u\in W_{\gamma}$, a contradiction. So $H$ contains at least two vertices. Let $u$ be a vertex in $H$. If $d_H(u)=0$, it follows from Lemma \ref{lem:degree3} that $u$ has at most one neighbor in $\mathcal{T}$. If $1\leq d_H(u)\leq 3$, then it follows from Lemmas \ref{lem:simpleforcing} (p) and Lemma \ref{lem:samedifferent} (i) that $u$ has no neighbor in $\mathcal{T}$.

\vspace{0.3cm}\noindent Claim 1. {\it If $d_H(u)=1$ for some vertex $u$ of $H$, then its neighbor in $H$ is the center of a claw.}  

 Let $v$ be the neighbor of $u$ in $H$. We first prove that $v\in B_{\gamma}$. If $d_G(u)=1$ or $d_G(v)=3$, then $v\in B_{\gamma}$ 
by Lemma \ref{lem:simpleforcing} (a) and (p). We show that the other cases are impossible. Assume $d_G(u)=2$ and $d_G(v)\leq 2$. 
Then $d_G(v)=2$, else it follows from Lemma \ref{lem:simpleforcing} (a) and (g) that $u$ and $v$ are two adjacent vertices in $B_{\gamma}$. 
Let $w\neq u$ be the second neighbor of $v$. We know from Lemma \ref{lem:degree3} that $d_G(w)< 3$. 
Now, $\rho_1$ (if $d_G(w)=1$) or $\tau_9$ (if $d_G(w)=2$) can be applied, which contradicts the irreducibility of $(G,\gamma)$.

 Let $P$ be a path in $H$ with edge set $\{v_1v_2,\cdots,v_{k-1}v_k\}$, $u=v_1$, $d_G(v_k)\neq 2$ and $d_G(v_i)=2$, $i=2,\cdots,k-1$. 
Since $(G,\gamma)$ is irreducible, we have $k\leq 4$, otherwise $\tau_9$ can be applied. 
Also, $\{v_1,v_3,v_4\}\cap B_{\gamma}=\emptyset$ since $v=v_2\in B_{\gamma}$.
If $k=2$, then $d_G(v_2)\neq 1$ otherwise $v_1\in B_{\gamma}$ and thus we would have two adjacent vertices in $B_{\gamma}$. Hence, $d_G(v_2)=3$ and so $v_2$ is the center of a claw.
We finally show that $k$ cannot be equal to 3 or 4.  
If $k=4$ then Lemma \ref{lem:simpleforcing} (a) and (p) imply $v_3\in B_{\gamma}$ (if $d_G(v_4)=1$) or $v_4\in B_{\gamma}$ (if $d_G(v_4)=3$), a contradiction. 
If $k=3$ then $d_G(v_3)=1$, else Lemma \ref{lem:simpleforcing} (p) implies $v_3\in B_{\gamma}$. But now Lemma \ref{lem:simpleforcing} (a) implies that $v_3\in B_{\gamma}$ and hence $(G,\gamma)$ is not irreducible, and Claim 1 is proven.

\vspace{0.3cm}

 \noindent Claim 2. {\it If $v$ is the center of a claw in $H$, then exactly one of its neighbors has degree 1 in $G$.}

 It follows from Lemmas \ref{lem:degree3} and \ref{lem:simpleforcing} (p) that $v\in B_{\gamma}$ and no neighbor of $v$ 
can be in $G[\mathcal{T}]$ or a center of a claw. Hence, all neighbors of $v$ have degree at most 2 in $G$ and none of them belongs to $B_{\gamma}$. 
We know from 
Lemma \ref{lem:simpleforcing} (e) that at most one neighbor of $v$ can have degree 1 in $G$. So assume, by contradiction, 
that the three neighbors $u_1,u_2,u_3$ of $v$ are of degree 2 in $G$, and let $w_1,w_2,w_3$ be their respective second neighbor. 
Note that $w_1,w_2,w_3$ are all distinct by Lemma \ref{lem:samedifferent} (i). If $w_1,w_2,w_3$ induce a triangle, 
then $\rho_2$ can be applied, and $(G,\gamma)$ is therefore not irreducible, a contradiction.
Since $G$ is $S_{2,2,2}$-free, at least two of $w_1,w_2,w_3$, say $w_1,w_2$, are adjacent. Since $H$ is a tree, at least one of 
$w_1,w_2$ belongs to $\mathcal{T}$, say $w_1$. If $w_2\notin \mathcal{T}$, then $d_G(w_2)=2$ (otherwise $w_2\in B_{\gamma}, w_1\in W_{\gamma}$) and Lemma \ref{lem:simpleforcing} (g) 
implies $u_2\in B_{\gamma}$, a contradiction. So $w_2\in \mathcal{T}$. If $w_1,w_2$ belong to distinct triangles, then it follows from Lemma \ref{lem:simpleforcing} (j) that $u_3\in W_{\gamma}$. Hence, $\mathcal{T}$ contains a vertex $y$ adjacent to $w_1$ 
and $w_2$. It then follows from Lemma \ref{lem:insideTriangle} that $d_G(w_1)=d_G(w_2)=3$. 
Hence, Lemma \ref{lem:simpleforcing} (i) implies that one of $u_1,u_2$ belongs to $W_{\gamma}$, a contradiction, and Claim 2 is proven.

\vspace{0.3cm} Let $\{v_1v_2,\cdots,v_{k-1}v_k\}$ be the edge set of a longest path in $H$. It follows from Claim 1 that $k\geq 3$ and 
$v_2$ and $v_{k-1}$ are centers of a claw. If $k>3$, then we know from Lemma \ref{lem:degree3} that $k\geq 6$ and that $v_5$ is the center 
of a claw. Let $v_2'\neq v_1,v_3$ and $v_5'\neq v_4,v_6$ denote the third neighbors of $v_2$ and $v_5$, respectively. We know from Claim 2 
that one of $v_1,v_2'$ and one of $v_6,v_5'$ 
has degree one in $G$, which means that $\tau_{8}$ can be applied, and $(G,\gamma)$ is therefore not irreducible, 
a contradiction. So $k=3$ and $H$ is a claw.
\end{proof} 


\section{Dominating induced matching in $S_{2,2,2}$-free graphs}
\label{sec:solution}

The main results of the previous sections can be summarized as follows. Let $G$ be an $S_{2,2,2}$-free graph and let $(G, \gamma)$ be an irreducible pair. Then

\begin{itemize}
\vspace{-0.3cm}		\item every vertex in $\mathcal{T} \cap B_{\gamma}$ has degree two;
\vspace{-0.3cm}		\item every maximal clique in $G$ has two or three vertices;
\vspace{-0.3cm}		\item every vertex of $G$ belongs to at most one triangle;
\vspace{-0.3cm}		\item every vertex of $G$ has degree at most four;
\vspace{-0.3cm}		\item every vertex $r$ of degree four belongs to a triangle $T$, in which
		the other two vertices $p$ and $q$ have degree two
		(see Figure \ref{fig:clawlink} (a));
\vspace{-0.3cm}		\item every connected component of $G[V\setminus \mathcal{T}]$ is a claw.
		Moreover, if $x,a_1,a_2,a_3$ are vertices of a connected component $C$ of
		$G[V\setminus \mathcal{T}]$, such that $d_C(x)=3$, then 
		$d_G(x)=3$, $d_G(a_1)=d_G(a_3)=2$, $d_G(a_2)=1$ and the neighbors of
		$a_1$ and $a_3$ different from $x$ belong to different triangles in $G$
		(see Figure \ref{fig:clawlink} (b)).
	\end{itemize}

\begin{figure}
\begin{center}
\scalebox{0.16}{\includegraphics{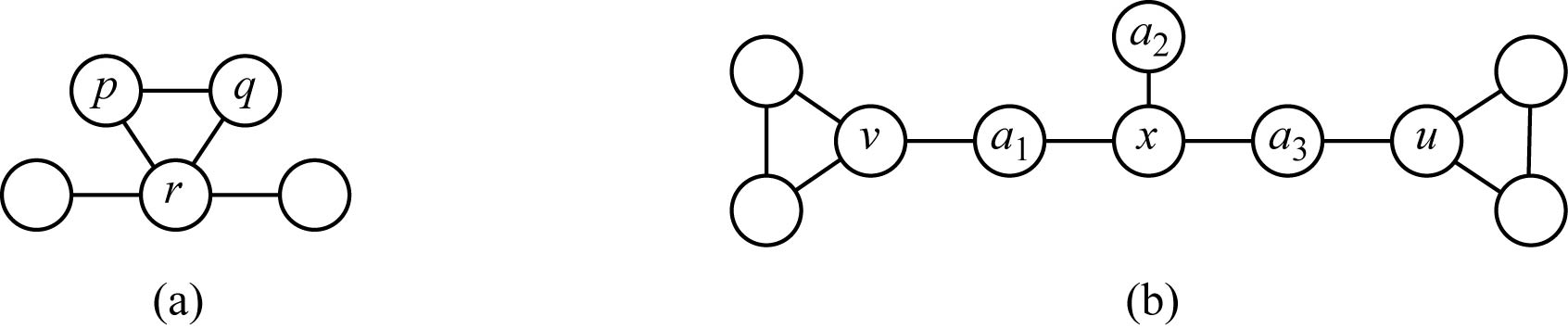}}\caption{Irreducible graph structures.}
\label{fig:clawlink}
\end{center}
\end{figure}

Let $(G, \gamma)$ be an irreducible pair.
Let $C^1, \ldots, C^k$ be connected components of $G[V\setminus \mathcal{T}]$ and let
$x^i,a_1^i, a_2^i,a_3^i$ be vertices of $C^i$, such that $d_G(x^i)=3$,
$d_G(a_1^i) = d_G(a_3^i) = 2$ and $d_G(a_2^i)=1$. Denote by $v^i$ and $u^i$ the 
neighbors of $a_1^i$ and $a_3^i$ in $G$, respectively.
Also, let $T^1, \ldots, T^s$ be the triangles in $G$ which contain a vertex of
degree four and let $r^i,p^i,q^i$ be the vertices of $T^i$, such that $d_G(r^i)=4$ and 
$d_G(p^i)=d_G(q^i)=2$. Let $\mathcal{S}=\{ p^1, q^1, \ldots, p^s,q^s\}$ the set of vertices of degree two
in triangles $T^1, \ldots, T^s$. Let $G'$
be the subgraph of $G$ induced by
$V' = \mathcal{T} \setminus (\mathcal{S} \cup B_{\gamma})$.

Define a family $\mathcal{K}_{(G, \gamma)}$ of subsets of vertices of $G$ in the following
way:
\begin{enumerate}
\vspace{-0.3cm}	\item introduce into $\mathcal{K}_{(G, \gamma)}$ every maximal clique of $G'$ of size
	strictly greater than one;
	
\vspace{-0.3cm}	\item for every connected component $C^i$ of $G[V\setminus \mathcal{T}]$ introduce 
	$\{ v^i, u^i, a_2^i \}$ into $\mathcal{K}_{(G, \gamma)}$.
\end{enumerate}

\noindent
Using the definition and the above properties of irreducible pairs $(G,\gamma)$  it is easy to check that 
$\mathcal{K}_{(G, \gamma)}$ satisfies the following properties:
\begin{enumerate}
\vspace{-0.3cm}	\item[(1)] every set in $\mathcal{K}_{(G, \gamma)}$ has two or three vertices;
\vspace{-0.3cm}	\item[(2)] every vertex of $G$ belongs to at most two sets of $\mathcal{K}_{(G, \gamma)}$.
\end{enumerate} 

Let $M = V' \cup \{ a_2^1, \ldots, a_2^k\}$ .
\begin{lemma}
\label{lem:dim_sets_reduction}
Let $(G, \gamma)$ be an irreducible pair.
Then $\gamma$ is completable if and only if there exists a set $H \subseteq M$
such that $|H \cap K| = 1$ for every $K\in \mathcal{K}_{(G, \gamma)}$.
\end{lemma}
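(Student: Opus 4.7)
The plan is to prove the equivalence by constructing the correspondence between $\gamma$-completions and hitting sets explicitly in both directions.

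\textbf{Forward direction.} Given any $\gamma$-completion $\bar\gamma$, I set
\[
H \;=\; (V' \cap W_{\bar\gamma}) \;\cup\; \{\,a_2^i : a_2^i \in B_{\bar\gamma}\,\},
\]
so that $H$ collects the white vertices of $V'$ together with the degree-one claw-leaves $a_2^i$ that happen to be black. I would then verify $|H\cap K|=1$ for every $K \in \mathcal{K}_{(G,\gamma)}$ by inspecting the three types of sets. If $K$ is a triangle of $G'$, any feasible coloring leaves exactly one vertex white. If $K=\{a,b\}$ is a maximal edge of $G'$, I show $a$ and $b$ receive opposite colors: when $\{a,b,c\}$ is a triangle of $G$ with $c\in B_\gamma$, the degree-two vertex $c$ must pair with exactly one of $a,b$ in the matching and force the other to be white; when $ab$ is a cross-edge between two distinct triangles of $G$, any black endpoint has its matching partner inside its own triangle (otherwise it would acquire two black neighbors), so $ab$ is not a matching edge and exactly one endpoint is white. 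Finally for $K=\{v^i,u^i,a_2^i\}$, since $x^i$ is forced black by Lemma~\ref{lem:simpleforcing}(a), I split on the matching partner of $x^i$ within $\{a_1^i,a_2^i,a_3^i\}$: the three possibilities yield $H\cap K = \{v^i\}$, $\{a_2^i\}$ or $\{u^i\}$ respectively, the rule for $a_2^i$ being the dual of the rule for $v^i,u^i$.

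\textbf{Backward direction.} Given such an $H$, I build $\bar\gamma$ by the inverse recipe: every vertex of $B_\gamma$ stays black; $v\in V'$ is colored white iff $v\in H$; $a_2^i$ is colored black iff $a_2^i\in H$; each $x^i$ is colored black; the two other claw-leaves $a_1^i$ and $a_3^i$ are then colored so that $x^i$ has exactly one black neighbor in its claw, dictated by whichever of $\{v^i,u^i,a_2^i\}$ lies in $H$; finally, for every degree-four triangle $T^i=\{r^i,p^i,q^i\}$, I color $p^i,q^i$ both black when $r^i\in H$, and otherwise pick (say) $p^i$ as the matching partner of $r^i$ and $q^i$ white. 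The task is then to verify that $\bar\gamma$ is a feasible completion, i.e.\ no two adjacent vertices are white and every black vertex has exactly one black neighbor.

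\textbf{Main obstacle.} The real work lies in the backward direction: turning a purely combinatorial hitting condition on $H$ into a globally feasible coloring. The tools I plan to exploit are that every vertex of $G$ lies in at most two sets of $\mathcal{K}_{(G,\gamma)}$, so the assignment above is well defined and the two interpretations of a vertex such as $v^i$ (as a triangle-vertex in $V'$ and as an element of a claw-triple) coincide; that every black vertex of $V'\setminus H$ finds its matching partner inside the unique $\mathcal{K}$-clique that furnishes it, namely the non-$H$ partner in a $2$-clique, a non-$H$ triangle partner in a $3$-clique, or the forced $p^i$ inside a degree-four triangle; and that the only edges between a claw and $\mathcal{T}$ are the cross-edges $a_1^iv^i$ and $a_3^iu^i$, whose joint coloring is precisely coordinated by the claw-triple $\{v^i,u^i,a_2^i\}$. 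Combined with the structural summary at the start of Section~\ref{sec:solution} and with the fact that $B_\gamma$ is an independent set of pairwise distance at least three, the feasibility of $\bar\gamma$ reduces to the clique conditions $|H\cap K|=1$, closing the proof.
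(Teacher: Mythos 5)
Your proposal is correct and follows essentially the same route as the paper: the same set $H=(V'\cap W_{\bar\gamma})\cup(\{a_2^1,\ldots,a_2^k\}\cap B_{\bar\gamma})$, the same case analysis over $2$-cliques, $3$-cliques and claw-triples in the forward direction, and the same recoloring recipe and feasibility check (independence of the white set, $1$-regularity of the black set) in the converse. The only point to tighten is the rule for the degree-four triangles $T^i$: when one of $p^i,q^i$ already lies in $B_\gamma$ you cannot ``pick (say) $p^i$'' freely as the partner of $r^i$ but must respect that pre-assigned black vertex, exactly as the paper's step (4) notes.
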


\vspace{-0.3cm}\begin{proof}
	Let ${\bar \gamma}$ be a $\gamma$-completion of $G$ and let
	$$
	H = (V' \cap W_{\bar \gamma}) \cup ( \{ a_2^1, \ldots, a_2^k\}\cap B_{\bar \gamma}). 
	$$
	
	Clearly, $H$ is a subset of $M$. Therefore we only need to show that every 
	$K\in \mathcal{K}_{(G, \gamma)}$ contains exactly one element in $H$.
	\begin{enumerate}
		\item \textit{Let $K = \{x,y\}$}. Since $K$ has two elements, the definition
		of $\mathcal{K}_{(G, \gamma)}$ implies that $K$ is a 
		maximal clique in $G'$.
		
\vspace{-0.1cm}		If $x,y$ belong to the same triangle $T$ in $G$, then the third vertex $z$ of $T$
		does not belong to $G'$, and therefore $z \in B_{\gamma}$. It means that ${\bar \gamma}$
		assigns color white to exactly one of the vertices $x, y$ and therefore 
		$|H \cap K| = 1$.
	
\vspace{-0.1cm}		If $x,y$ belong to different triangles in $G$, then by Lemma 2.1 (ii)
		${\bar \gamma}$ assigns different colors to $x$ and $y$ and therefore exactly one of 
		them belongs to $H$.
		
		\item \textit{Let $K = \{x,y,z\}$}.
		
\vspace{-0.1cm}		If $K$ is a maximal clique in $G'$, then $K$ induces a triangle in $G$. 
		Since ${\bar \gamma}$ is
		a feasible complete coloring, exactly one of the vertices $x,y,z$ is in
		$W_{\bar \gamma}$ and therefore in $H$.
		
\vspace{-0.1cm}		If $K$ is not a clique in $G'$, then $K = \{ v^i, u^i, a_2^i \}$ for some 
		$i \in \{1, \ldots, k \}$. Note that $v^i, u^i$ cannot both be white, because
		otherwise $a_1^i$ and $a_3^i$ would be two black neighbors of the black vertex $x^i$. Thus, either both are black implying that $a_2^i$ is black or one is black and the other one is white implying that $a_2^i$ is white. Hence $|H \cap K| = 1$.
	\end{enumerate}
	
	Let now $H$ be a set satisfying the conditions of the lemma and let $\delta$ be
	a coloring of $G$ defined in the following way:
	\begin{enumerate}
\vspace{-0.3cm}		\item[(1)] $\delta$ assigns color black to every $x \in B_{\gamma}$;
\vspace{-0.3cm}		\item[(2)] $\delta$ assigns color white to every $x \in V' \cap H$;
\vspace{-0.3cm}		\item[(3)] $\delta$ assigns color black to every $x \in V' \setminus H$;
\vspace{-0.3cm}		\item[(4)] for every $i = 1, \ldots, s$, define $\delta(p^i)$ and $\delta(q^i)$
		(if not yet defined) in such a way that $T^i$ contains exactly one white vertex. 
		(Note that one of the vertices $p^i$ and $q^i$ may already have a color assigned 
		by $\delta$ if this vertex belongs to $B_{\gamma}$);
\vspace{-0.3cm}		\item[(5)] for every $i = 1, \ldots, k$, $\delta$ assigns color black to $a_2^i$ 
		and color white to $a_1^i$ and $a_3^i$, if $a_2^i \in H$; $\delta$ assigns color 
		white to $a_2^i$, $a_1^i$ and color black to $a_3^i$, if $u^i \in H$; 
		$\delta$ assigns color white to $a_2^i$, $a_3^i$
		and color black to $a_1^i$, if $v^i \in H$. Note that $x^i$ is already assigned color
		black, since it belongs to $B_{\gamma}$ (by Lemma 2.2 (p)).
	\end{enumerate}
	
	Clearly, $\delta$ is a complete coloring extending $\gamma$. We therefore 
	only need to show that $\delta$ is feasible. 
	To this end we prove that $W_{\delta}$
	is an independent set and $B_{\delta}$ induces a 1-regular subgraph (i.e. a graph in which all vertices have degree exactly one) in $G$.
	
	\begin{enumerate}
		\item \textit{$W_{\delta}$ is an independent set in $G$}.
		The definition of $\delta$ implies that every white vertex $x \in V \setminus V'$
		has no white neighbors.
		Let $x,y$ be two white vertices in $V'$. By items (2) and (3) of the definition of
		$\delta$, vertices $x$ and $y$ belong to $H$, and therefore 
		they are not adjacent, since otherwise $x,y$ would belong 
		to a maximal clique of
		size at least two in $G'$, which contradicts the assumption that $H$ intersects
		every  nontrivial maximal clique of $G'$ in exactly one vertex.
		
		\item \textit{$B_{\delta}$ induces a 1-regular graph in $G$}. By (5) every
		black vertex in $V\setminus \mathcal{T}$ has exactly one black neighbor and this neighbor
		belongs to $V\setminus \mathcal{T}$ as well. 
		Therefore it is sufficient to show that every black vertex 
		in $\mathcal{T}$ has exactly one black neighbor in $\mathcal{T}$. 
		
		Firstly, we show that a black vertex of a triangle $T$ has exactly one black
		neighbor in $T$ or equivalently that every triangle $T$ has exactly two 
		black vertices. 
		If $T$ is one of the triangles $T^1, \ldots, T^s$ then this is
		provided by (4). If $T$ does not contain a vertex of degree four, but has a vertex
		$x \in B_{\gamma}$, then $x \notin V'$ and the two other vertices of $T$ form a
		maximal clique in $G'$ and hence exactly one of them is black. Otherwise, the
		vertices of $T$ form a maximal clique in $G'$ and exactly two of them are black.
		
		Now let $x$ be a black vertex of a triangle $T$ and suppose $x$ has a neighbor 
		$y \in \mathcal{T}$ outside $T$. Since every vertex
		of $G$ belongs to at most one triangle, $\{ x, y \}$ forms a maximal clique 
		in $G'$, and therefore $y$ must be white.
	\end{enumerate}
\vspace{-0.8cm}\end{proof}

Lemma \ref{lem:dim_sets_reduction} reduces the Dominating Induced Matching Problem 
in $S_{2,2,2}$-free graphs to the following.
\vspace{0.2cm}

\noindent
\textbf{Problem A}. We are given a finite set $S$ and a family 
$\mathcal{F} = \{ A_i | i \in I \}$ of subsets $A_i \subseteq S$ such that each
element of $S$ appears in at most two members of $\mathcal{F}$. We have
to find (if it exists) a subset $C \subseteq S$ such that $|C \cap A_i|=1$ for each $i \in I$.

\vspace{0.3cm}We now prove how to solve this problem in polynomial time.
Let $G=(V,E)$ be a graph and $M$ be a matching in $G$. We say that $M$ \textit{saturates}
a set $U \subseteq V$ if every vertex in $U$ is incident with an edge in $M$. The vertices
in $U$ are called \textit{saturated} (by $M$) and vertices not incident with any edge of $M$
are \textit{unsaturated}. A path in $G$ which contains, alternately, edges in $M$ and 
$\overline{M}$, where $\overline{M} = E \setminus M$, is called an \textit{alternating path} with
respect to $M$. An alternating path that starts and ends in unsaturated vertices is called
an \textit{augmenting path} with respect to $M$.

Without loss of generality, we may assume that in Problem A for any $i,j \in I$ the sets 
$A_i$ and $A_j$ have at most one common element. Indeed, all elements of 
$A_i \cap A_j$ belong to exactly the same members in $\mathcal{F}$ and at most one of 
the elements will appear in $C$. Therefore we can remove all but one element of 
each intersection $A_i \cap A_j$.

With a given instance of Problem A we may associate a graph $G=(V,E)$, where 
$V = \{ a_i | i \in I \}$ and two different vertices $a_i$ and $a_j$ are adjacent if and only if 
$A_i \cap A_j \neq \emptyset$. Now let $U \subseteq V$ be the set of vertices $a_i$ of $G$
such that each element in $A_i$ belongs to exactly two members of $\mathcal{F}$.
Consider the following problem.
\vspace{0.3cm}

\noindent
\textbf{Problem B}. Given a graph $G = (V,E)$ and a subset $U \subseteq V$ find 
(if it exists) a matching which saturates all vertices in $U$.

\begin{lemma}
\label{lem:A_equiv_B}
	Problem A has a solution if and only if Problem B has a solution.
\end{lemma}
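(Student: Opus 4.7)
The plan is to establish a natural correspondence between elements of $S$ and edges of $G$. Since $|A_i \cap A_j| \le 1$ for all $i \neq j$ (by the reduction noted before the lemma), every element $s \in S$ that belongs to two members $A_i, A_j$ of $\mathcal{F}$ corresponds to the unique edge $a_i a_j$ of $G$, and this correspondence is a bijection between such elements and the edges of $G$. An element that belongs to only one member $A_i$ will be called a \emph{pendant at $a_i$}; by definition, $a_i \in U$ if and only if $A_i$ contains no pendants.

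For the forward implication, suppose $C \subseteq S$ satisfies $|C \cap A_i| = 1$ for every $i \in I$. I would let $M$ be the set of edges of $G$ corresponding to those elements of $C$ that lie in two members of $\mathcal{F}$. To see that $M$ is a matching, observe that two edges of $M$ sharing a vertex $a_j$ would correspond to two distinct elements of $C \cap A_j$, contradicting $|C \cap A_j| = 1$. To see that $M$ saturates $U$, note that if $a_i \in U$ then every element of $A_i$, and in particular the unique element of $C \cap A_i$, lies in two members of $\mathcal{F}$, hence yields an edge of $M$ incident to $a_i$.

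For the backward implication, suppose $M$ is a matching of $G$ saturating $U$. I would build $C$ by taking, for each edge $a_i a_j \in M$, the unique element of $A_i \cap A_j$, and in addition, for every vertex $a_j$ not saturated by $M$ (necessarily $a_j \notin U$, so $A_j$ admits a pendant), I would adjoin an arbitrary pendant at $a_j$. To verify $|C \cap A_i| = 1$ I would split into two cases. If $a_i$ is saturated by an edge $a_i a_j \in M$, then the common element of $A_i \cap A_j$ lies in $C \cap A_i$; any further element of $C \cap A_i$ would have to come either from another matching edge at $a_i$ (impossible since $M$ is a matching) or from a pendant added on behalf of some other vertex (impossible since a pendant lies in only one $A_j$). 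If $a_i$ is unsaturated, then exactly one pendant at $a_i$ was added, and no ``two-member'' element of $C$ lies in $A_i$ because that would require an edge of $M$ incident to $a_i$.

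The argument is essentially bookkeeping once the bijection between two-member elements and edges is set up; no serious obstacle arises. The only point that deserves a careful sentence is that in the backward direction the pendants added for unsaturated vertices do not contaminate $C \cap A_i$ for other $i$, and this is immediate from the fact that a pendant belongs to exactly one member of $\mathcal{F}$.
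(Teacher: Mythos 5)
Your proof is correct and follows essentially the same route as the paper: the forward direction builds the matching from the two-member elements of $C$, and the backward direction takes the common element of each matching edge and then adds a ``pendant'' element for each member not yet hit, which exists precisely because unsaturated vertices lie outside $U$. Your explicit check that pendants cannot contaminate other members is a point the paper leaves implicit, but the argument is the same.
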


\vspace{-0.2cm}\begin{proof}
	Assume we have a solution for Problem A, i.e. there exists a subset $C \subseteq S$
	satisfying $|C \cap A_i| = 1$ for all $i \in I$. Let $M$ be the set of edges $a_ia_j$ 
	in $G$, such that the common element of $A_i$ and $A_j$ belongs to $C$. Clearly,
	no two edges in $M$ can have a common endpoint $a_i$, otherwise $C$ would
	contain at least two elements of $A_i$, which is a contradiction. In other words, the edges
	of $M$ form a matching in $G$. Let now $a_i$ be a  vertex of $G$ such that all
	elements of $A_i$ belong to exactly two members of $\mathcal{F}$. In particular,
	the element $e \in C \cap A_i$ corresponds to some edge
	$a_ia_j \in M$. Hence $M$ is a matching in $G$ which saturates all vertices in $U$.
	
	Conversely, let $M$ be a matching in $G$ which saturates all vertices in $U$. Let
	$C = \emptyset$. For each edge $a_ia_j \in M$ we introduce the element 
	$e \in A_i \cap A_j$ into $C$. This gives us a subset $C \subseteq S$ with 
	$|C \cap A_i| = 1$ for each $A_i$ such that all elements of $A_i$ are in exactly
	two members of $\mathcal{F}$. Consider the members of $\mathcal{F}$ which have
	no common element with $C$.
	Each such subset $A_i$ contains some element $e_i$ such that 
	$e_i \notin A_j$ for every $j \in I, j \neq i$. 
	By introducing one of these elements into $C$ for every $A_i$ we obtain a subset with $|C \cap A_i| = 1$ for all $i \in I$.
\end{proof}

\vspace{0.3cm}Further we show that Problem B, and therefore Problem A, can be solved in polynomial
time. We start with two auxiliary statements.

\begin{lemma} \label{lem:mmsU}
	Let $G=(V,E)$ be a graph. If there exists a matching saturating $U \subseteq V$, then 
	there exists a maximum matching saturating $U$.
\end{lemma}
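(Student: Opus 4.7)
The plan is to start with a matching $M_0$ saturating $U$ and a maximum matching $M^*$, and to analyze the symmetric difference $M_0\triangle M^*$. Recall that $(V,M_0\triangle M^*)$ has maximum degree 2, so its connected components are paths and even cycles whose edges alternate between $M_0$ and $M^*$. I would actually choose $M^*$ more carefully: among all maximum matchings in $G$, pick one that saturates the largest number of vertices of $U$, and aim to show this $M^*$ already saturates all of $U$.

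Suppose for contradiction that some $u\in U$ is not saturated by $M^*$. Since $u$ is saturated by $M_0$ but not by $M^*$, the $M_0$-edge at $u$ lies in $M_0\setminus M^*$, and a cycle component of the symmetric difference saturates all its vertices in both matchings. Hence $u$ must be an endpoint of some path component $P$ of $M_0\triangle M^*$. At each endpoint of a path component, the unique incident edge in the symmetric difference determines which of the two matchings saturates that endpoint (the common edges of $M_0\cap M^*$ are not in the symmetric difference, so no ambiguity arises). So the endpoint $u$ is saturated by $M_0$ and not by $M^*$. If the other endpoint $v$ of $P$ were also saturated by $M_0$ but not by $M^*$, then $P$ would have strictly more $M_0$-edges than $M^*$-edges, and $M^*\triangle E(P)$ would be a matching larger than $M^*$, contradicting the maximality of $M^*$.

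Hence the other endpoint $v$ is saturated by $M^*$ and not by $M_0$; in particular $v\notin U$, because $M_0$ saturates $U$. Moreover $P$ has an equal number of $M_0$- and $M^*$-edges, so $M':=M^*\triangle E(P)$ is a matching with $|M'|=|M^*|$, i.e.\ $M'$ is also maximum. A direct check of the effect of the swap shows that $V(M')=(V(M^*)\setminus\{v\})\cup\{u\}$: every internal vertex of $P$ keeps its saturation, $u$ becomes saturated, and $v$ becomes unsaturated. Since $v\notin U$ and $u\in U$, we conclude $|U\cap V(M')|=|U\cap V(M^*)|+1$, contradicting the choice of $M^*$ as a maximum matching saturating the largest number of vertices of $U$. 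Therefore $M^*$ saturates all of $U$.

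The only subtlety, and the main thing that could go wrong, is ensuring that the swap along $P$ does not unsaturate some other vertex of $U$; the key point justifying this is that the problematic endpoint $v$ cannot be in $U$, which in turn relies on $M_0$ saturating $U$. The rest is a standard symmetric-difference/augmenting-path argument, so I do not expect any further obstacle.
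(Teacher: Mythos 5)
Your proof is correct, but it runs in the opposite direction from the paper's. The paper starts from the given matching $M_0$ saturating $U$ and simply augments it: an augmenting path swap only \emph{adds} saturated vertices (the two endpoints of the path) and never unsaturates anything, so repeating until no augmenting path remains yields a maximum matching that still saturates $U$. That is a three-line argument. You instead take an extremal maximum matching $M^*$ (maximizing $|U\cap V(M^*)|$), look at the symmetric difference with $M_0$, and show that an unsaturated $u\in U$ forces an even alternating path to a vertex $v\notin U$ along which you can exchange; this is also correct, and the key step --- that the far endpoint $v$ is $M^*$-saturated but $M_0$-unsaturated, hence outside $U$ --- is exactly the content of the paper's subsequent Lemma 6.4 (\ref{lem:sat_v}) and the exchange procedure built on it. So your argument effectively merges Lemmas 6.3 and 6.4 into one extremal proof; the paper's version of 6.3 is shorter and more directly constructive, while yours packages the same symmetric-difference machinery that the paper needs anyway for the algorithmic part. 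No gaps.
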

\vspace{-0.2cm}\begin{proof}
	Let $M$ be a matching saturating $U$. If $M$ is not maximum, one can find an
	augmenting path $P$ with respect to $M$.
	Interchanging in $P$ the edges of $M$ and $\overline{M}$
	we obtain a matching $M'$ with $|M'| > |M|$. 
	$M'$ saturates the end vertices of $P$ in addition to
	all vertices saturated by $M$. In particular, $M'$ still saturates $U$. Repeating this
	procedure, we will finally obtain a maximum matching saturating $U$.
\end{proof}

\begin{lemma}\label{lem:sat_v}
	Let $G=(V,E)$ be a graph and let $M$ be a maximum matching saturating 
	$U \subseteq V$. Let $M'$ be a maximum matching in $G$ leaving a vertex $v \in U$ unsaturated.
	Then there is an alternating path $P$ of even length with respect to $M'$ starting
	at $v$ (with an edge in $\overline{M'})$ and ending (with an edge in $M'$) at some
	vertex $w \notin U$.
\end{lemma}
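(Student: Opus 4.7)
The plan is to analyze the symmetric difference $M\oplus M'$, whose connected components are paths and even cycles with edges alternating between $M\setminus M'$ and $M'\setminus M$ (since each vertex has degree at most one in each matching). Because $v\in U$ is saturated by $M$, it has a unique incident $M$-edge $e_v$; because $v$ is unsaturated by $M'$, the edge $e_v$ lies in $M\setminus M'$ and $v$ has no incident $M'$-edge. Thus $v$ has degree exactly one in $M\oplus M'$, so $v$ is an endpoint of some alternating path $P=v,u_1,u_2,\ldots,u_\ell$ in $M\oplus M'$ that starts with $e_v\in M\setminus M'\subseteq \overline{M'}$.

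Next I would argue that $P$ has even length. Note that $|M|=|M'|$ since both are maximum. If $\ell$ were odd, then the last edge of $P$ would again lie in $M\setminus M'$, and a symmetric argument at the other endpoint would show that $u_\ell$ has no incident $M'$-edge at all: any such edge would either already belong to $M$, creating a second $M$-edge at $u_\ell$ and violating the matching property, or would lie in $M'\setminus M$, giving $u_\ell$ degree at least two in $M\oplus M'$ and contradicting that it is a path endpoint. Then $v$ and $u_\ell$ would both be $M'$-unsaturated, so $P$ would be an augmenting path for $M'$, contradicting its maximality. Hence $\ell$ is even and $\ell\geq 2$.

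Writing $w=u_\ell$, the path $P$ starts at $v$ with an edge in $M\setminus M'\subseteq\overline{M'}$ and ends with an edge in $M'\setminus M\subseteq M'$, so it is exactly an alternating path with respect to $M'$ of the required shape. To finish, I would verify $w\notin U$; since $M$ saturates $U$, it is enough to show that $w$ is $M$-unsaturated. This follows from the same symmetric reasoning: any $M$-edge at $w$ would either lie in $M\cap M'$, forcing a second $M'$-edge at $w$ on top of the last edge of $P$, or lie in $M\setminus M'\subseteq M\oplus M'$, again giving $w$ degree at least two in $M\oplus M'$.

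The delicate point is the parity argument: one must combine the hypothesis that $v$ is $M'$-unsaturated with the maximality of $M'$ to rule out an odd-length component of $M\oplus M'$ through $v$. The remainder is routine bookkeeping, translating between the symmetric-difference picture and the $M'$-alternating-path language required by the statement.
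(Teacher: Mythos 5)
Your proof is correct and follows essentially the same route as the paper: both consider the symmetric difference $M\oplus M'$, observe that its components are alternating paths and cycles, use maximality of both matchings to force the component through $v$ to be an even path, and conclude that its other endpoint is $M$-unsaturated and hence outside $U$. You merely spell out the parity and endpoint arguments in more detail than the paper does.
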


\vspace{-0.2cm}\begin{proof}
	Let $H$ be a spanning subgraph of $G$ with edge set 
	$(M \setminus M') \cup (M' \setminus M)$. Since both $M$ and $M'$ are maximum,
	$H$ is the union of vertex disjoint alternating cycles and 
	alternating paths of even length. Since $v$ is saturated by $M$, but not by $M'$,
	there is at least one alternating path $P$ of even length, which starts at $v$ and ends
	at some vertex $w$ with an edge in $M'$. Since $w$ is not saturated by $M$, it is not
	in $U$.
\end{proof}

Let $G=(V,E)$ be a graph and $M'$ be a maximum matching in $G$ leaving a vertex 
$v \in U$ unsaturated. If there is an alternating path $P$ of even length with respect to $M'$ starting at $v$ and ending at some vertex $w \notin U$, then by interchanging the edges
of $P \cap M'$ and $P \cap \overline{M'}$ we obtain a matching saturating the same 
set of vertices as $M'$ except that $w \not\in U$ has been replaced by $v \in U$. Repeating this procedure we will
finally either get a matching $M^*$ saturating all vertices in $U$ or show that such a
matching cannot exist. In the latter case no alternating path of even length can be found
between an unsaturated vertex of $U$ and a saturated vertex of $\overline{U}$. This means
that all such alternating paths will have both endpoints in $U$.

\begin{lemma}\label{lem:polyMatching}
	Given a graph $G=(V,E)$ and a subset $U \subseteq V$ of vertices, one can determine
	in polynomial time whether $G$ has a matching $M$ which saturates all vertices in
	$U$ or not.
\end{lemma}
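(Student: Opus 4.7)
The plan is to turn Lemmas \ref{lem:mmsU} and \ref{lem:sat_v} into a polynomial-time iterative algorithm. First, I would compute a maximum matching $M$ in $G$ using Edmonds' blossom algorithm, which runs in polynomial time in $|V|+|E|$. If $M$ saturates every vertex of $U$, output YES (with certificate $M$) and stop.

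Otherwise, pick any vertex $v \in U$ unsaturated by $M$ and search for an even-length alternating path $P$ with respect to $M$ that starts at $v$ with an edge in $\overline{M}$ and ends, with an edge in $M$, at some vertex $w \in V\setminus U$ (such a $w$ is automatically saturated by $M$ since the last edge is a matching edge). The existence of such a $P$ can be decided in polynomial time by the standard blossom-shrinking alternating-tree procedure used inside Edmonds' algorithm; this is the main technical ingredient, and it is needed precisely because $G$ is not assumed to be bipartite.

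If $P$ is found, replace $M$ by $M\triangle E(P)$. Because $P$ alternates, begins at an $M$-unsaturated vertex with a non-matching edge and ends at an $M$-saturated vertex with a matching edge, $M\triangle E(P)$ is again a matching of the same cardinality as $M$, hence still maximum; it now saturates $v$ and leaves $w$ unsaturated. Since $v\in U$ and $w\notin U$, the number of vertices of $U$ that remain unsaturated strictly decreases, so after at most $|U|$ iterations the loop terminates. If every iteration succeeds in finding $P$, output YES. If at some iteration no such $P$ exists, output NO.

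Correctness of the YES answer is immediate from the construction. For the NO answer I appeal to the two preceding lemmas: if any matching of $G$ saturated $U$, Lemma \ref{lem:mmsU} would provide a maximum matching saturating $U$, and then Lemma \ref{lem:sat_v}, applied to that matching and the current maximum matching $M$ together with the unsaturated vertex $v\in U$, would guarantee the existence of the required alternating path $P$, contradicting the failure of the search. Since each iteration runs in polynomial time and there are at most $|U|$ iterations, the whole procedure is polynomial; the only nontrivial part is the blossom-based alternating-path search, which is classical.
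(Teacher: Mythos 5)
Your proof is correct and follows essentially the same route as the paper: compute a maximum matching, then repeatedly trade an unsaturated vertex of $U$ for a saturated vertex outside $U$ along an even alternating path, with Lemmas \ref{lem:mmsU} and \ref{lem:sat_v} certifying the NO answer. The only (cosmetic) difference is in the path search: you invoke the blossom alternating-tree procedure directly, whereas the paper adds an auxiliary vertex $u_0$ joined to all saturated vertices outside $U$ and reduces the search to finding an ordinary augmenting path from $v$ to $u_0$.
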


\vspace{-0.2cm}\begin{proof}
	Taking into account the above discussion, it is sufficient to show that an alternating 
	path of even length (with respect to some maximum matching $M$) between 
	an unsaturated vertex $v \in U$ and a saturated vertex $w \notin U$ can be found 
	(if it exists) in polynomial time.
	
	Let $W$ be the set of vertices saturated by $M$ which are not in $U$. We introduce
	a new vertex $u_0$ and edges $(u_0,w)$ for each vertex $w \in W$. Let $G'$ be the
	resulting graph.
	It is easy to see that $G$ has an alternating path $P$ of even length from 
	an unsaturated vertex $v \in U$ to a saturated vertex $w \notin U$ if and only if there 
	is an augmenting path $P'$ starting at $v$ in $G'$. Such a path will necessarily end 
	at vertex $u_0$, otherwise $M$ would not be maximum in $G$. Moreover, $P$ can
	be easily obtained from $P'$ by removing vertex $u_0$.
	
	To summarize, the procedure of determining whether $G$ has a maximum matching
	saturating a subset $U$ of vertices is the following.
	\begin{enumerate}
\vspace{-0.3cm}		\item[(a)] Construct a maximum matching $M$ in $G$ and let $U^*$ be the set of
		vertices in $U$ which are not saturated by the current matching $M$. 
		If $U^* = \emptyset$, then $M$ saturates $U$ and we are done.
		
\vspace{-0.3cm}		\item[(b)] As long as $U^* \neq \emptyset$, take a vertex $v \in U^*$ and try
		to find an augmenting path in $G'$ starting at $v$. If such a path $P$ can be found, 
		interchange the edges of $P \cap M$ and $P \cap \overline{M}$. Remove
		$u_0$ (and its adjacent edges) and let $M$ be the new matching and remove $v$
		from $U^*$.
		
		If no augmenting path in $G'$ starting at $v$ can be found, this means that there is no 
		alternating path of even length from $v$ to a vertex $w \notin U$ in $G$, so no
		matching $M$ saturating $U$ exists and we stop.
	\end{enumerate}
	
	By this procedure we will either show that no matching saturating $U$ exists or we 
	will obtain a matching $M$ saturating $U$. Since the procedure of finding an augmenting
	path starting at some vertex $v$ is precisely the one used in a classical maximum matching 
	algorithm and since we apply it at most $|V(G)|$ times, we have overall a polynomial
	algorithm.
\end{proof}

\begin{theorem}
The dominating induced matching problem can be solved in polynomial time in $S_{2,2,2}$-free graphs. 
Moreover, if an $S_{2,2,2}$-free graph admits a feasible complete coloring, then such a coloring can be determined in polynomial time. 
\end{theorem}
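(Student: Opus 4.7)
The plan is to assemble the machinery of Sections~\ref{sec:DIM}--\ref{sec:Irr} into a single polynomial-time algorithm and then invoke the matching-saturation procedure of Lemma~\ref{lem:polyMatching} to finish. Given an $S_{2,2,2}$-free input graph $H$, I would first run the forcing rules of Lemmas~\ref{lem:samedifferent} and~\ref{lem:simpleforcing} together with the three elementary propagation rules (i)--(iii), followed by cleaning. This loop either declares $H$ infeasible (when two rules assign opposite colors to the same vertex) or produces a clean pair $(G_0,\gamma_0)$. Each rule fires on a fixed-size induced subgraph and so can be checked in polynomial time, and since every firing either colors a previously uncolored vertex or removes one in the cleaning step, only $O(|V(H)|)$ iterations are needed.

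Second, I would exhaustively apply the reductions $\rho_1,\dots,\rho_8$ of Section~\ref{sec:REDUCTION} and the transformations $\tau_1,\dots,\tau_9$ of Section~\ref{sec:TRANSFORMATION}, re-running the forcing/cleaning loop after each application, until an irreducible pair $(G^*,\gamma^*)$ is reached. By Lemma~\ref{lem:8reductions} and Lemmas~\ref{lem:T1}--\ref{lem:T9}, each operation is valid (it preserves completability and the $S_{2,2,2}$-free property), and inspection of the grey sets in Figures~\ref{figH1a8} and~\ref{fig:Transformations} shows that every $\rho_i$ strictly decreases the vertex count while every $\tau_i$ either strictly decreases it or strictly decreases it lexicographically with respect to a secondary invariant (e.g., the number of vertices of the maximum degree attained, or the number of induced copies of a fixed small pattern). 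Consequently the outer loop terminates after polynomially many applications, each of which can be detected by enumerating over bounded-size induced subgraphs.

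Third, once an irreducible pair $(G^*,\gamma^*)$ is in hand, the summary at the beginning of Section~\ref{sec:solution} (Lemmas~\ref{lem:outsideTriangle}--\ref{lem:Cp+1claw}) gives precisely the structure required to construct the family $\mathcal{K}_{(G^*,\gamma^*)}$ and the ground set $M$ of Lemma~\ref{lem:dim_sets_reduction}. Both have polynomial size, so Lemma~\ref{lem:dim_sets_reduction} reduces completability of $\gamma^*$ to an instance of Problem~A; Lemma~\ref{lem:A_equiv_B} then converts this into an instance of Problem~B on an auxiliary graph, which Lemma~\ref{lem:polyMatching} solves in polynomial time. If no saturating matching exists, neither does a feasible complete coloring of $H$. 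Otherwise the proofs of validity in Sections~\ref{sec:REDUCTION} and~\ref{sec:TRANSFORMATION} are constructive: they describe explicitly, for each $\rho_i$ and $\tau_i$, how to convert a $\delta$-completion into a $\gamma$-completion. Lifting the matching solution through Lemmas~\ref{lem:A_equiv_B} and~\ref{lem:dim_sets_reduction}, then back through the sequence of applied reductions, transformations, and forcing steps, yields a feasible complete coloring of the original graph $H$ in polynomial time.

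The main obstacle is the polynomial bound on the reduction phase: while validity of the individual operations is already established, one must exhibit a single monotone integer potential that strictly decreases with every transformation and reduction, and also verify that interleaving the forcing/cleaning loop with the reduction/transformation loop cannot create an infinite regress (forcing a vertex can never ``undo'' a previous reduction because it only assigns colors and performs cleaning, both of which only remove vertices). Apart from this bookkeeping, the remaining arguments are either direct appeals to the lemmas already proven or straightforward enumeration over bounded-size patterns.
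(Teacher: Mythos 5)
Your architecture is exactly the paper's: forcing rules plus cleaning to reach a clean pair, exhaustive application of $\rho_1,\dots,\rho_8$ and $\tau_1,\dots,\tau_9$ to reach an irreducible pair, then Lemma~\ref{lem:dim_sets_reduction}, Lemma~\ref{lem:A_equiv_B} and Lemma~\ref{lem:polyMatching} to decide completability via a saturating matching, with the constructive directions of the validity lemmas used to lift a solution back to the input graph. The one genuine gap is the piece you yourself flag as the main obstacle: you never exhibit the potential that bounds the number of reduction/transformation applications, and the candidates you float do not work as stated. In particular, ``every $\rho_i$ strictly decreases the vertex count while every $\tau_i$ either strictly decreases it or \dots'' fails for $\tau_7$, which leaves the vertex count unchanged, and the transformations $\tau_1,\dots,\tau_6$ are not controlled by vertex count in the paper's accounting at all. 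The paper's actual potential is lexicographic in three coordinates: call a vertex \emph{good} if it has degree less than four, or has degree four, lies in a unique triangle, and the other two vertices of that triangle have degree two; call it \emph{bad} otherwise. No operation increases the number of bad vertices; $\tau_1,\dots,\tau_6$ strictly decrease it; $\rho_1,\dots,\rho_8$, $\tau_8$ and $\tau_9$ strictly decrease the number of vertices; and $\tau_7$ preserves the vertex count but strictly decreases the number of edges. This triple strictly decreases at every step and is polynomially bounded, which is precisely the ``single monotone integer potential'' you ask for. Without it (or some verified substitute) the polynomial running time is not established, so this is a real hole rather than bookkeeping.

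A secondary omission: after the loop terminates, infeasibility can be detected not only by a color clash but also by the appearance of a connected component isomorphic to $C_5$ (this is the case isolated in Lemma~\ref{lem:Cp+1tree}); your step three silently assumes the structural summary of Section~\ref{sec:solution} (in particular that every component of $G[V\setminus \mathcal{T}]$ is a claw), which only holds once that failure mode has been excluded. This is a one-line fix, but the algorithm must explicitly test for it before building $\mathcal{K}_{(G^*,\gamma^*)}$.
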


\vspace{-0.3cm}\begin{proof}
Let $G=(V,E)$ be an $S_{2,2,2}$-free graph. After having applied all forcing rules, 
propagation rules, all graph reductions and transformations, and all cleanings, three scenarios are possible. 
The first one is that we get a proof that $G$ does not admit any feasible complete coloring. This occurs if both colors black 
and white are imposed on the same vertex, or if we get a graph with a connected component isomorphic to $C_5$ (see Lemma \ref{lem:Cp+1tree}). Otherwise, we obtain an irreducible pair $(H,\gamma)$ so that $H$ is $S_{2,2,2}$-free, $H$ being possibly empty. 
We know from the lemmas of Sections \ref{sec:REDUCTION} and \ref{sec:TRANSFORMATION} that $\gamma$ is completable if and only if $G$ admits a feasible
complete coloring.

If $H$ is not empty, we create a new graph $\mathcal{G}=(\mathcal{V},\mathcal{E})$ as follows. We create a vertex $a_i\in \mathcal{V}$ 
for each subset $A_i$ in $\mathcal{K}_{(H,\gamma)}$, 
and two vertices $a_i$ and $a_j$ are adjacent if and only if the corresponding subsets $A_i$ and $A_j$ have a common element. 
Let $U$ be the subset of vertices $a_i\in \mathcal{V}$ such that each element of $A_i$ belongs to exactly two members of $\mathcal{K}_{(H,\gamma)}$.
We then know from Lemmas \ref{lem:dim_sets_reduction} and \ref{lem:A_equiv_B} that $\gamma$ is completable if and only if 
$\cal{G}$ has a matching which saturates all vertices in $U$.

Observe that all forcing and propagation rules, graph reductions, graph transformations and cleanings 
can be implemented in polynomial time. They are all applied a polynomial number of times. Indeed, define a vertex as {\it good} if 
it has less than four neighbors, or if it has degree four, belongs to a unique triangle and the two other vertices of the triangle have degree two.
The other vertices are called {\it bad}. We know from Lemma \ref{lem:insideTriangle} that graph $H$ of the irreducible pair $(H,\gamma)$ 
does not contain any bad vertex. Note now that none of the graph reductions and transformations increases the number of bad vertices.
In fact, transformations $\tau_{1},\dots,\tau_{6}$ strictly decrease the number of bad vertices.
Since $\tau_{8},\tau_{9}$ as well as reductions $\rho_1,\dots,\rho_8$ strictly decrease the number of vertices, while $\tau_{7}$ keeps 
the number of vertices constant but decreases the number of edges, we conclude that the eight graph reductions and nine graph transformations 
are used a polynomial number of times. Also, we know from Lemma \ref{lem:polyMatching} that one can determine in polynomial time whether $\cal{G}$ 
has a matching which saturates all vertices in $U$. Hence the whole procedure is polynomial 

Note finally that Lemmas \ref{lem:dim_sets_reduction} and \ref{lem:A_equiv_B} show how to obtain a feasible complete coloring of $H$ from a matching in $\mathcal{G}$ that 
saturates all vertices in $U$, while the lemmas corresponding to graph 
reductions or transformations show how to obtain a feasible complete coloring of $G$ from a feasible complete coloring of $H$.
\end{proof}


\vspace{-0.3cm}\section{Conclusion}
\label{sec:conclusion}


In this paper, we proved that the {\sc dominating induced matching} problem is polynomial-time solvable 
in the class of $S_{2,2,2}$-free graphs. This result supports our Conjecture~\ref{con:1} about complexity of the problem
in finitely defined classes. Proving (or disproving) it in 
its whole generality is a challenging task. As a step toward the complete proof, 
we suggest analyzing this conjecture under the additional restriction to triangle-free graphs of vertex degree 
at most 3. This is precisely the class of $(K_3,K_{1,4})$-free graphs, where the problem is NP-complete by 
Theorem~\ref{thm:1}. One more direction of particular importance in verifying Conjecture~\ref{con:1} is the case of $P_k$-free graphs.
To date, the solution is available only for $k=7$ \cite{P7}.

\end{document}